\documentclass[aps,
               prx,
               showpacs,
               amssymb,
               superscriptaddress,
               twocolumn,
              nofootinbib, 10pt,
              longbibliography]{revtex4-2}
\pdfoutput=1
\usepackage{pifont}
\usepackage[utf8]{inputenc}
\usepackage[english]{babel}
\usepackage[T1]{fontenc}
\usepackage{amsmath,amssymb,mathtools}
\usepackage{mathpazo} 
\usepackage{amsfonts}
\usepackage{amsthm}
\usepackage[dvipsnames]{xcolor}
\usepackage[pdftex,colorlinks=true,urlcolor= blue,linkcolor=Blue,citecolor=RedViolet]{hyperref}
\usepackage{bm}
\usepackage{tikz}
\usepackage{cancel}
\usepackage{soul}
\usepackage{booktabs} 
\usepackage{siunitx}  
\usepackage{relsize}
\usepackage{physics}
\usepackage{mathrsfs} 
\usepackage{comment}
\usepackage{ragged2e}
\usepackage{graphicx}
\usepackage{float}
\usepackage{tabularx}
\usepackage{array}
\usepackage{subcaption} 
\usepackage{caption}
\usepackage{orcidlink}
\usepackage[normalem]{ulem} 
\usepackage{cleveref}
\usepackage{amsthm}

\newtheorem{theorem}{Theorem}
\newtheorem*{theorem*}{Theorem}

\newtheorem{lemma}{Lemma}

\newtheorem{prop}{Proposition}
\newtheorem{definition}{Definition}

\newcommand{\M}[1]{\mathcal{#1}}

\newcommand{\id}{\mathbb{I}}

\newcommand{\subtiny}[3]{\ensuremath{_{\hspace{#1 pt}\protect\raisebox{#2 pt}{\tiny{$ #3$}}}}}
\newcommand{\suptiny}[3]{\ensuremath{^{\hspace{#1 pt}\protect\raisebox{#2 pt}{\tiny{$ #3$}}}}}

\newcommand{\deff}{d\ensuremath{_{\hspace{-0.5pt}\protect\raisebox{0pt}{\tiny{eff}}}}}

\newcommand{\deffspec}{d\ensuremath{_{\hspace{-0.5pt}\protect\raisebox{0pt}{\tiny{spec}}}}}

\newcommand{\eqst}{{\mu}}
\newcommand{\leakP}{\mathscr{L}\subtiny{-1.3}{0.3}{{P}}}
\newcommand{\leakz}{\mathscr{L}\subtiny{0}{0}{{0}}}
\newcommand{\avginfty}[1]{\left\langle #1 \right\rangle\subtiny{-1}{-2}{T\rightarrow\infty}}

\newcommand{\deffdyn}{d\ensuremath{_{\hspace{-0.5pt}\protect\raisebox{0pt}{\tiny{dyn}}}}}

\newcommand{\Hpow}{H\subtiny{0}{0}{\mathrm{pow}}}

\newcommand{\qnmdyn}{q\ensuremath{_{\hspace{-0.5pt}\protect\raisebox{0pt}{\tiny{nm}}}^{\protect\raisebox{-1pt}{\tiny{dyn}}}}}

\newcommand{\avgT}[1]{\left\langle #1 \right\rangle\subtiny{0}{0}{T}}
\newcommand{\leak}{\mathscr{L}}
\newcommand{\Ne}{\mathcal{N}}

\newcommand{\tsp}{\tau\subtiny{0}{0}{\mathrm{\perp}}}

\newcommand{\up}{\texttt{Up}}
\newcommand{\dw}{\texttt{Dw}}
\newcommand{\mypm}{\texttt{Pm}}

\begin{document}

\title{Spectral analysis of equilibration: information leakage in isolated quantum systems}

\author{Andr\'e T. Ces\'{a}rio\,\texorpdfstring{\orcidlink{0000-0002-6972-2576}}{}}\email{andretcs@ufmg.br}
\affiliation{Departamento de F\'{\i}sica - ICEx - Universidade Federal de Minas Gerais,
Av.~Pres.~Ant\^{o}nio Carlos 6627 - Belo Horizonte - MG - 31270-901 - Brazil.}

\author{Marcos G. Alpino\,\texorpdfstring{\orcidlink{0009-0000-2390-4086}}{}}
\affiliation{Departamento de F\'{\i}sica - ICEx - Universidade Federal de Minas Gerais,
Av.~Pres.~Ant\^{o}nio Carlos 6627 - Belo Horizonte - MG - 31270-901 - Brazil.}

\author{Reinaldo O. Vianna\,\texorpdfstring{\orcidlink{0000-0003-2857-8552}}{}}
\affiliation{Departamento de F\'{\i}sica - ICEx - Universidade Federal de Minas Gerais,
Av.~Pres.~Ant\^{o}nio Carlos 6627 - Belo Horizonte - MG - 31270-901 - Brazil.}
\author{Tiago Debarba\,\texorpdfstring{\orcidlink{0000-0001-6411-3723}}{}}\email{debarba@utfpr.edu.br}
\affiliation{Departamento Acad{\^ e}mico de Ci{\^ e}ncias da Natureza - Universidade Tecnol{\'o}gica Federal do Paran{\'a}, Campus Corn{\'e}lio Proc{\'o}pio - Paran{\'a} -  86300-000 - Brazil.}

\begin{abstract}
\noindent 
We develop a unified dynamical-spectral framework for equilibration in isolated quantum systems based on a subspace coarse-graining approach. Central to our formulation is the Leakage Fidelity Function (LFF), defined as the probability that a unitarily evolving state escapes the support of its initial subspace. This quantity provides a direct, operational measure of information flow and memory loss without invoking ensemble assumptions or perturbative arguments. We derive universal bounds on temporal fluctuations of the LFF, in terms of the spectral gap structure and the square of the effective dimension, evincing that large spectral delocalization suppresses fluctuations and guarantees equilibration on average. By introducing spectral power distributions and associated entropic measures, we establish a quantitative link between phase mixing, gap participation, and dynamical stability. We further investigate the equilibration timescale by connecting the LFF to quantum speed limits, thereby revealing the average time required for equilibration. Our results provide a state-dependent, geometrically transparent perspective on how spectral complexity and subspace information  leakage jointly govern irreversibility in closed quantum many-body systems. 
\end{abstract}

\maketitle
\begin{figure*}[t]
    \centering  \includegraphics[width=1\linewidth]{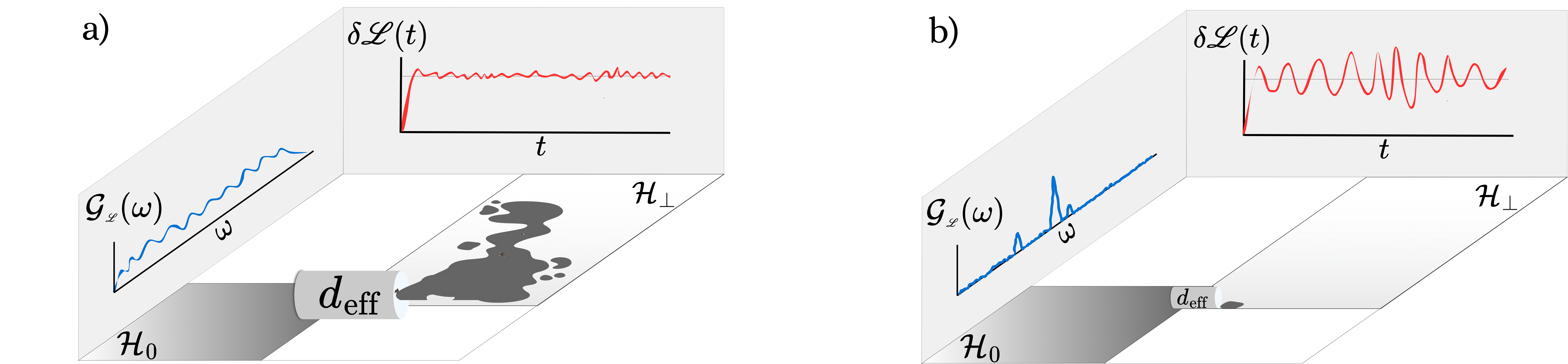}
\caption{\justifying\textbf{Illustrative Image of Leakage Fidelity.} 
In both panels, the floor illustrates the state distribution within the Hilbert space $\mathcal{H} = \mathcal{H}_0 \oplus \mathcal{H}_{\perp}$. 
The side wall represents the {\it spectral power density} $\mathcal{G}\subtiny{0}{0}{\leak}(\omega)=\left|{\delta\leak}(\omega)\right|^2$ as a function of the Hamiltonian spectrum, in an energy shell $\omega$. The front wall shows the {\it Leakage Fidelity Function} $\leak(t)$ as a function of time. Panel $a)$ illustrates an isolated equilibrating system, where ${\rho(t)} \in \mathcal{H}_0 \oplus \mathcal{H}_{\perp}$ for all times $t \rightarrow T$. 
In this case, the spectral power density $\mathcal{G}\subtiny{0}{0}{\leak}(\omega)$ spreads over the entire energy spectrum, and the LFF signal $\delta\mathscr{L}(t)=|\mathscr{L}(t) - \avgT{\mathscr{L}(t)}|$ exhibits small fluctuations around its average equilibrium value. 
Panel $b)$ illustrates a quasi-periodic isolated system, where ${\rho(t)} \in \mathcal{H}_0$ for all $t \ge 0$. Here, $\mathcal{G}\subtiny{0}{0}{\leak}(\omega)$ is concentrated within a restricted spectral region, and $\delta\mathscr{L}(t)$ shows large fluctuations around its average equilibrium value. }
    \label{fig:leakage_image}
\end{figure*}

\section{Introduction}
Across classical and quantum statistical mechanics, equilibration emerges not from fundamental dynamical irreversibility, but from a restriction of description: coarse-graining in phase space \cite{boltzmann1970,ehrenfest1990}, projection onto relevant variables \cite{zwanzig1961,zwanzig1970concept,mori1965}, tracing out environmental degrees of freedom \cite{zurek1982environment}, or limiting attention to experimentally accessible observables \cite{reimann2008foundation,linden2009quantum,reimann2010canonical}. Equilibration in isolated quantum systems does not require external baths: on average, unitary dynamics alone typically drives experimentally relevant observables toward steady behavior on intermediate and long timescales~\cite{Reimann2012,Short_2012,gogolin2016equilibration}. In this context, the second law of thermodynamics emerges as a consequence of entropic measures coarse-grained according to macroscopic observables~\cite{meier2025emergence}.~Based on such entropic measures, it is possible to quantify the statistical complexity associated with the order-disorder behavior of a unitary equilibration process~\cite{alpino2025}. The emergence of classicality from purely quantum equilibration has also yielded insight into measurement theory without invoking an {\it ad hoc} Born-rule postulate~\cite{schwarzhans2023quantum,engineer2024equilibration,demelo2024finite}.

In closed quantum systems, the Eigenstate Thermalization Hypothesis (ETH) \cite{deutsch1991,srednicki1994} provides a complementary mechanism whereby thermodynamic behavior arises at the level of few-body observables despite global unitary evolution. While ETH offers one mechanism for equilibration, a broader dynamical route emphasizes phase dispersion, dephasing, and the structure of energy gaps generated by the Hamiltonian~\cite{Anza2017,AnzaHuber2018}. In this perspective, equilibration results from interference among many oscillatory contributions in the energy eigenbasis, whose cumulative effect suppresses long-time fluctuations of observables.

Under unitary evolution, return amplitudes and fidelities exhibit universal structures. At very short times, one observes the quadratic Zeno regime governed by the energy variance of the initial state \cite{Gorin2006}. At intermediate times, mixing characteristics of quantities and decay profiles, on average, depend on the density of accessible states and on the connectivity induced by the Hamiltonian. These regimes reflect the progressive deterioration of initial coherences in the energy eigenbasis~\cite{Emerson2002,Gorin2006,Prosen2002,Kowalewska2009,Weinstein2005}.
The survival probability is directly connected to the local density of states (LDOS), defined by the overlaps of the initial state with the energy eigenstates~\cite{Zarate2023}. Since the survival probability is the squared modulus of the Fourier transform of the LDOS, equilibration properties are tightly linked to how broadly the initial state spreads over the spectrum. In particular, long-time fluctuations are controlled by participation ratios and by the effective dimension of the explored subspace~\cite{TorresHerrera2018,TimeFluct2013}.
In parallel, the recently proposed \textit{observable statistical mechanics} framework predicts stationary outcome distributions for few-outcome observables from a maximum-entropy principle without requiring microscopic reconstruction~\cite{Scarpa2023}. These developments suggest that equilibration can be understood operationally in terms of restricted observables rather than full-state convergence. 

We aim to develop a quantitative measure of the degree of equilibration that captures the suppression of temporal fluctuations and its dependence on system parameters, named as \textit{Leakage Fidelity Function} (LFF). Beyond static participation measures, we examine how the spectral structure of the Hamiltonian, together with the portion of Hilbert space effectively accessed by the initial state, governs the dynamics of equilibration. 
Equilibration, then, emerges as a consequence of interference among many incommensurate frequencies, producing destructive phase mixing that suppresses persistent oscillations.

To provide a geometric intuition for this mechanism, we introduce in Fig.~\ref{fig:leakage_image} a schematic representation of the Leakage Fidelity Function ($\leak$). The floor depicts the distribution of the evolving state within the decomposed Hilbert space $\mathcal{H}=\mathcal{H}_0 \oplus \mathcal{H}_{\perp}$ (with $\mathcal{H}_0$ representing the subspace spanned by the initial states' support, and $\mathcal{H}_{\perp}$ its orthogonal complement). The front wall shows the temporal signal, for a given time window $T$, $\delta\mathscr{L}(t)=|\mathscr{L}(t) -~ \avgT{\mathscr{L}(t)}|$ obtained from unitary evolution, while the side wall represents the spectral power density (SPD) profile $\mathcal{G}\subtiny{0}{0}{\leak}(\omega)=~\left|{\delta\leak}(\omega)\right|^2$ determined by the Hamiltonian spectrum, where $\delta\leak(\omega)=\int_{0}^{\infty}\delta\leak(t) e^{-i\omega t} dt$ is the Fourier transform of the signal $\delta\leak(t)$.

The contrast between the two panels highlights how broad spectral participation leads to destructive phase mixing and small temporal fluctuations, whereas restricted spectral support results in persistent oscillatory behavior. These considerations motivate a systematic investigation of the interplay between spectral structure, dynamical constraints, geometric bounds, and the emergent equilibration behavior of isolated quantum systems. Rather than invoking ensemble-typical chaotic diagnostics, we focus on universal bounds and operational quantities that hold for fixed Hamiltonians and initial states, thereby providing a structurally controlled framework for equilibration.

\begin{table*}[t]
\centering
\renewcommand{\arraystretch}{1.3}
\setlength{\tabcolsep}{10pt}
\begin{tabular}{@{}l|c|c|l@{}}
\hline\hline
\textbf{Coarse-graining} & \textbf{Observable}  & \textbf{Convergence} & \textbf{Reference}\\ 
\hline
Observables &
O &
$\M{O}\left(\,{\deff\suptiny{0}{1}{-1}}\,\right)$ &
Reimann et.al~\cite{reimann2008foundation} \\

Probabilities &
POVM &
$ \M{O}\left(\,\sqrt{\deff\suptiny{0}{1}{-1}}\,\right) $ &
Meier et.al~\cite{meier2025emergence} \\

Subspaces (LFF) &
P &
$\M{O}\left(\,\deff\suptiny{0}{1}{-2} \,\right)$&
Eq.~\eqref{eq:leak_bound} \\

Pure Initial States (LFF) &
$\ket{\psi_0}$ &
$\M{O}\left(\,\deff\suptiny{0}{1}{-2} \,\right)$&
Eq.~\eqref{eq:leakZ_bound} \\
\hline\hline
\end{tabular}
\caption{\justifying Summary of equilibration in isolated quantum systems under different coarse-graining schemes and the corresponding convergence of their variance bounds towards equilibrium on average. The table highlights a hierarchical structure among coarse-grained descriptions, namely $\mathrm{Pure\, States} \subset \mathrm{Projector} \subset \mathrm{POVM} \subset \mathrm{Observables}$. Here, the projector $P$ spans the support of the initial-state subspace $\M{H}_0$; a POVM represents a macroscopic measurement process; and $O$ denotes a general macroscopic observable. The scaling of the associated bounds reflects the increasing level of coarse-graining and the corresponding suppression of temporal fluctuations.} 
\label{tab:LLFs}
\end{table*}

\subsection{Paper outline}
The main results of this work establish a unified dynamical and spectral framework for characterizing the equilibration of isolated quantum systems, structured around the LFF as a quantitative diagnostic of its emergence. This work is organized as follows.

Section~\ref{sec:framework} establishes the conceptual and mathematical basis of our approach. There, we formalize equilibration in closed quantum systems in terms of a fixed subspace decomposition $\mathcal{H}=\mathcal{H}_0\oplus\mathcal{H}_\perp$ determined by the support of the initial state. Within this coarse-grained structure, we motivate LFF as an operational measure of information flux between dynamically coupled subspaces. In Subsection~\ref{sec:fig_of_merit} we introduce the leakage fidelity function and revise the important figures of merit related to equilibration of isolated quantum systems. 

 In Section~\ref{sec:equil_LFF}, we introduce the LFF as a tool for equilibration characterization. Within this framework, in Subsection~\ref{sec:variance_bound} we derive the universal variance bounds: Theorem~\ref{th:leakage_linearentropy} states that the temporal variation of the LFF is universally upper bounded and decays as $\deff\suptiny{0}{1}{-2}$; Theorem~\ref{thm:leakage_concentration_dir} further establishes concentration of the LFF around its equilibrium value, both for fixed initial states evaluated at random times and for fixed times with initial states drawn from the Haar measure. 

Section~\ref{sec:leak_spec_power} analyzes the spectral power of LFF and its cumulative weights as functions of the Hamiltonian dimension and $\deff\suptiny{0}{1}{-2}$, and introduces two spectral quantifiers for the description of equilibration: the Shannon power entropy $H_{\mathrm{pow}}$ and the spectral effective dimension $\deffspec$, which quantify how the fluctuation signal is distributed over the Bohr frequencies. These quantities characterize the degree of spectral delocalization of the dynamics in the frequency domain, quantifying the effective number of active energy gaps and the resulting phase-mixing processes that govern the suppression of revivals and the approach to equilibration.

Section~\ref{sec:diagnosing_equilibration} is dedicated to introducing the dynamical effective dimension ($\deffdyn$), which quantifies the fluctuations of the LFF time signal. It is computed as a function of the number of cosine combinations that compose the LFF signal. Theorem~\ref{thm:LFF_variance_clean} shows the exact value of the LFF time-averaged variance in the long-time regime, expressing the dependence of $\deffdyn$ on the emergence of equilibration in isolated quantum systems. The physical interpretation of the dynamical effective dimension and its spectral counterpart is also provided, along with two numerical examples.

Section~\ref{sec:MT_bound_leak} establishes a speed-limit relation for the time-averaged LFF, extending the Mandelstam-Tamm bound to the equilibration context in function of the effective dimension in the long-time regime.

During our presentation, we illustrate the arguments with plots and tables. In particular, Figs.~\ref{fig:leakz_vs_t} to \ref{fig:example2} and Table~\ref{tab:curvature_states} illustrate the spectral and averaged equilibration characteristics for an $N$-spins-$1/2$ Ising-like model with transverse field. Details on the physical system and the numerical methods employed are omitted from the main text; we refer the reader to Appendix~\ref{appsec:numerical_exploration_spectral_descriptors}.

We conclude in Section~\ref{sec:conclusions} with a summary of our results and an outlook on future directions. Technical proofs and supplementary results are presented in the Appendices.

\section{Framework}\label{sec:framework}
In this section, we review the theoretical framework describing the equilibration process in closed quantum systems and, based on these results, motivate and introduce the \textit{leakage fidelity function}. A general summary of equilibration in isolated quantum systems under different coarse-graining schemes is presented in Table \ref{tab:LLFs}. It highlights the hierarchical structure among coarse-grained descriptions, namely $\mathrm{Pure\, States}  \subset \mathrm{Projectors} \subset \mathrm{POVMs} \subset \mathrm{Observables}$ along the convergence bound for each macroscopic observable. 

\subsection{Equilibration of Isolated Quantum Systems}
Consider an isolated quantum system described by the Hamiltonian 
$H_d=\sum_{n=0}^{d\subtiny{0}{0}{E}-1} E_n \Pi_n$ on a $d$-dimensional Hilbert space $\M{H}_d$, where $d\subtiny{0}{0}{E}$ denotes the number of distinct energy levels and $\Pi_n$ are the projectors onto the corresponding energy eigenspaces. Without loss of generality, one can construct an effective Hamiltonian within a reduced Hilbert space identifying a physically relevant subspace $\mathcal{H} \subset \mathcal{H}_d$ of dimension $d\subtiny{0}{0}{E} \ll d$, spanned by a set of linearly independent vectors $\{|\psi_1\rangle,\dots,|\psi_{d\subtiny{0}{0}{E}}\rangle\}$. An orthonormal basis $\{|E_n\rangle\}_{n=0}^{d\subtiny{0}{0}{E}-1}$ for this subspace is obtained via the Gram--Schmidt procedure, and the corresponding projector is defined as $\Pi_{\subtiny{0}{0}{H}}=\sum_{n=0}^{d\subtiny{0}{0}{E}-1} \ketbra{E_n}{E_n}$, representing the identity of $\mathcal{H}$. 
This procedure does not change the physics. It only provides a clean orthonormal coordinate system in which to express $H = V^\dagger H_d V,$ where \(V\) is the isometric embedding whose columns are the orthonormal
vectors \(\ket{E_n}\). Hence, $H =
\sum_{n=0}^{d\subtiny{0}{0}{E}-1}E_n\ketbra{E_n}{E_n}$. Moreover, the system can be prepared in an initial state $\rho\subtiny{0}{0}{0}$, with spectral decomposition,
$\rho\subtiny{0}{0}{0}=\sum_{\alpha=0}^{r-1}\lambda_{\alpha}\ketbra{\varphi_{\alpha}}{\varphi_{\alpha}}$ over the $d\subtiny{0}{0}{E}$-dimensional Hilbert space $\M{H}$, 
whose support is contained in the projector 
$P=\sum_{\alpha}\ketbra{\varphi_{\alpha}}{\varphi_{\alpha}}$, with $\ket{\varphi_{\alpha}} = \sum_{n}c_{n}^{\alpha}\ket{E_n}$ in the energy eigenbasis. 
The dimension of this support determines the rank of the initial state, given by $\Tr(P) \equiv r$. The subsequent time evolution is governed by the Schrödinger equation, yielding the unitary dynamics $\rho(t)=e^{-iHt}\rho\subtiny{0}{0}{0} e^{iHt}$.

These dynamical features are governed by the spectral structure of the initial state $\rho\subtiny{0}{0}{0}$. 
The relevant equilibrium state for unitary dynamics is obtained from the long-time average of the state. We first define the time average of $\rho(t)$ over a finite interval $T$ as
\begin{equation}
\label{eq:avgt}
\langle \rho(t)\rangle\subtiny{0}{0}{T}
:=
\frac{1}{T}
\int_{0}^{T}
\rho(t)\,dt .
\end{equation}

The equilibrium (microcanonical-like) state $\eqst$ is, thus, defined as the infinite-time limit of this average
\begin{equation}
\label{eq:equilibrium_state}
\eqst
:=
\lim_{T\rightarrow\infty}
\langle \rho(t)\rangle\subtiny{0}{0}{T}.
\end{equation}
For Hamiltonians with nondegenerate spectra, this state takes the explicit form $\eqst\equiv\sum_n\ketbra{E_n}{E_n}\rho\subtiny{0}{0}{0}\ketbra{E_n}{E_n},$ which corresponds to the diagonal ensemble in the energy eigenbasis \cite{short2011equilibration}. The effective dimension is defined as the inverse of the purity of the equilibrium state,
quantifying the effective number of energy eigenstates contributing to the long-time dynamics,
\begin{equation}
\label{eq:deff}
\deff
:=
\frac{1}{\Tr\!\left(\eqst\suptiny{0}{0}{2}\right)}.
\end{equation}
In particular, for pure initial states $\ket{\psi_0}=\sum_n c_n \ket{E_n}$, $\deff 
= {1}/{\sum_n |c_n|\suptiny{0}{0}{2}}$. 

Since long-time fluctuations of few-body observables scale inversely with $\deff$, the effective dimension provides a direct and physically meaningful link between the initial state's coherence structure and the overall equilibration time of the closed quantum system~\cite{reimann2008foundation}.
In this context, for a given observable \(O\), a fundamental bound on finite time  observable equilibration in closed quantum systems is provided by~\cite{Short_2012},
\begin{align}
    &\big\langle\big|\Tr(\rho(t)O) - \Tr(\eqst O)\big|\suptiny{0}{0}{2}\big\rangle
    \subtiny{-3}{-5}{\scriptscriptstyle T\rightarrow\infty}\!
    \le
    {f(\varepsilon,T)}\frac{\|O\|\suptiny{0}{0}{2}}{\deff},
    \label{eq:bound_Reimann}\\
    &{f(\varepsilon,T)
    =
    \Ne(\varepsilon)
    \left(
    1+\frac{8\log d\subtiny{0}{0}{E}}{\varepsilon T}
    \right).}
    \label{eq:spectral_factor}
\end{align}
Here, \(\|O\|\) denotes the operator norm, \(f(\varepsilon,T)\) is the spectral factor, \(\mathcal{N}(\varepsilon)\) denotes the maximal number of energy gaps contained in any interval of width \(\varepsilon\), and \(d\subtiny{0}{0}{E}\) is the dimension of the energy eigenspace.

This inequality captures a key structural feature of equilibration: deviations from the equilibrium expectation value are suppressed by the effective dimension, while the operator norm acts as the scale that converts state-space delocalization into observable fluctuations. Importantly, the bound only has physical content when $\|O\|$ grows at most polynomially with the system size $N$. If $\|O\|$ were allowed to increase exponentially with $N$, the right-hand side of Eq.~\eqref{eq:bound_Reimann} would dominate the $1/\deff$ suppression, rendering the inequality unable to diagnose equilibration and effectively masking the decay of fluctuations. 

We can introduce a coarse-graining procedure based on the subspace structure determined by the initial state. As discussed previously, equilibration emerges from restrictions on the physical description, whereby information can flow between distinct macrofractions of the Hilbert space. For a given initial state $\rho\subtiny{0}{0}{0}$, let $P$ denote the projector onto its support, defining a subspace $\M{H}_0 \subset \M{H}$ such that $P\rho\subtiny{0}{0}{0} P = \rho\subtiny{0}{0}{0}$. By construction, there exists an orthogonal projector $Q = \id - P$ spanning the complementary subspace $\M{H}_{\perp} \subset \M{H}$, so that the total Hilbert space decomposes as
\begin{equation}
    \M{H} = \M{H}_0 \oplus \M{H}_{\perp}.
\end{equation}
We, thus, restrict the dynamical description to the coarse-grained macrofraction $\M{H}_0$, which contains the full support of the initial state. Fig.~\ref{fig:leakage_image} illustrates the information flux from the original initial subspace $\M{H}_0$ pictorially to its complement $\M{H}_{\perp}$, dependent on the effective dimension, as evinced by the bound in Eq.~\eqref{eq:bound_Reimann}. To understand this connection, one can simply consider the macroscopic observable as the initial subspace projector $P$, resulting 
\begin{equation}\label{eq:bound_weak}
     \big\langle\big|\Tr(\rho(t)P) - \Tr(\eqst P)\big|\suptiny{0}{0}{2}\big\rangle_{T\rightarrow\infty} 
    \;\le\; \frac{1}{\deff}, 
\end{equation}
where $\Tr(\rho(t)P)$ represents the fidelity of $\rho(t)$ that is still localized in $\M{H}_0$.  Notice that this fidelity provides a meaningful quantitative characterization of the emergence of equilibration. Indeed, if the state remains confined to the initial subspace throughout the entire evolution, it remains invariant under the action of $P$ at all times, resulting $\Tr(\rho(t)P)=1$ for every $t$. One may therefore construct an indicator of equilibration based on deviations from this condition, namely, when the state acquires support in the complementary subspace $\M{H}_{\perp}$. 

\subsection{Leakage Fidelity Function and Related Figures of Merit}\label{sec:fig_of_merit}
Within the framework presented above, equilibration can be analyzed in terms of the information flux between $\M{H}_0$ and its orthogonal complement $\M{H}_{\perp}$, as generated by the system Hamiltonian. In particular, deviations from invariance of $\M{H}_0$ under the dynamics quantify the leakage of population and coherence into $\M{H}_{\perp}$, providing a natural measure of coarse-grained irreversibility. Quantitatively, we introduce a measure of the information flux, hereafter referred to as the \textit{Leakage Fidelity Function}.

\begin{definition}[Leakage Fidelity Function]\label{def:leakage} 
Let $\rho_0\in \mathcal{H}_0 \subseteq \mathcal{H} $ be a prepared initial state supported in a subspace of the full Hilbert space (spanned by a projector $ P $ onto $ \mathcal{H}_0 $). For a state $ \rho(t) $ evolving unitarily under a time-independent Hamiltonian $ H $, the Leakage Fidelity Function (LFF) with respect to $ \mathcal{H}_0 $ is defined as
\begin{equation}\label{eq:leakage_fidelity}
\leakP(t) := \Tr[(\id -P) \rho(t)],
\end{equation}
i.e., the probability that the system is found in the orthogonal subspace within $ \mathcal{H}_0 $ at time $ t $. 
\end{definition}  
LFF captures a physically transparent mechanism underlying equilibration: the progressive loss of memory of the initial-state subspace.
This spreading is also consistent with coarse graining through the effective dimension \( \deff \): states with small \( \deff \) remain largely confined to
their initial subspace and, therefore, fail to equilibrate; whereas large \( \deff \) corresponds to significant redistribution of weight across the spectrum and, consequently, substantial information leakage. Thus, the LFF provides a simple and transparent indicator of the extent to which the unitary dynamics have diluted the information initially encoded in the state.

The LFF, defined in Eq.~\eqref{eq:leakage_fidelity},
quantifies the probability that the system leaves the monitored subspace (providing a direct, perturbation-free measure of irreversibility). Expanding $\rho(t)$ at short times gives $\leakP(t)=t^{2}\Tr(\rho_{0}PHQHP)+O(t^{3})$, which shows that the quadratic Zeno onset is determined by the matrix elements $H_{mn}=\matrixel{E_m}{PHQHP}{E_n}$, i.e., the couplings between the $P$ and $Q=\id-P$ sectors that mediate the initial escape from the monitored subspace. These couplings define the local generator of equilibration and encode the same microscopic structure that drives fidelity decay. 

For initial pure states $\ket{\psi_0}$, the LFF is the infidelity of the evolved state $\ket{\psi(t)}$ with respect to the initial state, denoted $\leakz(t)$,
\begin{equation}
    \leakz(t) = 1- |\braket{\psi(t)}{\psi_0}|^2.
\end{equation}
It quantifies the spread of information across the $(d_E-1)$-dimensional support of $\ket{\psi_0}$. It is related to the {\it survival probability}, defined in Ref.~\cite{liu2024quantum} as
\begin{equation}
    S(t) = \left|\sum_n |c_n|^2 e^{-iE_n t}\right|^2 
         = \left|\int \rho_{\mathrm{ini}}(E) e^{-iEt} \, dE\right|^2,
\end{equation}
where $c_n=\braket{E_n}{\psi_0}$. It is a key diagnostic of ergodicity breaking: persistent revivals signal memory retention, while rapid decay indicates fast equilibration~\cite{SantosTavora2016,TorresHerrera2014,TorresHerrera2015}. Its behavior reflects the local density of states (LDOS), $\rho_0(E) = \sum_n |c_n|^2 \delta(E-E_n)$. The survival probability is central to many-body localization~\cite{Schiulaz2019} 
and quantum chaos~\cite{Santos2020}, saturating monotonically in integrable regimes where spectral rigidity is 
absent~\cite{TorresHerrera2018}.

Short-time decays of $S(t)$ depend on the LDOS bandwidth, while the long-time correlation hole reveals level repulsion and spectral rigidity~\cite{TorresHerrera2017,TorresHerrera2018}. Here, $\leakz(t)$ generalizes the survival probability to subspaces: $\leakz(t) = 1 - S(t)$ for pure states, tracking delocalization from $P$ into $Q$. The correlation hole $S(t) < \avgT{S}$~\cite{TorresHerrera2018,Lerma2019,Lezama2021} also appears in the \textit{spectral form factor} (SFF)
\begin{equation}\label{eq:SFF}
    \mathcal{K}(t) = \frac{1}{d_E^2} \left\langle \left|\Tr e^{-iHt}\right|^2 \right\rangle,
\end{equation}
whose dip-ramp-plateau structure signals chaos~\cite{mehta2004random,haake2010quantum,TorresHerrera2017,Das2025,Dong2025}. 

\section{Equilibration of Leakage Fidelity Function}\label{sec:equil_LFF}
In this section, we investigate equilibration using the Leakage Fidelity Function (LFF), which quantifies the probability amplitude or population flow from distinguished subspaces of the Hilbert space during the dynamics. Rather than characterizing equilibration only through abstract distances between $\rho(t)$ and the equilibrium state $\eqst$, we use the LFF as a time-resolved observable that is directly sensitive to returns to the initial configuration. 

This perspective allows one to distinguish qualitatively different equilibration regimes. If the LFF approaches its long-time value with small fluctuations, the system exhibits strong equilibration relative to the initial subspace $\mathcal H_0$. If, instead, the signal displays large and persistent oscillations, the dynamics retains memory of the initial configuration and equilibration is only weak. In this sense, the LFF refines standard time-averaged criteria by revealing pointwise deviations, revivals, and residual correlations that may remain hidden in coarse equilibrium diagnostics.

\subsection{Leakage Fidelity Function variance bounds}\label{sec:variance_bound}
Because the infinite-time average of LFF is determined solely by the initial energy populations, its instantaneous signal may exhibit substantial oscillations over finite time intervals. These oscillations reflect the detailed structure of the energy spectrum and, more sensitively, the distribution of energy gaps $\omega_{nm} = E_n - E_m$. It is, therefore, natural to study not only the equilibrium value of LFF,  but also the variance of the deviations $\delta\leakP(t) = \lvert \leakP(t)-\leakP\suptiny{0}{0}{\infty}\rvert\suptiny{0}{0}{2}$, which quantifies the stability of LFF signal for a given tuple $(H, \rho_0)$.

LFF at time $t$ is defined as in Eq.~\eqref{eq:leakage_fidelity}. Therefore, at long-time domain, the LFF equilibrium on average can be obtained by,
\begin{equation}
    \leakP\suptiny{0}{0}{\infty} := 1 - \mathrm{Tr}(P \eqst),
\end{equation}
where $\eqst = \lim_{T \to \infty} \frac{1}{T} \int_0^T \rho(t) \, dt$.

The bound derived below in Theorem 1 establishes a universal constraint on such fluctuations in terms of three spectral quantities: (i) the gap-density function $\Ne(\varepsilon)$, which measures near-degeneracies; (ii) the effective dimension $\deff$, which encodes the delocalization of the initial state in the energy eigenbasis; and (iii) the dimension $d_E$ of the part of the spectrum visited during the evolution. The result shows that large temporal oscillations are suppressed whenever the initial state is highly delocalized or the energy gaps are sufficiently irregular to induce strong dephasing.


\begin{figure*}[t]
    \centering
    \includegraphics[width=0.6\textwidth]{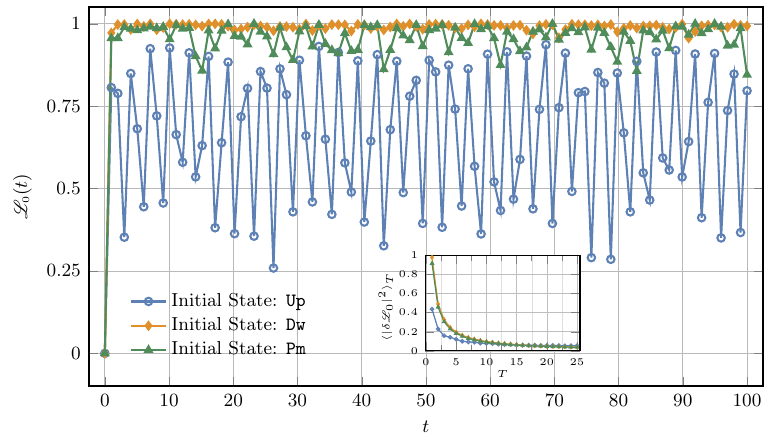}
    \caption{\justifying Time evolution of \(\leakz(t)\) for three different initial states, together with the corresponding running time-averaged quadratic deviation from equilibrium shown in the inset, \(\delta\leakz(t) = \leakz(t)-\leakz\suptiny{0}{0}{\infty}\). The all spin up state ($\up=\ket{\uparrow\cdots\uparrow}$) exhibits large and persistent oscillations, indicating stronger nonequilibrium fluctuations throughout the dynamics. By contrast, down ($\dw = \ket{\downarrow\cdots\downarrow}$) and paramagnetic ($\mypm=\ket{\uparrow\downarrow\cdots\uparrow\downarrow}$) state remain much closer to \(\leakz\suptiny{0}{0}{\infty}\), displaying smoother behavior and a faster decay of the averaged fluctuation measure. This comparison makes explicit that states with similar energetic scales may nevertheless present markedly different dynamics, depending on how the initial state is distributed in the energy basis.}
    \label{fig:leakz_vs_t}
\end{figure*}

\begin{theorem}[Leakage Fidelity Function variance]\label{th:leakage_linearentropy}
Consider a quantum system initially prepared in a state
$\rho\subtiny{0}{0}{0}$ of rank $r$, evolving under the unitary dynamics $U(t)$. Let $\rho\subtiny{0}{0}{0}
=\sum_{\alpha=0}^{r-1}\lambda_{\alpha}\ketbra{\varphi_{\alpha}}{\varphi_{\alpha}}$ be its spectral decomposition, and define
$\eqst=D(\rho\subtiny{0}{0}{0})$ and $\eqst\subtiny{0}{0}{\alpha}
=D(\ketbra{\varphi_{\alpha}}{\varphi_{\alpha}})$, where $D$ denotes
dephasing in the Hamiltonian eigenbasis. Assume that the initial mixture satisfies the mixing Past Hypothesis,
$\Tr(\rho\subtiny{0}{0}{0}\suptiny{0}{0}{2})
\geq
\sum_{\alpha=0}^{r-1}
\lambda_{\alpha}\Tr(\eqst\subtiny{0}{0}{\alpha}\suptiny{0}{0}{2})/r$,
as discussed in App.~\ref{app:mixing_past_hypothesis}. Then, for the
finite-time variance of the LFF over a time window $T$, namely
$\avgT{\left|\leakP(t)-\leakP\suptiny{0}{0}{\infty}\right|^2}$, one obtains the bound
\begin{align}
  \avgT{\left|\leakP(t)-\leakP\suptiny{0}{0}{\infty}\right|^2}
  \le
   f(\varepsilon,T)
   \left(1-\frac{1}{d\subtiny{0}{0}{E}}\right)
   \frac{r\suptiny{0}{1}{3}}{\deff\suptiny{0}{1}{2}},
  \label{eq:leak_bound}
\end{align}
where $f(\varepsilon,T)$ is the spectral factor defined in
Eq.~\eqref{eq:spectral_factor}.
\end{theorem}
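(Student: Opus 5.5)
The plan is to follow the Short--Farrelly route behind Eq.~\eqref{eq:bound_Reimann}, but to apply it to the specific observable $A=Q=\id-P$, whose matrix elements in the energy basis are controlled by the initial state itself. Writing $\rho(t)=\sum_{n,m}\rho_{nm}e^{-i\omega_{nm}t}\ketbra{E_n}{E_m}$ with $\rho_{nm}=\matrixel{E_n}{\rho\subtiny{0}{0}{0}}{E_m}$, one has $\leakP(t)-\leakP\suptiny{0}{0}{\infty}=-\sum_{n\neq m}\rho_{nm}P_{mn}e^{-i\omega_{nm}t}$. Setting $v_{(nm)}=\rho_{nm}P_{mn}$, the standard gap-counting lemma underlying Eq.~\eqref{eq:spectral_factor} gives
\begin{equation*}
\avgT{\left|\leakP(t)-\leakP\suptiny{0}{0}{\infty}\right|^2}\le f(\varepsilon,T)\sum_{n\neq m}|\rho_{nm}|^2|P_{mn}|^2 .
\end{equation*}
This first step is a direct transcription of the proof of Eq.~\eqref{eq:bound_Reimann}, so I will take it as given. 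The rest of the argument is to bound the off-diagonal sum $S=\sum_{n\neq m}|\rho_{nm}|^2|P_{mn}|^2$ by $(1-1/d\subtiny{0}{0}{E})\,r^3/\deff\suptiny{0}{1}{2}$.

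Next I expand $P=\sum_\alpha\ketbra{\varphi_\alpha}{\varphi_\alpha}$ and $\rho\subtiny{0}{0}{0}=\sum_\alpha\lambda_\alpha\ketbra{\varphi_\alpha}{\varphi_\alpha}$. The Cauchy--Schwarz inequality over the $r$ terms gives $|P_{mn}|^2\le r\sum_\alpha|c^\alpha_m|^2|c^\alpha_n|^2$ and $|\rho_{nm}|^2\le r\sum_\beta\lambda_\beta^2|c^\beta_n|^2|c^\beta_m|^2$. Substituting both into $S$ produces $r^2\sum_{\alpha,\beta}\lambda_\beta^2\sum_{n\neq m}|c^\alpha_n|^2|c^\alpha_m|^2|c^\beta_n|^2|c^\beta_m|^2$. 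I then bound the inner double sum by $\big(\sum_n|c^\alpha_n|^2|c^\beta_n|^2\big)^2$ minus its diagonal part. Using $|c^\alpha_n|^2|c^\beta_n|^2\le\tfrac12(|c^\alpha_n|^4+|c^\beta_n|^4)$, this is controlled by the purities $\Tr(\eqst\subtiny{0}{0}{\alpha}\suptiny{0}{0}{2})$ of the dephased pure components. The factor $(1-1/d\subtiny{0}{0}{E})$ should come from removing the diagonal $n=m$: for a probability vector $p$ on $d\subtiny{0}{0}{E}$ sites, $\sum_{n\neq m}p_np_m q_nq_m\le(1-1/d\subtiny{0}{0}{E})(\sum_n p_nq_n)^2$, which I would justify with the power-mean inequality $\sum_n x_n^2\ge(\sum_n x_n)^2/d\subtiny{0}{0}{E}$ applied to $x_n=p_nq_n$.

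The last step converts the component purities into $\deff$ of the mixture, and this is where the mixing Past Hypothesis enters. I need to relate $\sum_\alpha\lambda_\alpha\Tr(\eqst\subtiny{0}{0}{\alpha}\suptiny{0}{0}{2})$ to $\Tr(\eqst\suptiny{0}{0}{2})=1/\deff$. By convexity, $\Tr(\eqst\suptiny{0}{0}{2})\le\sum_\alpha\lambda_\alpha\Tr(\eqst\subtiny{0}{0}{\alpha}\suptiny{0}{0}{2})$. The hypothesis is the reverse-type inequality, with a $1/r$ loss, tying these to $\Tr(\rho\subtiny{0}{0}{0}\suptiny{0}{0}{2})$. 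Combining it with $\sum_\beta\lambda_\beta^2=\Tr(\rho\subtiny{0}{0}{0}\suptiny{0}{0}{2})$ should give the product of two purity-type factors, each bounded by $r/\deff$ up to the $r$ factors already collected. I expect the main obstacle to be this bookkeeping of powers of $r$, which must land exactly on $r^3/\deff\suptiny{0}{1}{2}$. In particular, I am not certain the hypothesis is used in the stated direction; if it is not, I will instead bound $\sum_n|c^\alpha_n|^2|c^\beta_n|^2\le\sum_n\matrixel{E_n}{P}{E_n}^2\le r^2\,\Tr(\eqst\suptiny{0}{0}{2})$-type quantities, via $\lambda_\alpha\ge$ the weights implied by the hypothesis. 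Having fixed that, I will check the pure-state case $r=1$, where $S=\sum_{n\ne m}|c_n|^4|c_m|^4\le(1-1/d\subtiny{0}{0}{E})\deff\suptiny{0}{1}{-2}$ holds exactly, as a sanity check of the constants.
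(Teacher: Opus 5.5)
Everything up to your last step is correct. The first step is the paper's first lemma. Your two Cauchy--Schwarz estimates and the power-mean step give
\[
S\le\Bigl(1-\frac{1}{d_E}\Bigr)\,r^2\sum_{\alpha,\beta}\lambda_\beta^2\,G_{\alpha\beta}^2,
\qquad G_{\alpha\beta}=\sum_n|c_n^\alpha|^2|c_n^\beta|^2 .
\]

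\textbf{The gap: the conversion to $\deff$ cannot be done under the stated hypothesis.} This is not a matter of bookkeeping powers of $r$. The stated hypothesis holds for every rank-$r$ state, because
\[
\Tr(\rho_0^2)=\sum_\alpha\lambda_\alpha^2\ge \frac{1}{r}\ge\frac{1}{r}\sum_\alpha\lambda_\alpha\Tr(\mu_\alpha^2).
\]
So it cannot bound any combination of component purities by $\Tr(\mu^2)=1/\deff$, and the claimed bound is false as stated.

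\textbf{Counterexample.} The obstruction is a low-weight, energy-localized component. Let $D=d_E-2$, $\ket{\varphi_0}=D^{-1/2}\sum_{n=3}^{d_E}\ket{E_n}$, $\ket{\varphi_1}=(\ket{E_1}+\ket{E_2})/\sqrt{2}$, and $\rho_0=(1-\delta)\ketbra{\varphi_0}{\varphi_0}+\delta\ketbra{\varphi_1}{\varphi_1}$ with $\delta^2=2/D$.
\begin{itemize}
\item[(i)] $1/\deff=(1-\delta)^2/D+\delta^2/2\le 2/D$.
\item[(ii)] $S\ge 2|P_{12}|^2|\rho_{12}|^2=\delta^2/8=1/(4D)$.
\item[(iii)] With non-degenerate gaps, $S$ is the $T\to\infty$ limit of the left-hand side.
\item[(iv)] For $\varepsilon$ small enough that $\mathcal{N}(\varepsilon)=1$, the right-hand side tends to $8(1-1/d_E)/\deff^{2}\le 32/D^2$.
\end{itemize}
Hence the inequality fails for large $T$ as soon as $D>128$; at $d_E=1024$ it fails by a factor of almost $9$.

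Your intermediate sum already fails here: $\sum_{\alpha,\beta}\lambda_\beta^2G_{\alpha\beta}^2\ge\delta^2/4=1/(2D)$, whereas you need it below $r\,\Tr(\mu^2)^2\le 8/D^2$. Your fallback fails too, since $\sum_nP_{nn}^2>1/2$ while $r^2\Tr(\mu^2)\le 8/D$. Comparing $P$ with $\rho_0$ needs a lower bound on the $\lambda_\alpha$, which the hypothesis never provides.

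\textbf{The paper's proof does not close this step either.} After the convexity lemma, it applies the pure-state estimate to $\mathcal{C}_{\mathrm{leak}}(\varphi_\alpha;P)$ as if $P$ were $\ketbra{\varphi_\alpha}{\varphi_\alpha}$. It then uses the stronger appendix form $\sum_\alpha\lambda_\alpha\Tr(\mu_\alpha^2)\le r\,\Tr(\mu^2)$, not the one in the statement, together with
\[
\sum_\alpha\Tr(\mu_\alpha^2)\le r\sum_\alpha\lambda_\alpha\Tr(\mu_\alpha^2).
\]
This last inequality holds for equal weights but not in general. In the family above with $\delta=1/D$, which satisfies even the stronger hypothesis for $D\ge 4$, its left side is $1/2+1/D$ and its right side is below $3/D$.

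\textbf{What does work: a floor on the spectrum, $\rho_0\ge\lambda_{\min}P$.}
\begin{itemize}
\item[(i)] Positivity gives $|P_{nm}|^2\le P_{nn}P_{mm}$ and $|\rho_{nm}|^2\le\rho_{nn}\rho_{mm}$.
\item[(ii)] Your power-mean step then gives $S\le(1-1/d_E)\bigl(\sum_nP_{nn}\rho_{nn}\bigr)^2$.
\item[(iii)] Since $P_{nn}\le\rho_{nn}/\lambda_{\min}$, this yields $S\le(1-1/d_E)\,\lambda_{\min}^{-2}\,\deff^{-2}$.
\end{itemize}
This recovers the $r^3$ form whenever $\lambda_{\min}\ge r^{-3/2}$, improves it to $r^2$ for $\rho_0=P/r$, and reduces to your $r=1$ check. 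No Cauchy--Schwarz over components is needed.
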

\noindent A complete proof of Theorem~\ref{th:leakage_linearentropy} can be found in Appendix~\ref{secapp:leak_variation_theorem}.\\

The mixing Past Hypothesis should be understood as a restriction on the admissible initial mixtures rather than as a dynamical assumption. It selects initial states whose purity is sufficiently greater than the average purity of their dephased pure components. In this sense, the bound is conditional on a low-mixing initial sector, in analogy with the role played by
Past-Hypothesis constraints in nonequilibrium statistical mechanics.

The LFF characterizes the probability of the system escaping the subspace defined by the initial state $\rho\subtiny{0}{0}{0}\in\M{H}_0$, and its variance captures the fluctuations that occur in the process. The factor \(1/\deff\suptiny{0}{1}{2}\) expresses the role of energy-space delocalization: states with large effective dimension experience strong destructive interference among many frequencies, producing small fluctuations, while states with small effective dimension retain coherent oscillations driven by a few dominant gaps. 

The temporal deviation of $\delta\!\leakz(t)$, for the case of pure initial states $\ket{\psi_0}=\sum_n c_n\ket{E_n}$, can be obtained directly from Eq.~\eqref{eq:leak_bound}, of Theorem~\ref{th:leakage_linearentropy}, by substituting a general $P$, with rank $r$, by a pure initial state $P=\ketbra{\psi_0}{\psi_0}$. Then, the following bound holds
\begin{align}
  &\avgT{\lvert \delta\!\leakz(t)\rvert\suptiny{0}{0}{2}}
  \;\le\;
   f(\varepsilon,T)\,\left(1-\frac{1}{d\subtiny{0}{0}{E}}\right)\,\frac{1}{\deff\suptiny{0}{1}{2}}.
  \label{eq:leakZ_bound}
\end{align}
Notice that for pure initial states, the average of LFF variation in time converges as $1/\deff^2$, resulting in a tighter bound if compared to the variance bound in Eq.~\eqref{eq:bound_Reimann}. From, Eq.~\eqref{eq:leakz_inf_deff} one can see that the infinite-time mean of $\leakz(t)$, given by $\leakz\suptiny{0}{0}{\infty}$, coincides with the linear entropy of $\eqst$
\begin{equation}\label{eq:leak_linearentropy}
\leakz\suptiny{0}{0}{\infty} =  \mathcal{S}\subtiny{0}{0}{L}(\eqst)
  = 1 - \frac{1}{\deff},
\end{equation}
where for a given state $\rho$, the linear entropy is defined as $\mathcal{S}\subtiny{0}{0}{L}(\rho)=1 - \Tr(\rho\suptiny{0}{0}{2})$, which is time-independent as the purity is constant under unitary evolutions.
Eq.~\eqref{eq:leak_linearentropy} makes explicit that the degree of mixing of the equilibrium state is entirely determined by the effective dimension associated with the initial energy distribution $p_n=|c_n|\suptiny{0}{0}{2}$.
Moreover, the limit \(\leakz\suptiny{0}{0}{\infty} \to 1\) corresponds to \(\mel{\psi_0}{\eqst}{\psi_0} \to 0\): the equilibrium state has vanishing weight on the one-dimensional subspace spanned by \(\ket{\psi_0}\). Physically, this means that the system, on average, rarely returns to its initial subspace in the long-time limit, and that the probability of finding it in that subspace at a typical time is arbitrarily small. In this regime, the state \(\eqst\) is not only orthogonal to \(\ket{\psi_0}\) but, when \(\mathcal{S}\subtiny{0}{0}{L}(\eqst)\) is large, it is also highly mixed over many other directions in Hilbert space.

For the case of pure initial states $\ket{\psi_0}=\sum_n c_n\ket{E_n}$, with probabilities $p_n=|c_n|\suptiny{0}{0}{2}$, $\delta\!\leakz(t)$ express the role of the gap structure $\omega_{mn}=E_m-E_n$ and coherences, in which 
 \begin{align}\label{eq:leakz_signal}
\leakz(t)
&=
1-\sum_n p_n\suptiny{0}{0}{2}
-
\sum_{n,m} p_n p_m e^{-i\omega_{mn}t}.
\end{align}
After pairing the terms $(n,m)$ and $(m,n)$,
\begin{align}
\leakz(t)
&=
1-\sum_n p_n\suptiny{0}{0}{2}
-
2\sum_{n<m}p_np_m\cos(\omega_{nm}t),\\
&=
1-\frac{1}{\deff}
-
2\sum_{n<m}p_np_m\cos(\omega_{nm}t).\label{eq:cos_only}
\end{align}
The infinite-time mean of Eq.~\eqref{eq:cos_only} is 
\begin{equation}
\label{eq:leakz_inf_deff}
\leakz\suptiny{0}{0}{\infty}=1-\sum_n p_n\suptiny{0}{0}{2} = \left(1-\frac{1}{\deff}\right),    
\end{equation}
because $\avginfty{\cos(\omega_{nm}t)} = 0$. Subtracting the infinite-time mean $\leakz\suptiny{0}{0}{\infty}$ of Eq.~\eqref{eq:cos_only}, we obtain the fluctuation
\begin{equation}\label{eq:cosine_deltaL}
\delta\!\leakz(t)
= \leakz(t)- \leakz\suptiny{0}{0}{\infty} =
-2\sum_{n<m} p_n p_m \cos(\omega_{nm} t).
\end{equation}
As an interesting remark, the cosine-only decomposition in Eq.~\eqref{eq:cosine_deltaL} is a direct manifestation of the classical structure of almost-periodic functions in the sense of Harald Bohr~\cite{Bohr2018}. The unitary evolution generated by a time-independent Hamiltonian produces phases $e^{-i\omega_{nm}t}$ associated with the discrete set of Bohr frequencies $\omega_{nm}$, so any quantity built from overlaps with the initial state admits a representation as a real superposition of harmonic components with exactly those frequencies. This almost-periodic structure shows that the relaxation of $\delta\!\leakz(t)$ is governed by the number of active incommensurate Bohr frequencies. A small number of participating gaps leads to quasi-periodic dynamics and persistent fluctuations, whereas a large number of incommensurate frequencies enhances phase mixing and drives equilibration.

Fig.~\ref{fig:leakz_vs_t} illustrates the spectral-dynamical contrast. The initial state $\up=\ket{\uparrow\cdots\uparrow}$ exhibits pronounced oscillations in the LFF signal, reflecting its narrow spectral support and reduced effective dimension, which lead to quasi-periodic dynamics and persistent memory of the initial configuration. In contrast, the state $\dw = \ket{\downarrow\cdots\downarrow}$ displays rapidly increasing LFF that saturates toward its equilibrium value, consistent with a broad energy distribution and strong phase mixing. These results are consistent with the spectral occupation profiles and confirm that the LFF provides a practical and physically transparent probe of equilibration, coherence decay, and memory effects in many-body dynamics.

\subsection{Leakage Fidelity Function Fluctuations}

Fig.~\ref{fig:leakz_vs_t} evinces that the variance is reduced when the equilibrium state is highly mixed, and the spectral distribution induces strong dephasing. Whenever the effective dimension is large, and the spectral factor $f(\varepsilon,T)$  (Eq.~\eqref{eq:spectral_factor}) is not too large, the temporal fluctuations around the equilibrium value $\leakz\suptiny{0}{0}{\infty}$ are given by Eq.~\eqref{eq:leak_linearentropy}.


\begin{figure*}[t!]
    \begin{subfigure}[b]{0.4\textwidth}
        \centering
        \includegraphics[width=\textwidth]{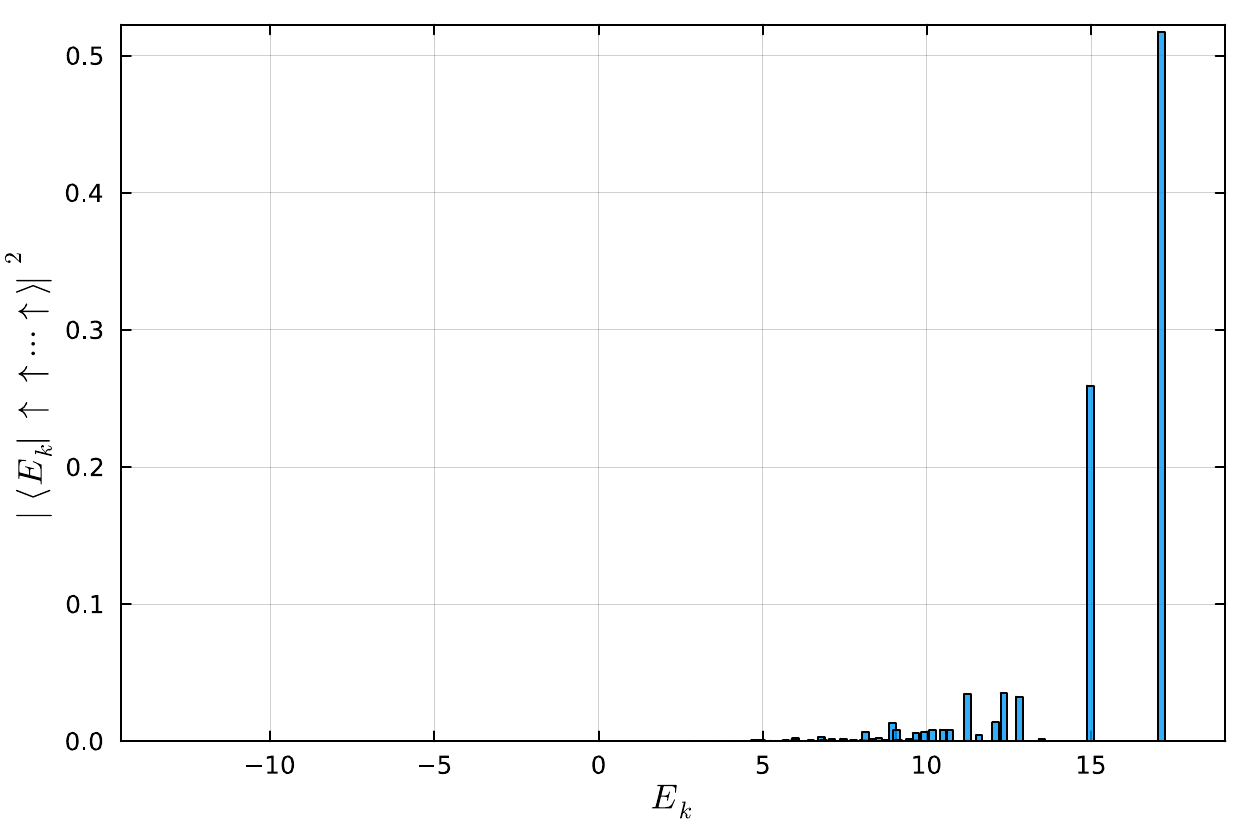}
        \caption{\justifying $\abs{\ip{E_k}{\uparrow\uparrow\cdots\uparrow}}\suptiny{0}{0}{2}$ as a function of the corresponding $E_k$.}
        \label{fig:En-pops-up}
    \end{subfigure}
    \hfill
    \begin{subfigure}[b]{0.4\textwidth}
        \centering
        \includegraphics[width=\textwidth]{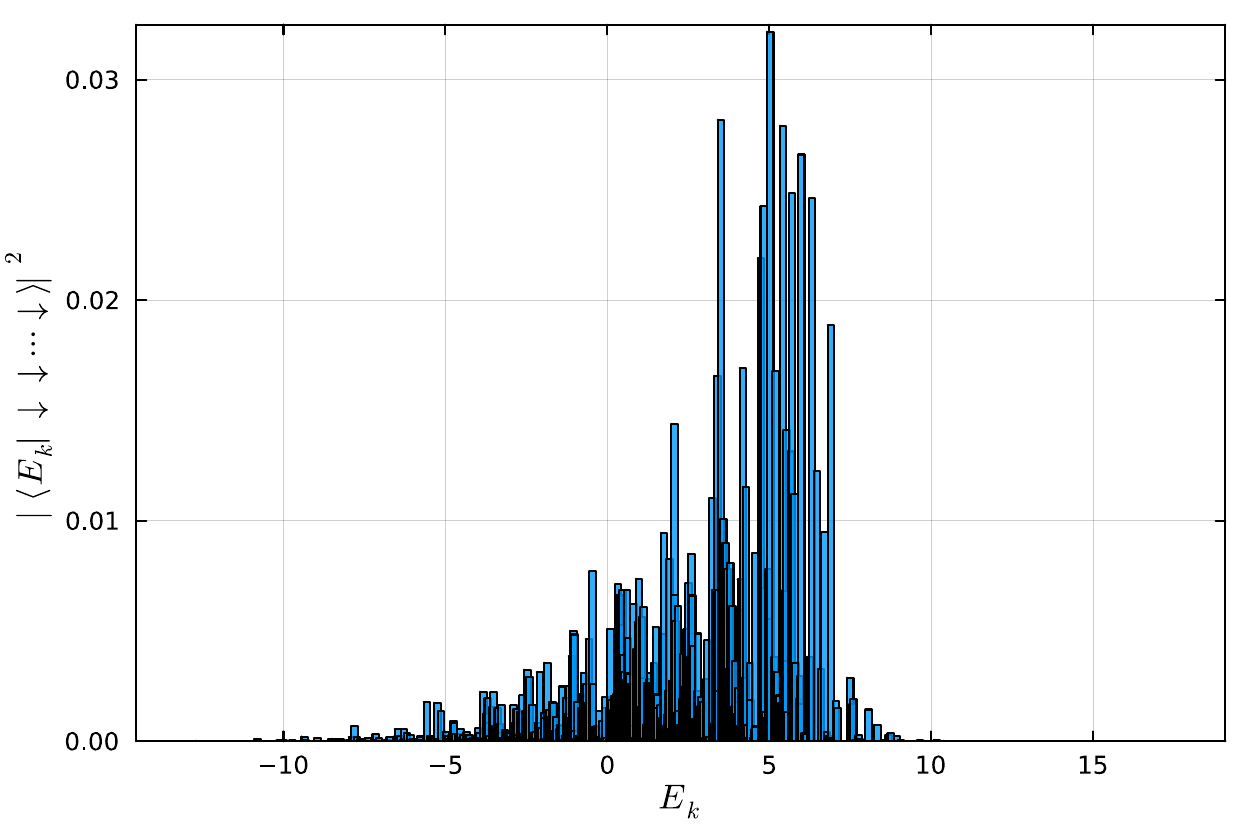}
        \caption{\justifying $\abs{\ip{E_k}{\downarrow\downarrow\cdots\downarrow}}\suptiny{0}{0}{2}$ as a function of the corresponding $E_k$.}
        \label{fig:En-pops-dw}
    \end{subfigure}
    \caption{\justifying Spectral occupation probabilities $\abs{\ip{E_k}{\psi_0}}\suptiny{0}{0}{2}$ as a function of the eigenenergies $E_k$ for a system of $N=10$ spin-$\tfrac{1}{2}$ particles. Panel (a) corresponds to the initial state \up, while panel (b) corresponds to \dw.}
    \label{fig:omega-distributions}
\end{figure*}

LFF equilibration should not be formulated only as a statement about the smallness of the mean deviation. One also needs a concentration statement that controls the likelihood that the function exhibits large fluctuations around its equilibrium value. This leads to two complementary probabilistic formulations: one in which the initial state is fixed and the time is sampled uniformly from the interval \([0,T]\), and another in which the time window is fixed and the initial state is sampled from the Haar ensemble. Theorem~\ref{thm:leakage_concentration_dir} makes this distinction explicit by combining the finite-time spectral factor with the effective dimension of the initial state and the Haar-induced distribution of energy populations.  

\begin{theorem}[Leakage concentration from the variance bound]\label{thm:leakage_concentration_dir}
Let $[0,T]$ be a fixed time window, and let the initial state
$\ket{\psi_0}=\sum_i c_i\ket{E_i}$ be drawn from the complex Haar ensemble, where $\{\ket{E_i}\}_i$ denotes the energy eigenbasis. The corresponding energy populations are $p_i=|c_i|\suptiny{0}{0}{2}$ and are distributed according to $\mathrm{Dirichlet}(1,\ldots,1)$. Then, for the spectral factor $f(\varepsilon,T)$ defined in Eq.~\eqref{eq:spectral_factor}, the following
bounds hold:\\
\noindent\emph{(A) Time probability (fixed state).}
For any fixed state $\ket{\psi_0}$, for any $\epsilon>0$ and $t\sim\mathrm{Unif}[0,T]$,
\begin{equation}\label{eq:A-time}
\Pr_t\!\Big(\,|\leakz(t)-\leakz\suptiny{0}{0}{\infty}|\ \ge\ \epsilon\,\Big)
\ \le\
\frac{f(\varepsilon,T)}{\epsilon^{2}}\,
\Bigl(1-\frac{1}{d\subtiny{0}{0}{E}}\Bigr)\,
\frac{1}{\deff\suptiny{0}{1}{2}}.
\end{equation}\\
\noindent\emph{(B) Ensemble probability (random state).}
Let $\ket{\psi_0}$ be distributed according to the Haar measure. For any fixed averaging time $T$ and any $\eta>0$, 
\begin{align}\label{eq:B-ensemble}
&\Pr_{\ket{\psi_0}}\!\Big(\,\Big\langle\big|\leakz(t)-\leakz\suptiny{0}{0}{\infty}\big|\suptiny{0}{0}{2}\Big\rangle\subtiny{0}{0}{T}\ \ge\ \eta\,\Big)
\le F(\eta,T,d\subtiny{0}{0}{E}),
\end{align}
where $F(\eta,T,d\subtiny{0}{0}{E}) = \frac{f(\varepsilon,T)}{\eta}\,
\Bigl(1-\frac{1}{d\subtiny{0}{0}{E}}\Bigr)g(d\subtiny{0}{0}{E})$, and 
$g(d\subtiny{0}{0}{E})=4(d\subtiny{0}{0}{E}+5)/(d\subtiny{0}{0}{E}+1)(d\subtiny{0}{0}{E}+2)(d\subtiny{0}{0}{E}+3)$.
\end{theorem}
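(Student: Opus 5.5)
Part (A) follows from Chebyshev's inequality in time, and part (B) from Markov's inequality over the Haar ensemble. In both cases the input is the pure-state variance bound Eq.~\eqref{eq:leakZ_bound}, obtained from Theorem~\ref{th:leakage_linearentropy} with $P=\ketbra{\psi_0}{\psi_0}$ and $r=1$.

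For (A), fix $\ket{\psi_0}$ and regard $t\sim\mathrm{Unif}[0,T]$ as a random variable. Then $\avgT{\cdot}$ is exactly the expectation with respect to this uniform measure. Chebyshev's inequality, applied to $X(t)=\leakz(t)-\leakz\suptiny{0}{0}{\infty}$, gives
\begin{equation*}
\Pr_t\big(|X(t)|\ge\epsilon\big)\le \frac{\avgT{|X(t)|^2}}{\epsilon^2}.
\end{equation*}
Inserting Eq.~\eqref{eq:leakZ_bound} into the numerator gives Eq.~\eqref{eq:A-time} immediately. No further work is needed beyond noting that $\leakz\suptiny{0}{0}{\infty}$ is the infinite-time mean. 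The finite-$T$ spectral factor $f(\varepsilon,T)$ already accounts for the mismatch between the finite-window average and the infinite-time mean.

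For (B), set $Y(\psi_0)=\avgT{|\leakz(t)-\leakz\suptiny{0}{0}{\infty}|^2}\ge 0$. Markov's inequality gives $\Pr(Y\ge\eta)\le \mathbb{E}_{\mathrm{Haar}}[Y]/\eta$. The bound Eq.~\eqref{eq:leakZ_bound} holds pointwise for every initial state. Its prefactor $f(\varepsilon,T)(1-1/d\subtiny{0}{0}{E})$ depends only on the spectrum and on $T$, not on the state. Hence
\begin{equation*}
\mathbb{E}[Y]\le f(\varepsilon,T)\Bigl(1-\frac{1}{d\subtiny{0}{0}{E}}\Bigr)\,\mathbb{E}\Bigl[\Bigl(\sum_i p_i^2\Bigr)^2\Bigr],
\end{equation*}
using $\deff^{-1}=\sum_i p_i^2$. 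The problem therefore reduces to computing a fourth moment of the populations.

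For this computation I would use the standard fact that a Haar-random vector is a normalized complex Gaussian vector. Its populations $p_i=|c_i|^2$ are normalized i.i.d.\ exponentials, hence $\mathrm{Dirichlet}(1,\ldots,1)$ on the $(d\subtiny{0}{0}{E}-1)$-simplex. Expanding
\begin{equation*}
\Bigl(\sum_i p_i^2\Bigr)^2=\sum_i p_i^4+\sum_{i\neq j}p_i^2p_j^2,
\end{equation*}
I would use the Dirichlet moment formula $\mathbb{E}[p_i^a p_j^b]=\Gamma(d)\,a!\,b!/\Gamma(d+a+b)$ with $d=d\subtiny{0}{0}{E}$. This gives
\begin{equation*}
\mathbb{E}[p_i^4]=\frac{24}{d(d+1)(d+2)(d+3)},\qquad \mathbb{E}[p_i^2p_j^2]=\frac{4}{d(d+1)(d+2)(d+3)}.
\end{equation*}
Summing the $d$ diagonal terms and the $d(d-1)$ off-diagonal terms yields
\begin{equation*}
\frac{24d+4d(d-1)}{d(d+1)(d+2)(d+3)}=\frac{4(d+5)}{(d+1)(d+2)(d+3)}=g(d\subtiny{0}{0}{E}),
\end{equation*}
which is exactly the claimed $F$.

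The only delicate point is not the algebra but the justification that Eq.~\eqref{eq:leakZ_bound} may be used uniformly over the ensemble. Specifically, $f(\varepsilon,T)$ must be state-independent, which holds because $\Ne(\varepsilon)$ and $d\subtiny{0}{0}{E}$ are spectral quantities. I would also state explicitly that the energy eigenbasis is fixed and nondegenerate, so that $\deff^{-1}=\sum_i p_i^2$ holds for every sampled state. Identifying the Haar populations with the flat Dirichlet law would be cited as standard, via the Gaussian representation.
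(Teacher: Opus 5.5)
Your proposal is correct and follows the paper's proof essentially step by step. For (A), you apply Markov's inequality to $|\leakz(t)-\leakz\suptiny{0}{0}{\infty}|^2$ under the uniform measure on $[0,T]$ and insert Eq.~\eqref{eq:leakZ_bound}. For (B), you average that pointwise bound, whose prefactor does not depend on the state, over the Haar ensemble, apply Markov's inequality with threshold $\eta$, and evaluate $\mathbb{E}[(\sum_i p_i^2)^2]=4(d_E+5)/[(d_E+1)(d_E+2)(d_E+3)]$ from the flat Dirichlet moments, exactly as in Eq.~\eqref{eq:ES2}. Keeping the probability threshold $\epsilon$ separate from the gap-window width $\varepsilon$ in $f(\varepsilon,T)$ is cleaner than the paper's write-up, which uses the same symbol for both.
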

Theorem~\ref{thm:leakage_concentration_dir} admits a natural physical interpretation that clarifies the complementary roles of parts (A) and (B). The formal proof of the theorem can be found in Appendix~\ref{app:leakage_concentration}.

Part (A) provides a \emph{local-in-time} control of the LFF dynamics: for any fixed initial state, it bounds the probability that the instantaneous LFF deviates significantly from its long-time value. (i)~States with small effective dimension, as a coherent superposition supported on only a few energy levels, may display large and persistent oscillations of $\leakz(t)$, and the bound reflects this characteristic. In contrast, (ii)~states with large effective dimension undergo strong dephasing across many frequencies, causing LFF to remain close to $\leakz\suptiny{0}{0}{\infty}$ for most times in $[0,T]$.

Part (B) provides a \emph{global-in-ensemble} statement: (iii) for Haar-random initial states, the populations follow a Dirichlet distribution sharply concentrated around $p_i \approx 1/d\subtiny{0}{0}{E}$, implying a typical effective dimension of order $d\subtiny{0}{0}{E}$. As a consequence, the time-averaged deviation $\avgT{\,|\leakz(t)-\leakz\suptiny{0}{0}{\infty}|\suptiny{0}{1}{2}\,}$ is strongly suppressed for almost all initial conditions. In contrast, (iv) rare and highly structured states (such as exact energy eigenstates or finely tuned few-level superpositions) may retain a persistent bias without violating the bound, since the theorem controls fluctuations rather than enforcing equilibration. These four classes of behavior illustrate the full landscape captured by the theorem: low-effective-dimension states may fluctuate strongly over time; high-effective-dimension states remain locally stable; atypical low-dimensional states can preserve long-time bias; and typical Haar-random states equilibrate robustly. 

Crucially, the theorem \textit{does not} imply ergodicity or ergodicity breaking: part (B) constrains temporal fluctuations but does not enforce the equality of time and ensemble averages, which may differ for specially prepared or low--effective-dimension states. Instead, parts (A) and (B) jointly show that LFF equilibration is temporally stable for high--effective-dimension states and typical within the Haar ensemble.

In the following sections, we discuss the effective dimension 
$\deff$ is not the only important aspect of equilibration on average, showing how its description depends on observable variation bounds, in the spirit of Refs.~\cite{Reimann2012,Short_2012}. 

\begin{figure*}[t]
    \centering
    \begin{subfigure}[b]{0.32\textwidth}
        \centering
        \includegraphics[width=\textwidth]{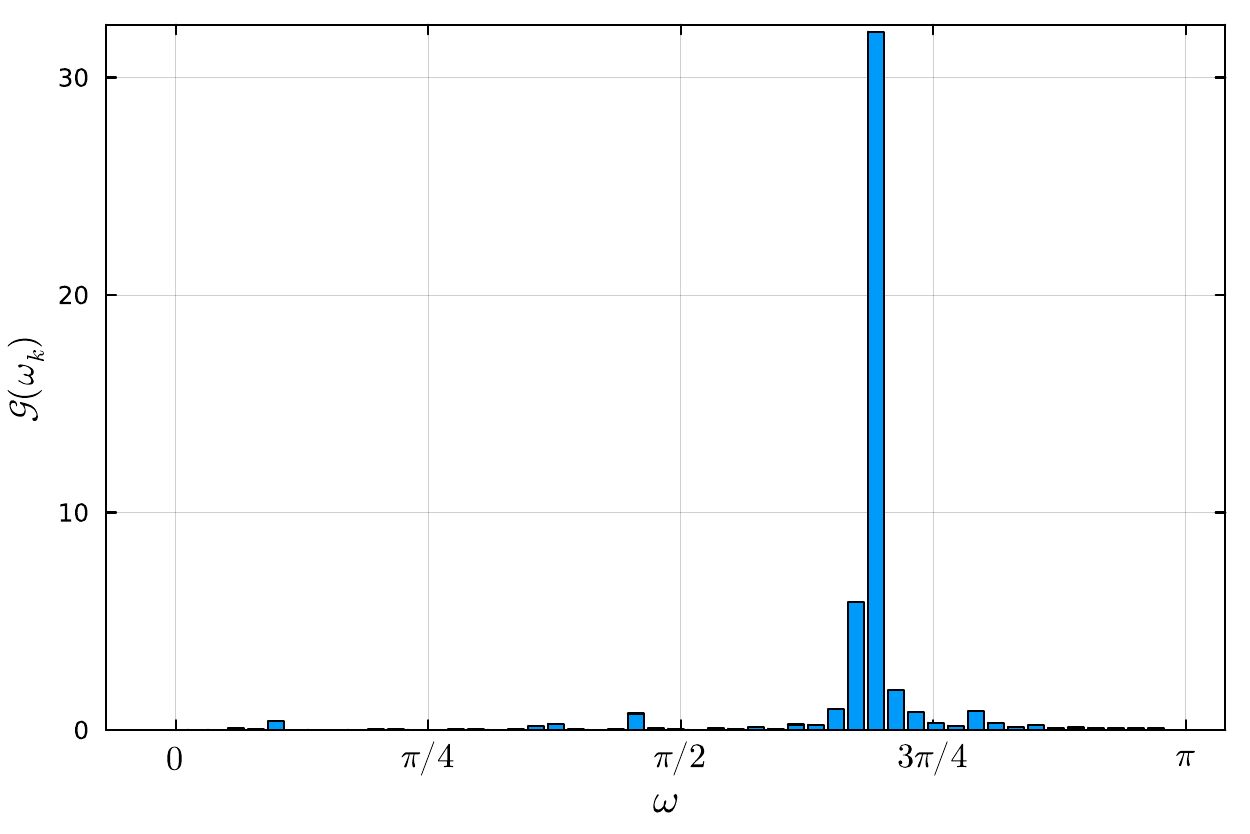}
        \caption{\justifying $\mathcal{G}(\omega_k)$ for the initial state \up,}
        \label{fig:psd-up}
    \end{subfigure}
        \hfill
    \begin{subfigure}[b]{0.32\textwidth}
        \centering
        \includegraphics[width=\textwidth]{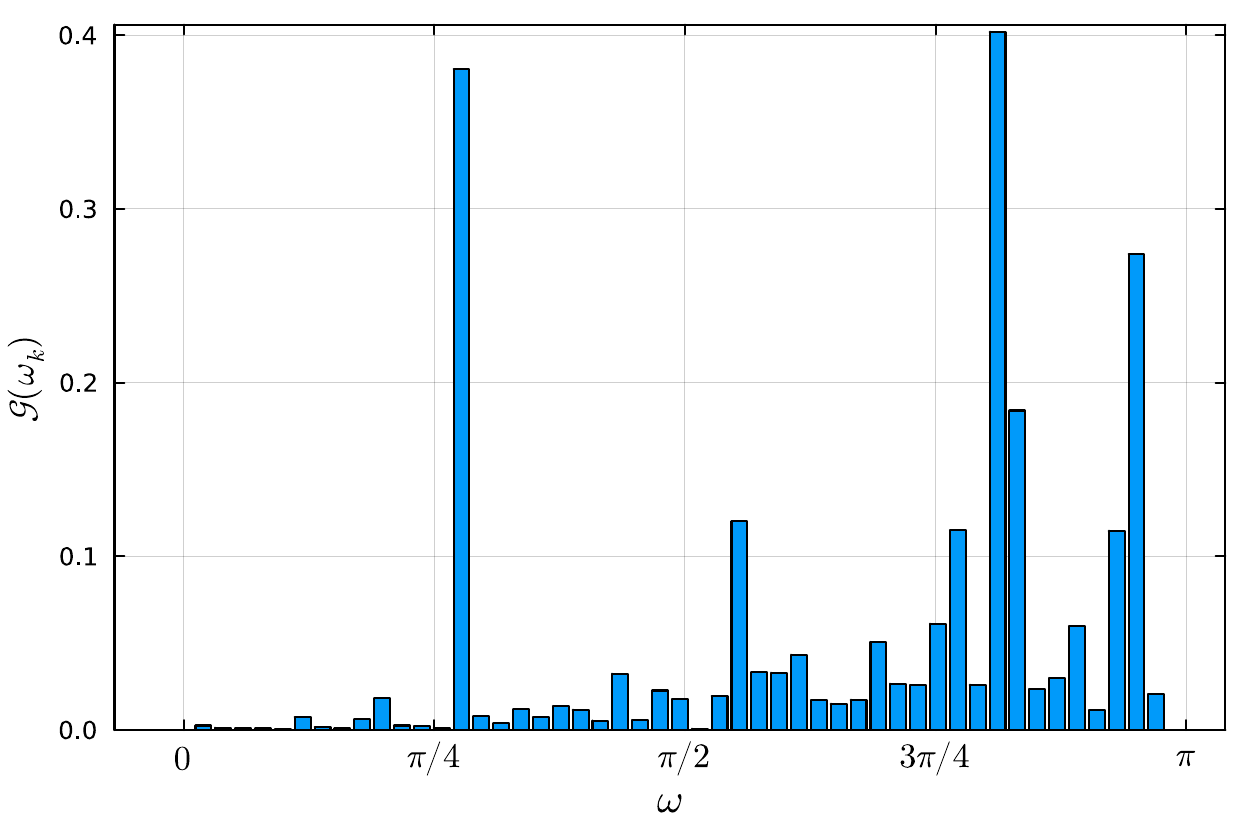}
        \caption{\justifying $\mathcal{G}(\omega_k)$ for the initial state \mypm.}
        \label{fig:psd-pm}
    \end{subfigure}
    \hfill
    \begin{subfigure}[b]{0.32\textwidth}
        \centering
        \includegraphics[width=\textwidth]{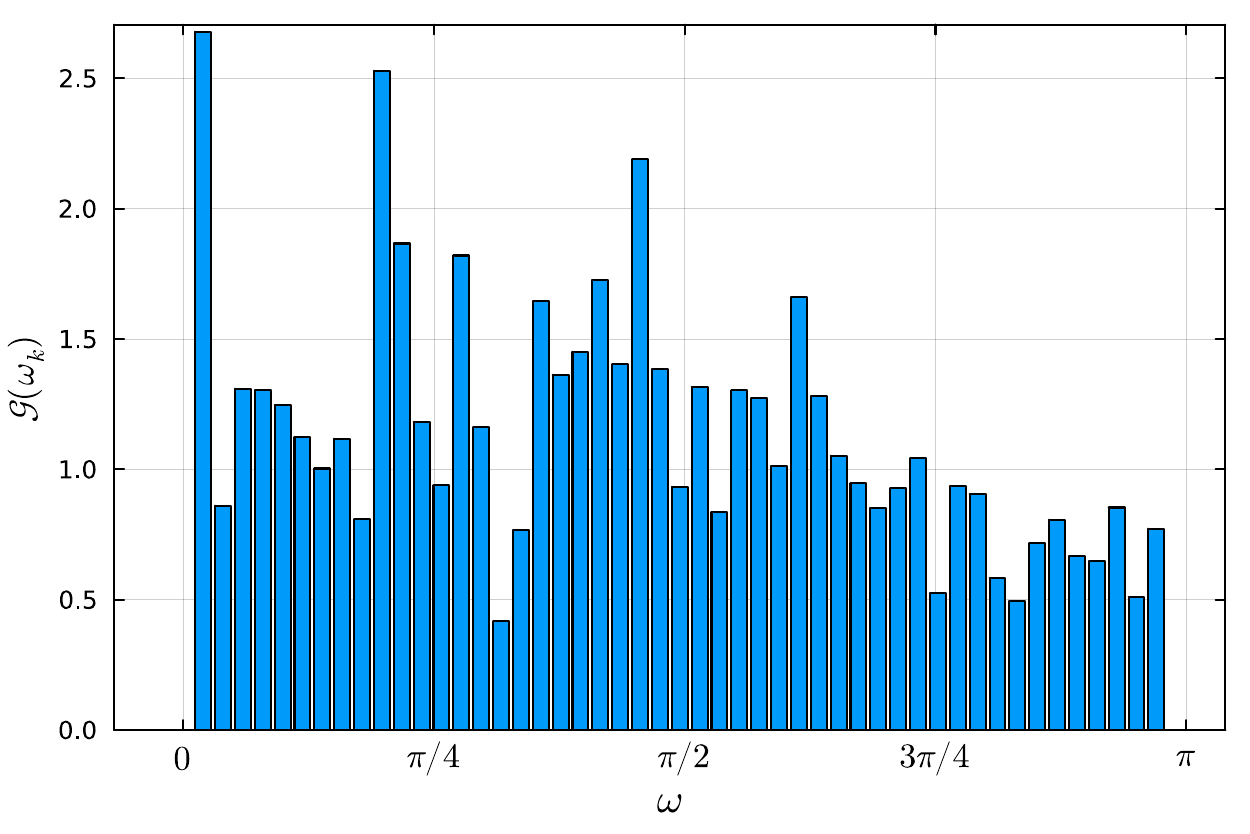}
        \caption{\justifying $\mathcal{G}(\omega_k)$ for the initial state \dw.}
        \label{fig:psd-dw}
    \end{subfigure}
    \caption{\justifying
     $\mathcal{G}(\omega_k)$ of the \(\delta\!\leakz\), for $\omega_k =
\frac{2\pi k}{N\Delta t}$, and $k=0,\dots,N-1$. Panels (a)--(c) correspond to different
    initial states, highlighting how spectral weight concentrates on (or spreads across) distinct frequency components. In the Hamiltonian eigenbasis, prominent peaks can be interpreted as dominant contributions from energy gaps $\omega_{nm} \approx E_n - E_m$ weighted by the initial-state coherences and matrix elements
    of $\rho\subtiny{0}{0}{0}$.}
    \label{fig:psd-all}
\end{figure*}
\section{Spectral Analysis}\label{sec:spectral_analysis}
\label{sec:leak_spec_power}
In this section, we address the question: {\it what spectral properties determine whether a closed quantum system equilibrates?} A useful starting point is a simple empirical observation: different initial states project onto the Hamiltonian eigenbasis in different ways. Since unitary dynamics are fully determined by the spectral decomposition of the initial state, these differences already encode important information about the subsequent dynamics. 
This dependence on the energy representation, equivalently on the frequency structure generated by the Hamiltonian, is illustrated in Fig.~\ref{fig:omega-distributions}. The figure shows the spectral occupation probabilities $\abs{\ip{E_k}{\psi_0}}\suptiny{0}{0}{2}$ for two fully polarized initial states, $\ket{\uparrow\uparrow\cdots\uparrow}$ ($\up$) and $\ket{\downarrow\downarrow\cdots\downarrow}$ ($\dw$), for the Ising-like Hamiltonian presented in detais in Appendix \ref{appsec:numerical_exploration_spectral_descriptors}. As we discuss below, the contrasting spectral profiles of the states $\up$ and $\dw$ already anticipate their distinct dynamical behavior under unitary evolution.

The difference between these two cases is structurally significant. 
The state $\up$ is strongly localized in the energy eigenbasis, with most of its weight concentrated on a small subset of eigenstates. Consequently, only a limited number of spectral components participate in the dynamics, resulting in a small effective dimension $\deff \ll d\subtiny{0}{0}{E}$. In contrast, the state $\dw$ exhibits a much broader spectral distribution, spreading its population across a large portion of the spectrum. This broader participation activates many more eigenstates, leading to a substantially larger effective dimension. This simple observation already anticipates the system's dynamical behavior. States that are spectrally concentrated involve only a few independent dynamical frequencies, which typically produce persistent recurrences and quasi-periodic behavior. Spectrally delocalized states, on the other hand, activate a large set of frequencies whose phase-interferences occurring in Eq.~\eqref{eq:cosine_deltaL} promote phase scramblings and suppress temporal fluctuations.
More formally, the spectral support of the initial state determines which Hamiltonian eigenstates actively contribute to the dynamics. This structure fixes the set of energy gaps $\omega_{nm}=E_n-E_m$ that appear in the unitary evolution. The gap-resolved structure is shown explicitly in Fig.~\ref{fig:psd-all}. When the spectral distribution is concentrated, only a limited set of gaps contributes, yielding a sparse set of active frequencies. When the spectral distribution is broad, a much larger set of gaps becomes relevant, producing a dense frequency spectrum. These observations motivate analyzing equilibration directly in terms of the dynamics' active frequency components. Temporal fluctuations originate from coherent contributions associated with the energy gaps $\omega_{nm}$. Expressing the dynamics in this frequency representation, therefore, provides direct access to the spectral mechanisms governing equilibration. In the following subsection, we reformulate the fluctuation dynamics in the frequency domain and analyze the distribution of the active gaps that control the equilibration process.

\subsection{Spectral Power Density of LFF}
The quantity $\delta\!\leakz(t)$, presented in Eq.~\eqref{eq:cosine_deltaL}, is a real, zero-mean, and time-stationary signal fully determined by the Bohr frequencies $\omega_{nm} = E_n - E_m$ and the pairwise weights 
$p_n p_m$, which encodes the overlap of the initial state with the 
energy eigenbasis. As the amount of cosine functions in the sum results in a more periodic or more flat behavior of LFF, it is interesting to study the Fourier transform of $\delta\!\leakz(t)$. Each term $\cos(\omega_{nm} t)$ contributes 
two peaks at $\pm\omega_{nm}$, so the Fourier transform directly maps the \textit{Bohr frequencies} $\omega_{nm} = E_n - E_m$ of the system, revealing the spectral structure of the Hamiltonian.  The weights $p_n p_m$ modulate the amplitude of each frequency component, revealing which gaps dominate the dynamics and, therefore, the timescale of equilibration. Substituting  the explicit expression and using 
$\cos(\omega_{nm}t) = \frac{1}{2}(e^{i\omega_{nm}t} + e^{-i\omega_{nm}t})$, the Fourier transform, for an energy shell $\omega$,
\begin{equation}
    \delta\!\leakz(\omega) 
    = \int_{-\infty}^{\infty} \delta\!\leakz(t)\, e^{-i\omega t}\, dt, 
\end{equation}
resulting
\begin{equation}\label{eq:fourier_deltaL}
    \delta\!\leakz(\omega) 
    = 2\pi \sum_{n<m} p_n p_m 
    \left[\delta(\omega - \omega_{mn})\right],
\end{equation}
that is, a weighted sum of Dirac deltas located at each Bohr frequency 
$\pm\omega_{nm}$, with weights $p_n p_m$ directly determined by the 
probability distribution of the initial state over the energy eigenbasis. 
For numerical simulations purpose the signal $\delta \leakP(t)$ is sampled at discrete times $t_j=j\Delta t$, with $j=0,\dots,N-1$, resulting in an energy shell $\omega\in[0,\pi)$ split in $N\delta t$ elements. As presented in Fig.~\ref{fig:psd-all}, the discrete Fourier transform of the LFF provides the spectral amplitudes, which read 
\begin{equation}
\label{eq:LeakFFT}
\delta\!\leakz(\omega_k)
=
\sum_{j=0}^{N-1}
\delta\leakz(t_j)
e^{-i\omega_k t_j},
\end{equation}
where the corresponding discrete angular frequencies are
\begin{equation}\label{eq:omegak}
\omega_k =
\frac{2\pi k}{N\Delta t},
\qquad
k=0,\dots,N-1 .
\end{equation}
From $\delta\!\leakz(\omega_k)$, in Eq.~\eqref{eq:LeakFFT},  we define the discrete \emph{Spectral Power Density} (SPD) of LFF as the discrete corresponding spectral weight carried by the transition \((m,n)\) 
\begin{equation}
\label{eq:SPD_definition}
\mathcal{G}\subtiny{0}{0}{\leak}(\omega_k)
=
\left|
\delta\leakz(\omega_k)
\right|\suptiny{0}{1}{2}\! = p_n\suptiny{0}{0}{2} p_m\suptiny{0}{0}{2}.
\end{equation}
Within a window centered at $\omega_k$, in Eq.~\eqref{eq:omegak}, the \emph{cumulative spectral weight} is defined as
\begin{equation}
\label{eq:cumulative_weight_correct}
W\subtiny{0}{0}{\leak}(\omega_k)
=
\sum_{\omega\subtiny{0}{0}{mn}\in \Omega(\omega_k)}
\mathcal{G}\subtiny{0}{0}{\leak}(\omega_{m,n}),
\end{equation}
where $ \Omega(\omega_k) = \left\{m\neq n,\quad |\omega_{mn}-\omega_k|\le\frac{\omega}{2}\right\}$.
$W\subtiny{0}{0}{\leak}(\omega_k)$ quantifies the total fluctuation power carried by transitions whose energy gaps fall within the specified frequency window $\omega_k$. It also reveals whether a spectral peak originates from a single dominant transition or from the collective contribution of many eigenstate pairs sharing the same gap.

\subsection{Spectral effective dimension}
Under the assumption of non-degenerate energy gaps, each pair $n<m$ defines a unique positive Bohr frequency. Hence, the normalized SPD represents a probability distribution with elements
\begin{equation}
\label{eq:pow_distribution_LFF}
p\subtiny{0}{0}{\leak}(\omega_{nm})
= \frac{\mathcal{G}\subtiny{0}{0}{\leak}(\omega_k)}
{\displaystyle\sum_j {\mathcal{G}}\subtiny{0}{0}{\leak}(\omega_j)} =
\frac{p_n\suptiny{0}{0}{2}p_m\suptiny{0}{0}{2}}{\displaystyle\sum_{i<j}p_i\suptiny{0}{0}{2}p_j\suptiny{0}{0}{2}},
\qquad n<m.
\end{equation}
In complete analogy with the standard effective dimension
\(
\deff = 1/\sum_n p_n\suptiny{0}{0}{2}
\),
we define the \emph{spectral effective dimension} as
\begin{equation}
\label{eq:def_deffspec}
\deffspec
=
\frac{1}
{\displaystyle
\sum_k
p\subtiny{0}{0}{\leak}(\omega_k)\suptiny{0}{1}{2}}.
\end{equation}    
It measures the inverse participation ratio of the normalized spectral weights and therefore quantifies the effective number of dynamically active Bohr frequencies contributing to the LFF  fluctuations.
Large values of $\deffspec$ correspond to a broad distribution of spectral weight over many distinct gaps, indicating that many transitions contribute with comparable strength. In this regime, strong phase mixing occurs, leading to efficient dephasing and suppressed revivals.
Conversely, small values of $\deffspec$ indicate that the dynamics are dominated by a small set of frequency components. In such cases, the LFF dynamics are effectively governed by a few Bohr frequencies, producing quasi-periodic oscillations and pronounced revivals.

The quantity $\deffspec$ is, therefore, structurally analogous to the standard effective dimension $\deff$, but acts in the frequency domain rather than in the Hamiltonian eigenbasis. While $\deff$ measures the spectral delocalization of the initial state in the Hamiltonian eigenbasis, $\deffspec$ quantifies the dynamical delocalization of the LFF fluctuations across the spectrum of energy gaps. One can explicitly express $\deffspec$ in function of the overlaps $p_n=|\braket{\psi_0}{E_n}|^2$ substituting Eq.~\eqref{eq:pow_distribution_LFF} into the definition of $\deffspec$ in Eq.~\eqref{eq:def_deffspec}. As shown in Appendix~\ref{appsec:deffdyn}, Proposition~\ref{prop:deffspec}, yields the following expression
\begin{equation}
\label{eq:deffspec_exact_general_moments}
\deffspec
=
\frac{
\left(
\frac{1}{\deff\suptiny{0}{0}{2}}-\sum_n p_n\suptiny{0}{1}{4}
\right)\suptiny{-1}{-1}{2}
}{
2\left[
\left(\sum_n p_n\suptiny{0}{1}{4}\right)\suptiny{0}{0}{2}-\sum_n p_n\suptiny{0}{1}{8}
\right]
}.
\end{equation}
It expresses that $\deffspec$ is a function of the probability distribution $\{p_n\}$ dependent on $\deff$ together with higher-order moments. Indeed, recalling that the standard effective dimension is determined solely by the second moment, $\deff=(\sum_n p_n^2)^{-1},$ we see that the spectral effective dimension depends additionally on the fourth and eighth moments of the distribution, $\sum_n p_n^4,$ and $\sum_n p_n^8.$ This structure shows that $\deffspec$ refines the information contained in $\deff$ by incorporating higher-order statistical features of the probability distribution. 

\subsection{Shannon power entropy}
The quantity $p\subtiny{0}{0}{\leak}(\omega_k)$, in Eq.~\eqref{eq:pow_distribution_LFF}, therefore, represents the probability that the LFF fluctuations are associated with an energy transition with an energy gap in $\omega_k\pm \omega$, associated with an energy shell $\omega$. Equivalently, it quantifies how strongly the corresponding Bohr frequency participates in the dynamical redistribution of probability outside the reference subspace. From this distribution, we define the discrete Shannon power entropy 
\begin{equation}
\label{eq:Shannon_power_entropy}
\Hpow
=
-
\sum_k
p\subtiny{0}{0}{\leak}(\omega_k)
\ln
p\subtiny{0}{0}{\leak}(\omega_k).
\end{equation}    
$\Hpow$ expresses the spectral behavior of the evolution according to a given distribution $p\subtiny{0}{0}{\leak}(\omega_k)$.
A small value of $\Hpow$ indicates that a few sharp frequencies, characteristic of quasi-periodic behavior, dominate the dynamics. A large value signals broad participation of many coupling gaps with comparable strength, indicating strong phase mixing and enhanced equilibration. 
\begin{figure*}
    \centering
    \includegraphics[width=1.01\linewidth]{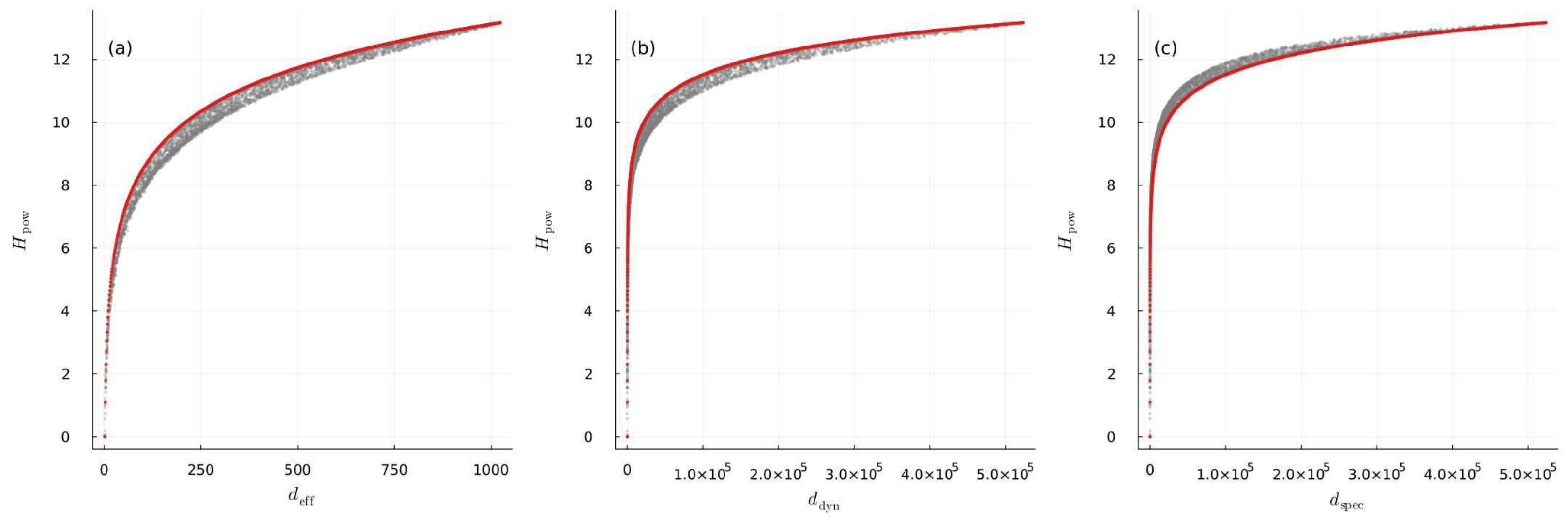}
    \caption{\justifying 
Power entropy $H_{\mathrm{pow}}$ as a function of different effective dimensionalities for random quantum states generated from an Ising-type Hamiltonian with $N=10$ spins, corresponding to a Hilbert-space dimension $d_E=2^{10}$. A total of $5000$ random population states were generated in the energy eigenbasis, producing a broad coverage of effective dimensions across the available state space. The red curve corresponds to the $K$-HUB model, defined by uniform population distributions over $K$ energy levels, for which $H_{\mathrm{pow}}$ is computed analytically as a function of $K\in [2,d_E]$. This curve provides a structured reference that connects entropy growth to controlled support size. The sampled ensemble spans the ranges  $1 \lesssim \deff \lesssim 10^3$, $1 \lesssim \deffdyn \lesssim 5\times10^5$, 
$1 \lesssim \deffspec \lesssim 5\times10^5$, and $0 \le H_{\mathrm{pow}} \lesssim 13.2$. Panels (a)–(c) compare how the entropy correlates with different notions of effective dimensionality.}
    \label{fig:hpow_deffs}
\end{figure*}
\subsubsection{Hamiltonian Unbiased Basis}\label{sec:leak_HUB}
A Hamiltonian Unbiased Basis (HUB) is an orthonormal family 
$\{\ket{\varphi_{\alpha}}\}_{\alpha=0}^{d\subtiny{0}{0}{E}-1}$ that is mutually unbiased with the
energy eigenbasis $\{\ket{E_n}\}_{n=0}^{d\subtiny{0}{0}{E}-1}$ of the Hamiltonian $H$:
\begin{equation}\label{eq:huo_rho_0}
    |\braket{\varphi_{\alpha}}{E_n}|\suptiny{0}{0}{2} = \frac{1}{d\subtiny{0}{0}{E}}.
\end{equation}
Hamiltonian Unbiased Observables (HUOs) are operators $O=\sum_{\alpha}
o_{\alpha}\ketbra{\varphi_{\alpha}}{\varphi_{\alpha}}$ whose eigenstates form a HUB.  A key property is that their equilibrium expectation value coincides with the
microcanonical prediction~\cite{Anza2017,Anza2018}:
\begin{equation}
    \Tr[O\,\eqst] = \frac{1}{d\subtiny{0}{0}{E}}\sum_{\alpha} o_{\alpha}.
\end{equation}
If the initial state is itself an element of a HUB, $\ket{\psi_0}=\ket{\varphi_{\alpha}}$,
hence the effective dimension has maximum value, $\deff=d\subtiny{0}{0}{E}$, and LFF becomes
\begin{equation}
    \leak\subtiny{0}{0}{\mathrm{HUB}}(t)
    = 1 - \frac{1}{d\subtiny{0}{0}{E}\suptiny{0}{1}{2}}\sum_{m,n} e^{-i(E_n-E_m)t}.
\end{equation}
Using the definition of the spectral form factor in Eq.~\eqref{eq:SFF}, it results
\begin{equation}
    \leak\subtiny{0}{0}{\mathrm{HUB}}(t)
    = 1 - \mathcal{K}(t).
\end{equation}
Therefore, at equilibrium,
\begin{equation}
    \leak\subtiny{0}{0}{\infty} = 1 - \frac{1}{d\subtiny{0}{0}{E}}.
\end{equation}
This shows that LFF for a HUB initial state is insensitive to the detailed energy distribution: only the dimension of the Hilbert space matters~\cite{Scarpa2025}.

In the case of Eq.~\eqref{eq:huo_rho_0}, the effective dimension achieves its largest value $\deff = d\subtiny{0}{0}{E}$, resulting in the fastest equilibration occurring in the ETH regime. It can be expressed by the LFF variation bound, in Eq.~\eqref{eq:leak_bound}, 
\begin{align}
  &\avgT{\lvert \delta\leak\subtiny{0}{0}{\mathrm{HUB}}(t)\rvert\suptiny{0}{0}{2}}
  \;\le\;
  \displaystyle f(\varepsilon,T)\,\frac{d\subtiny{0}{0}{E}-1}{d\subtiny{0}{0}{E}\suptiny{0}{0}{3}},
  \label{eq:leakage-variance_HUO}
\end{align}
where $f(\varepsilon,T)$ is the spectral factor, defined in Eq.~\eqref{eq:spectral_factor}. In the context of quantum chaos and ETH, it implies the same upper bound for the spectral form factor variation, in Eq.~\eqref{eq:SFF}, as
\begin{align}
  \avgT{\lvert \delta\leak\subtiny{0}{0}{\mathrm{HUB}}(t)\rvert\suptiny{0}{0}{2}}
 =
 \avgT{\lvert \delta \mathcal{K}(t)\rvert\suptiny{0}{0}{2}}.
\end{align}

In Fig.~\ref{fig:hpow_deffs}, we illustrate the distribution of $\Hpow$ in function of $\deff$, $\deffdyn$ and $\deffspec$, for $5000$ states unfirmelly sampled in a $2^{10}$ dimensional Hilbert space, searched in a paramatrization of $\deff\in[2,128]$. The red curve corresponds to the extremal value related to the $K-$HUB initial state, satisfying
\begin{equation}
    \braket{\varphi_{\alpha}}{E_n}|\suptiny{0}{0}{2} = \frac{1}{K},
\end{equation}
for a given integer $K\in[2,d_E]$, and the HUB states are recovered as an extremal case for $K=d_E$.

\subsubsection*{Maximum Shannon power entropy.}
The discrete Shannon entropy of the spectral power distribution quantifies how widely the spectral weight is distributed across the accessible transition frequencies. In a finite-dimensional system with a non-degenerate spectrum, the Bohr frequencies come in pairs $\pm\omega$ arising from unordered pairs of distinct energies. It is, therefore, natural to work with \emph{non-oriented} gaps, or equivalently with the spectrum restricted to $\omega>0$. For an effective Hamiltonian $H=\sum_n E_n \ketbra{E_n}{E_n}$, with no degeneracies in its energy levels, and all positive gaps are distinct; each unordered pair $\{n,m\}$ with $n<m$, defines a unique positive gap. Hence, the number of distinct positive gaps is $N_{\mathrm{gaps}}= \binom{d\subtiny{0}{0}{E}}{2}.$ For a HUB state, characterized by $|c_n|\suptiny{0}{0}{2}=1/d\subtiny{0}{0}{E}$, each unordered pair $\{n,m\}$ contributes equally at the transition level, therefore, the aggregated spectral weight is uniformly distributed over the distinct positive gaps, and one has $p\subtiny{0}{0}{\leak}(\omega_k) = \frac{1}{N_{\mathrm{gaps}}} =
\frac{2}{d\subtiny{0}{0}{E}(d\subtiny{0}{0}{E}-1)},
\quad \omega_k>0.$ Hence, the Shannon power entropy is maximized under this gap structure, since the spectral power distribution is uniform,
\begin{align}
H\suptiny{0}{0}{{\max}}_{\subtiny{0}{0}{\mathrm{pow}}}
&= 
\ln N_{\mathrm{gaps}}
=
\ln\!\left(\frac{d\subtiny{0}{0}{E}(d\subtiny{0}{0}{E}-1)}{2}\right),
\nonumber\\
&=
2\ln d\subtiny{0}{0}{E}+\ln\!\Big(1-\frac{1}{d\subtiny{0}{0}{E}}\Big)-\ln 2.
\end{align}
For large $d\subtiny{0}{0}{E}$, this becomes $H\suptiny{0}{0}{{\max}}_{\subtiny{0}{0}{\mathrm{pow}}}
\simeq
2\ln d\subtiny{0}{0}{E}-\ln 2
=
\ln\!\Big(\frac{d\subtiny{0}{0}{E}\suptiny{0}{1}{2}}{2}\Big).$ For a HUB state one has $\deff=d\subtiny{0}{0}{E}$, hence $ H\suptiny{0}{0}{{\max}}_{\subtiny{0}{0}{\mathrm{pow}}}
\simeq 2\ln \deff-\ln 2.$

\begin{figure*}[t]
\centering
\begin{subfigure}{0.49\textwidth}
    \centering
    \includegraphics[width=\linewidth]{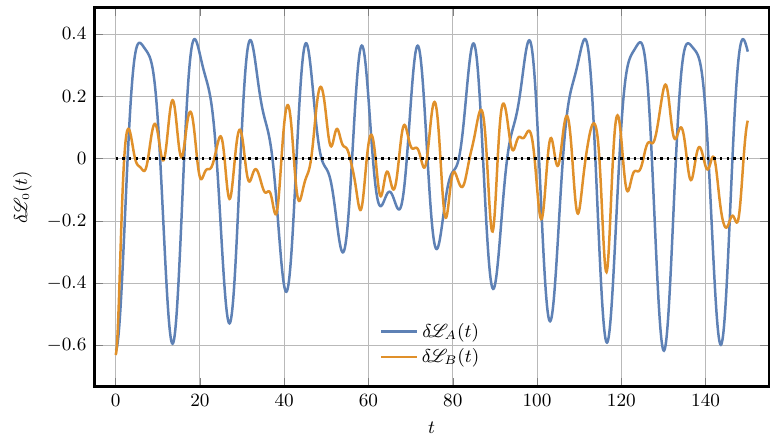}
    \caption{Time-domain fluctuation signal of fluctuations $\delta\!\leakz(t)$.}
    \label{fig:delta_leak_time}
\end{subfigure}
\hfill
\begin{subfigure}{0.49\textwidth}
    \centering
    \includegraphics[width=\linewidth]{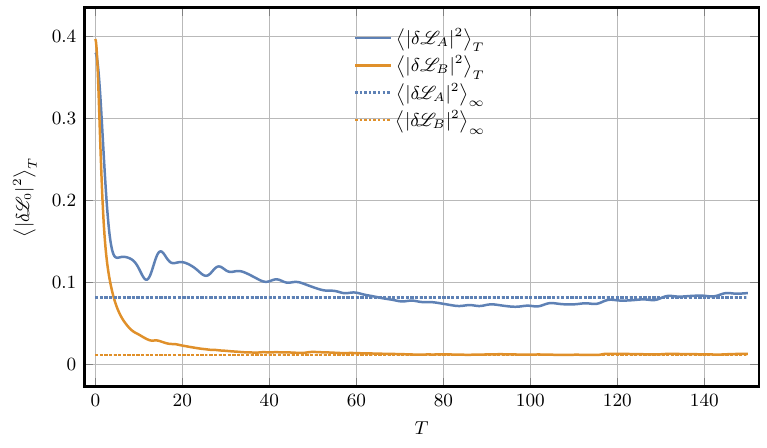}
    \caption{Running time average of $\langle \delta\leakz\rangle_T$.}
    \label{fig:delta_leak_running_avg}
\end{subfigure}
\caption{\justifying \textbf{Controlled states with nearly identical effective dimension and distinct dynamical effective dimension.}
The states $A$ and $B$ were chosen so that their effective dimensions are nearly identical, whereas their dynamical effective dimensions differ substantially.
For state $A$: $\deff(A)\approx 2.606$, $\deffdyn(A)\approx 2.330$, and $\leakz\suptiny{0}{0}{\infty}(A)\approx 0.616$. For state $B$: $\deff(B)\approx 2.700\approx\deff(A)$, $\deffdyn(B)\approx 16.716$, and $\leakz\suptiny{0}{0}{\infty}(B)\approx 0.630$. Panel (a) shows the centered LFF fluctuation
$\delta\!\leakz(t)$. State $A$, which has a smaller $\deffdyn$, exhibits stronger and more persistent oscillations around the equilibrium. Panel (b) shows the running time average
$\avgT{|\delta\!\leakz|^2}$, confirming that the long-time fluctuation scale is larger for state $A$ than for $B$.
The comparison illustrates that $\deff$ fixes the equilibration value of LFF, while $\deffdyn$ controls the suppression of temporal fluctuations. For more details, we invite the reader to Appendix~\ref{secapp:controlled_examples}.}
\label{fig:lff_combined}
\end{figure*}

\subsubsection*{Minimum Shannon power entropy.}
At the opposite extreme, $H{\subtiny{0}{0}{\mathrm{pow}}}$ is minimized when the whole spectral power is concentrated on a single positive energy gap. This occurs, for instance, when all positive transition frequencies that contribute to the spectrum coincide at a single value $\omega_0>0$. In this case, the relevant gap degeneracy is maximal, and the spectral power distribution satisfies
\begin{equation}
p\subtiny{0}{0}{\leak}(\omega_0)=1,
\quad
p\subtiny{0}{0}{\leak}(\omega)=0,
\quad
\text{for all } \omega\neq \omega_0 .
\end{equation}
Therefore, the spectral power distribution is perfectly localized on one frequency, and the Shannon power entropy attains its exact discrete minimum,
\begin{equation}
H\suptiny{0}{0}{\min}_{\subtiny{0}{0}{\mathrm{pow}}}=0.
\end{equation}
This is the discrete analog of a monochromatic SPD: all the relevant oscillatory power is carried by a single Bohr frequency. Hence, in the present discrete formulation, the lower bound is well-defined and attained precisely by a completely localized spectral power distribution.

It is important, however, not to overinterpret this statement as a generic claim about the full Hamiltonian spectrum. The condition above only requires that all positive gaps contributing to nonzero populations coincide at $\omega_0$. It does not require that all energy-level spacings of the Hamiltonian be identical, nor that the full set of spectral gaps have maximal degeneracy. Thus, the relevant degeneracy is that of the active gap sector, selected jointly by the initial state, its projective subspace, and the Hamiltonian dynamics.

\begin{figure*}[t!]
\centering
\begin{subfigure}{0.49\textwidth}
    \centering
    \includegraphics[width=\linewidth]{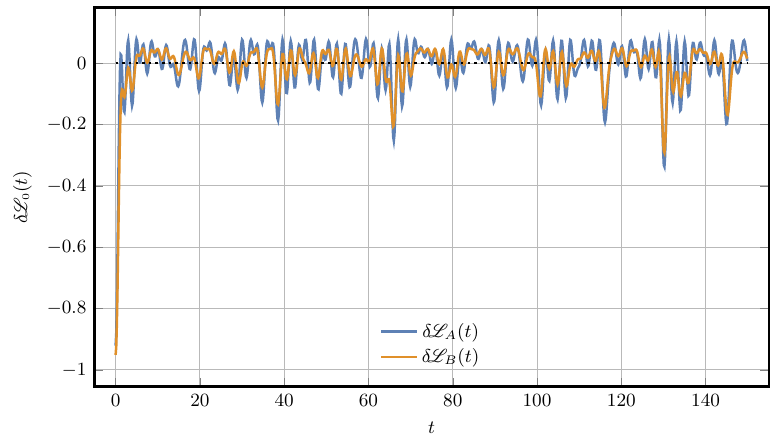}
    \caption{Time-domain fluctuation signal of fluctuations $\delta\!\leakz(t)$.}
    \label{fig:ex2_delta_leak_time}
\end{subfigure}
\hfill
\begin{subfigure}{0.49\textwidth}
    \centering
    \includegraphics[width=\linewidth]{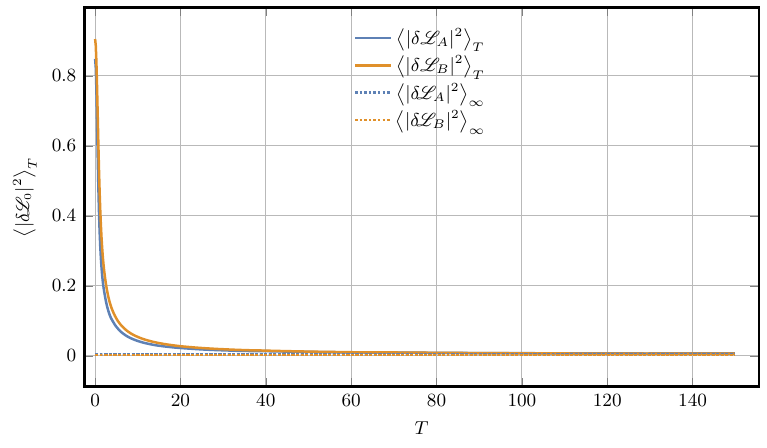}
    \caption{Running time average of $\langle \delta\leakz\rangle_T$.}
    \label{fig:ex2_delta_leak_running_avg}
\end{subfigure}
\caption{\justifying \textbf{Controlled states with comparable effective dimensions and distinct spectral effective dimensions.}
States $A$ and $B$ are chosen to have comparable equilibraton value of LFF, and comparable global fluctuation scales, while displaying substantially different spectral organization.
For state $A$: $\deff(A)\approx 12.687$, $\deffdyn(A)\approx 110.960$, $\deffspec(A)\approx 33.410$, and $\leakz\suptiny{0}{0}{\infty}(A)\approx 0.921$. For state $B$: $\deff(B)\approx 20.525$, $\deffdyn(B)\approx 201.712$, $\deffspec(B)\approx 168.237$, and $\leakz\suptiny{0}{0}{\infty}(B)\approx 0.951$.
Panel (a) shows the centered LFF fluctuation
$\delta\!\leakz(t)$, while panel (b) shows the running time average
$\avgT{|\delta\!\leakz|^2}$. The two signals have nearby long-time LFF values and comparable fluctuation scales, despite markedly different values of $\deffspec$. This shows that $\deffspec$ and $\Hpow$ capture the fine spectral organization of the LFF fluctuations, beyond the information contained in $\deff$ and $\deffdyn$. For more details, we invite the reader to Appendix~\ref{secapp:controlled_examples}.}
\label{fig:example2}
\end{figure*}
\section{Equilibration diagnostics}\label{sec:diagnosing_equilibration}
In this section, we discuss how the Leakage Fidelity Function can provide a diagnostic for equilibration on average. We introduce spectral descriptors that separate three distinct layers of equilibration: the equilibrium LFF level, the amplitude of temporal fluctuations, and the signal's spectral complexity. We define the dynamical effective dimension as the quantifier of the size of fluctuations in the signal of $\avgT{\delta\leakz(t)}$. 

\subsection{Dynamical effective dimension}

The cosine decomposition in Eq.~\eqref{eq:cosine_deltaL} suggests that the fluctuation signal $\delta\!\leakz(t)$ is a superposition of harmonic modes with amplitudes $|2p_n p_m|$. This naturally induces a normalized distribution over pairs $(n,m)$ with $n<m$, defined by
\begin{equation}
\label{eq:q_amp_def_clean}
\qnmdyn
=
\frac{2p_n p_m}{1-\sum_k p_k\suptiny{0}{0}{2}},
\qquad n<m,
\end{equation}
where we used $\sum_{n<m}2p_n p_m = 1-\sum_n p_n\suptiny{0}{0}{2}$. By construction,
\begin{equation}
\sum_{n<m} \qnmdyn = 1.
\end{equation}

We can define a participation number associated with the normalized amplitude weights \(\{\qnmdyn\}_{n<m}\). This quantity is the inverse participation ratio of the mode-weight distribution and therefore, quantifies not only how many cosine modes contribute appreciably to the dynamics, but also how uniformly their weights are distributed. The \textit{dynamical effective dimension} is defined as
\begin{equation}\label{eq:deffspecamp_clean}
\deffdyn
:=
\frac{1}{\sum_{n<m}\left(\qnmdyn\right)\suptiny{0}{0}{2}},
\end{equation}
where \(\qnmdyn\) is the normalized weight associated with the cosine mode \(\cos(\omega_{nm}t)\). The derivation of its role in the variance and fluctuations of LFF is presented in
Appendix~\ref{appsec:deffdyn}. In particular, $\deffdyn$ is maximized when all modes contribute equally, and decreases when a few modes with large weight dominate the signal. The following theorem states that this quantity directly controls the magnitude of temporal fluctuations.

\begin{theorem}[Exact variance of LFF]
\label{thm:LFF_variance_clean}
Assuming non-degenerate energy gaps, the long-time average variance of LFF is
\begin{equation}\label{eq:variancedeff_dyn}
\avginfty{|\leakz(t)-\leakz\suptiny{0}{0}{\infty}|\suptiny{0}{0}{2}}
=
\frac{\left(1-\frac{1}{\deff}\right)\suptiny{-1}{-1}{2}}{2\,\deffdyn}.
\end{equation}
\end{theorem}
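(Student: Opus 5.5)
Starting from the cosine-only representation of the fluctuation signal in Eq.~\eqref{eq:cosine_deltaL}, $\delta\!\leakz(t)=-2\sum_{n<m}p_np_m\cos(\omega_{nm}t)$, I square it and take the infinite-time average term by term. Squaring produces the double sum
\begin{equation*}
|\delta\!\leakz(t)|^2=4\sum_{n<m}\sum_{k<l}p_np_mp_kp_l\cos(\omega_{nm}t)\cos(\omega_{kl}t),
\end{equation*}
and I rewrite each product as $\tfrac12[\cos((\omega_{nm}-\omega_{kl})t)+\cos((\omega_{nm}+\omega_{kl})t)]$.

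The second step uses the standing assumption of non-degenerate energy gaps. It means that $\omega_{nm}=\omega_{kl}$ with $n<m$, $k<l$ forces $(n,m)=(k,l)$. Since $n<m$ and $k<l$ give positive-oriented gaps (up to an overall sign convention), the sum $\omega_{nm}+\omega_{kl}$ never vanishes. Also $\omega_{nm}\neq 0$ by non-degeneracy of levels, which is implied. The average $\avginfty{\cos(\Omega t)}$ is $1$ if $\Omega=0$ and $0$ otherwise. Hence only the diagonal terms survive, each contributing $\tfrac12$, and I get
\begin{equation*}
\avginfty{|\delta\!\leakz(t)|^2}=2\sum_{n<m}p_n^2p_m^2 .
\end{equation*}

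The final step is a rewriting in terms of $\deff$ and $\deffdyn$. By Eq.~\eqref{eq:q_amp_def_clean}, $p_np_m=\tfrac12\qnmdyn\,(1-\sum_kp_k^2)=\tfrac12\qnmdyn(1-1/\deff)$. Substituting gives
\begin{equation*}
2\sum_{n<m}\tfrac14(\qnmdyn)^2(1-1/\deff)^2=\frac{(1-1/\deff)^2}{2}\sum_{n<m}(\qnmdyn)^2,
\end{equation*}
which equals $(1-1/\deff)^2/(2\deffdyn)$ by the definition in Eq.~\eqref{eq:deffspecamp_clean}.

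The only delicate point is the second step: justifying that cross terms vanish. This requires stating precisely the non-degenerate-gap condition ($E_n-E_m=E_k-E_l\neq0$ implies $n=k$, $m=l$). It also requires noting the sign-orientation issue, namely that no $\omega_{nm}+\omega_{kl}$ with ordered pairs can vanish. That holds because $\omega_{nm}=-\omega_{kl}$ would mean $\omega_{nm}=\omega_{lk}$, forcing $(n,m)=(l,k)$, which contradicts $n<m$, $k<l$. Everything else is algebra, and the interchange of the finite sum with the limit $T\to\infty$ is trivial since the sum is finite.
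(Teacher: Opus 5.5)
Your proof is correct and follows essentially the same route as the paper's: square the cosine-only signal, use non-degenerate gaps so that only the diagonal time averages survive, giving $\avginfty{|\delta\!\leakz(t)|^2}=2\sum_{n<m}p_n^2p_m^2$, and then rewrite via $p_np_m=\tfrac12\qnmdyn(1-1/\deff)$. Your explicit checks that the sum-frequency terms $\omega_{nm}+\omega_{kl}$ never vanish and that $\omega_{nm}\neq 0$ are more careful than the paper, which simply asserts $\avginfty{\cos(\omega_{nm}t)\cos(\omega_{ij}t)}=\tfrac12\delta_{ni}\delta_{mj}$.
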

\noindent The detailed proof is provided in Appendix~\ref{appsec:deffdyn}. The result shows that the variance factorizes into a static contribution, $(1-1/\deff)\suptiny{0}{0}{2}$, determined solely by the initial energy distribution, and a dynamical contribution, $1/(2\deffdyn)$, which encodes the number of Bohr frequencies effectively participating in the fluctuation signal. In particular, large $\deffdyn$ implies strong phase mixing and suppressed fluctuations, whereas small $\deffdyn$ corresponds to quasi-periodic dynamics dominated by a few frequencies.

According to Theorem~\ref{thm:LFF_variance_clean}, the long-time fluctuation variance scales inversely with $\deffdyn$, implying that the state with larger $\deffdyn$ must exhibit smaller fluctuation amplitudes. 
In Fig.~\ref{fig:hpow_deffs} we illustrate the power entropy $H\subtiny{0}{0}{\mathrm{pow}}$ versus the dynamical effective dimension $\deffdyn$, which characterizes the strength of long-time temporal fluctuations through pairwise population weights. Although a systematic trend is visible, the dispersion indicates that dynamical structure alone does not uniquely determine the spectral entropy. The red curve shows the corresponding $K$-HUB entropy profile.

The labels $A$ and $B$ in Figure~\ref{fig:lff_combined} do not refer to the physical product states $\up$, $\dw$, or $\mypm$ used in the previous numerical illustrations. Instead, they denote two controlled pure initial states constructed directly in the Hamiltonian eigenbasis according to the one-peak plus flat-tail population family defined in Example~1, Section~\ref{app:example1} of Appendix~\ref{appsec:numerical_exploration_spectral_descriptors}.
More precisely, these states were defined by populations $p_n$ chosen to keep $\deff$ nearly fixed and vary the effective number of dynamical cosine channels $\deffdyn$. State $A$ is specified by $\deff(A)\approx 2.606$, $\deffdyn(A)\approx 2.330$, and $\leakz\suptiny{0}{0}{\infty}(A)\approx 0.616$. State $B$ is defined by $\deff(B)\approx 2.700$, $\deffdyn(B)\approx 16.716$, and $\leakz\suptiny{0}{0}{\infty}(B)\approx 0.630$. As an illustration for Theorem~\ref{thm:LFF_variance_clean}, Fig.~\ref{fig:lff_combined} is designed to highlight the role of $\deffdyn$: the two states have almost the same LFF equilibration level, because $\leakz\suptiny{0}{0}{\infty}=1-1/\deff$, but they display different fluctuation amplitudes because their dynamical effective dimensions differ substantially. Panel (a) shows the centered LFF signal $\delta\!\leakz(t)$. 
State $A$, which has a smaller $\deffdyn$, exhibits stronger and more persistent oscillations around the equilibrium.
Panel (b) shows the running time average
$\avgT{|\delta\!\leakz|^2}$, confirming that the long-time fluctuation scale is larger for state $A$ than for state $B$.
The comparison illustrates that $\deff$ fixes the equilibration value for LFF, while $\deffdyn$ controls the suppression of temporal fluctuations.

States $A$ and $B$ in Fig.~\ref{fig:example2} are different pair of controlled pure states defined in Example~2 of Section~\ref{app:example2}, in Appendix~\ref{appsec:numerical_exploration_spectral_descriptors}. In this second example, the states are constructed such that their effective dimensions remain comparable and their dynamical effective dimensions stay within the same order of magnitude, while their spectral effective dimensions differ strongly. For state $A$, one obtains $\deff(A)\approx 12.687$, $\deffdyn(A)\approx 110.960$, $\deffspec(A)\approx 33.410$, and $\leakz\suptiny{0}{0}{\infty}(A)\approx 0.921$. For state $B$, one obtains $\deff(B)\approx 20.525$, $\deffdyn(B)\approx 201.712$, $\deffspec(B)\approx 168.237$, and $\leakz\suptiny{0}{0}{\infty}(B)\approx 0.951$. Hence, the two LFF signals equilibrate around nearby long-time values and have comparable global fluctuation scales, but they differ in the distribution of spectral power among the active Bohr-frequency channels. Fig.~\ref{fig:example2}, therefore, illustrates the information carried by $\deffspec$ and $\Hpow$, namely the fine spectral organization of the LFF fluctuations beyond what is captured by $\deff$ and $\deffdyn$ alone. 

 \begin{table}[t!]
\centering

\resizebox{0.95\columnwidth}{!}{
\begin{tabular}{c|l|l|l}
\hline\hline
Quantity 
& Large 
& Small 
& Eq. \\
\hline

$\deff$
& Large $\leakz\suptiny{0}{0}{\infty}$
& Small $\leakz\suptiny{0}{0}{\infty}$
& Eq.~\eqref{eq:leak_linearentropy} \\

$\mathcal{G}\subtiny{0}{0}{\leak}$
& Broadband
& Narrowband
& Eq.~\eqref{eq:SPD_definition} \\

$\deffspec$
& Large spectral support
& Few dominant modes
& Eq.~\eqref{eq:def_deffspec} \\

$\Hpow$
& High spectral entropy
& Low spectral entropy
& Eq.~\eqref{eq:Shannon_power_entropy} \\

$\deffdyn$
& Small $\avginfty{|\delta\!\leakz|\suptiny{0}{0}{2}}$
& High $\avginfty{|\delta\!\leakz|\suptiny{0}{0}{2}}$
& Eq.~\eqref{eq:deffspecamp_clean} \vspace{2pt}\\
\hline\hline
\end{tabular}}
\caption{\justifying 
Summary of the principal quantities introduced in this work. Each descriptor is grouped by structural level -- state ($\deff$), dynamical ($\deffdyn$), and spectral ($\mathcal{G}\subtiny{0}{0}{\leak}(\omega)$, $\deffspec$, $\Hpow$) -- with references to their defining equations in the last column.} 
\label{tab:figMerit}
\end{table}

\subsection{Physical interpretation of the effective dynamical dimension and the spectral descriptors}
In Table~\ref{tab:figMerit}, we summarize the principal quantities introduced throughout the work. Each descriptor is grouped by structural level -- state ($\deff$), dynamical ($\deffdyn$), and spectral ($\mathcal{G}\subtiny{0}{0}{\leak}(\omega)$, $\deffspec$, $\Hpow$) -- with references to their defining equations in the last column. 
The spectral quantities introduced here provide complementary information about the equilibration properties of LFF.  The dynamics over equilibrium of LFF can be organized around three independent questions: (i) What is the long-time equilibrium value of the LFF? (ii) How large are the temporal fluctuations around this value? (iii) How complex is the temporal structure of the signal?

The answer to (i) is determined by the effective dimension $\deff$. From Eq.~\eqref{eq:leakz_inf_deff}, the infinite-time average of the LFF depends only on the energy populations, and $\deff$ fixes the equilibrium level reached by the dynamics. Physically, $\deff$ quantifies the delocalization of the initial state in the energy eigenbasis: when many energy levels are populated, i.e., for $\deff\to\infty$, destructive interference suppresses recurrences and drives LFF towards its maximal value, $\leakz\suptiny{0}{0}{\infty}\to 1$. Thus, $\deff$ controls the mean equilibration level.

The answer to (ii) is provided by the amplitude spectral effective dimension $\deffdyn$. For fixed $\deff$, the size of temporal fluctuations is entirely governed by $\deffdyn$, which measures the effective number of cosine modes contributing to the time-fluctuation signal, as shown in Eq.~\eqref{eq:cosine_deltaL}, $\delta\!\leakz(t)=-2\sum_{n<m} p_n p_m \cos(\omega_{nm}t)$.
When many modes contribute with comparable amplitude, their interference leads to strong dynamical dephasing, suppressing oscillations. As a consequence, for fixed $\deff$, the long-time fluctuation scale
is controlled by $\deffdyn$. In particular, from the variance formula of Theorem~\ref{thm:LFF_variance_clean}, the typical fluctuation magnitude scales as $1/{\sqrt{\deffdyn}}$.

Finally, the answer to (iii) is encoded in the power spectral effective dimension $\deffspec$ and the associated Shannon entropy $H_{\subtiny{0}{0}{\mathrm{pow}}}$. These quantities characterize the distribution of fluctuation power among the Bohr frequencies. Unlike $\deffdyn$, they do not control the overall magnitude of fluctuations, but rather their spectral organization. When most of the power is concentrated on a few frequencies, the dynamics is dominated by a small number of oscillatory modes and appears quasi-periodic. In contrast, when the spectral power is distributed over many frequencies, interference between modes produces irregular temporal behavior. The entropy $H_{\subtiny{0}{0}{\mathrm{pow}}}$, therefore, provides an information-theoretic measure of spectral complexity.

In Table~\ref{tab:hierarchy_descriptors}, we summarize the role of the defined spectral quantities. 
The quantities $\deff$ and $\deffdyn$ fix, respectively, the LFF equilibration value and the overall fluctuation scale. However, they do not determine how the fluctuation power is distributed across Bohr frequencies. This distribution depends on the structure of the energy gaps $\omega_{nm}=E_n-E_m$ and is captured by the spectral entropy $H_{\subtiny{0}{0}{\mathrm{pow}}}$ or, equivalently, by $\deffspec$. In particular, spectra with many degenerate or nearly degenerate gaps concentrate power on a few frequencies, leading to low entropy and almost periodic dynamics. Conversely, highly incommensurate spectra distribute the power over many frequencies, producing higher entropy and more irregular interference patterns.

\begin{table}[t]
\centering
\resizebox{0.95\columnwidth}{!}{%
\begin{tabular}{ll}
\hline\hline
Quantity & Physical meaning \\
\hline
$\deff$ &
\begin{tabular}[c]{@{}l@{}}
energy-space delocalization; \\
determines the equilibrium level of LFF
\end{tabular}
\\[2pt]
\hline
$\deffdyn$ &
\begin{tabular}[c]{@{}l@{}}
effective number of dynamical cosine modes; \\
controls the magnitude of temporal fluctuations \\
via dynamical dephasing
\end{tabular}
\\[2pt]
\hline
$\deffspec/\Hpow$ &
\begin{tabular}[c]{@{}l@{}}
distribution of spectral power across Bohr frequencies; \\
quantifies the spectral complexity of the signal \\
and the richness of its temporal fluctuations
\end{tabular}
\\
\hline\hline
\end{tabular}%
}
\caption{\justifying Hierarchy of spectral descriptors governing the equilibration of the leakage fidelity function. Each quantity controls a distinct layer of the dynamics: equilibrium level, fluctuation stability, and spectral complexity.
}
\label{tab:hierarchy_descriptors}
\end{table}
\section{Equilibration and speed limit}\label{sec:MT_bound_leak}
In this section, we examine the role of the LFF in characterizing the timescales over which an isolated quantum system approaches its average equilibrium regime. Our discussion proceeds by analyzing how the LFF constrains quantum speed limits and determining the minimal time required for the system to evolve into an orthogonal subsector of the Hilbert space.

The speed limit was first established by Mandelstam and Tamm~\cite{MandelstamTamm1991}, and remains the geometric
form of the time-energy uncertainty principle.  A modern overview of quantum speed limits can be found in~\cite{DeffnerCampbell2017}.  In unitary quantum dynamics, a speed limit is an upper bound on how fast a
state can evolve between two sectors of the Hilbert space. The time needed for a time evolved state $\rho(\tau)$ to achieve an orthogonal state to  $\rho\subtiny{0}{0}{0}$ is lower bounded as 
\begin{equation}\label{eq:mt_original_bound}
    \tau \geq \tsp = \frac{\pi}{2\Delta_H}, 
\end{equation}
where $\Delta_H=\sqrt{\langle H\suptiny{0}{0}{2}\rangle_0 - \langle H\rangle_0\suptiny{0}{0}{2}}$ is the energy variance related to the initial state $\rho\subtiny{0}{0}{0}$. An alternative speed-limit bound was obtained by Margolus and Levitin, in which $\tsp =\pi/{2\langle H\rangle_0}$ expresses the classical limit. 
For a time-independent Hamiltonian $H$, the natural geometric measure of distinguishability between $\rho\subtiny{0}{0}{0}$ and $\rho(t)$ is the Bures/Fubini-Study angle $\theta(t)=\arccos\Tr(\sqrt{\sqrt{\rho\subtiny{0}{0}{0}}\rho(t)\sqrt{\rho\subtiny{0}{0}{0}}})$, whose rate of change is interpreted as the instantaneous statistical speed of the evolution. The Mandelstam-Tamm relation asserts that this speed is universally limited by the energy variance: i.e., for any pure unitary trajectory, $|\dot{\theta}(t)|\le \Delta_H(t)$, where $\Delta_H(t)=\Delta_H$ is the energy standard deviation. This inequality provides a geometric version of the time-energy uncertainty relation and places a fundamental constraint on equilibration: no POVM outcome probability or population imbalance can vary faster than the spectral spread of $H$ allows. 

We can recover the Mandelstam-Tamm relation directly from Bures' angle in function of the coarse-grained $p(t)=\mathrm{Tr}[\rho(t)P]$, defining it as
\begin{equation}
\theta(t):= \arccos\!\sqrt{p(t)}.
\end{equation}
The $\theta(t)$ quantifies the distance between the state $\rho(t)$ to the subspace $\M{H}_0$, measured by the overlap of $\rho(t)$ with the projector $P$, which spans $\M{H}_0$. Variation of the Bures angle can be obtained based on the following textbook inequality
\begin{equation}
\label{eq:bound_on_p_t}
|\dot p(t)|\;\le\;2\,\Delta_H\,\sqrt{p(t)\big(1-p(t)\big)}.
\end{equation}
 For more details, see, for instance, Chap .~12.3 of \cite{Ballentine2nd}. The first derivative of $\arccos\!\sqrt{p(t)}$ results in
\begin{equation}
\big|\dot \theta(t)\big|
\;=\;
\frac{\big|\dot p(t)\big|}{2\sqrt{p(t)\,\big(1-p(t)\big)}},
\end{equation}
immediately giving  
\begin{equation}\label{eq:theta_DeltaH}
\big|\dot \theta(t)\big|\;\le\;\Delta_H\,.
\end{equation}
This is precisely the speed limit presented by Mandelstam and Tamm \cite{MandelstamTamm1991}, expressing a limit imposed by both the Hamiltonian and the prepared initial state in the variation of Bures' angle as a function of $p(t)$. 

For a pure initial state $\rho_0 = \ketbra{\psi_0}{\psi_0}$, the LFF on average at long-time regime satisfies $\leakz\suptiny{0}{0}{\infty} = 1- 1/\deff$. The first instant that LFF passes through the half of its equilibrium value,
($\leakz\suptiny{0}{0}{\infty}/2$) is denoted as
\begin{equation}
    \tau\subtiny{0}{0}{1/2} = \min_{t\in[0,T]}\left\{ t>0, \leakz(t) > \frac{\leakz\suptiny{0}{0}{\infty}}{2} \right\}.
\end{equation}
 For a given time $t>\tau\subtiny{0}{0}{1/2}$, the LFF satisfies 
$1/2\leq\leakz(t)\leq 1$, which implies that the logarithm derivative of LFF is bounded by $ \left|\frac{d\,\log(\leakz(t))}{dt}\right|\geq  \left|\frac{d\,\leakz(t)}{dt}\right|$, resulting in
\begin{equation}
    \left|\frac{d\,\leakz(t)}{dt}\right| \leq 2\Delta_H,\quad \forall\,\, t\in(\tau\subtiny{0}{0}{1/2},T],
\end{equation}
where $\leakz(t)(1-\leakz(t))\geq (1-\leakz(t))\suptiny{0}{0}{2}$ in this regime. 
A key observation is that $d{\leakz}/dt$ is bounded exclusively by the energy variance $\Delta_H$. This variance, quantifying the spectral spread of the state, determines the maximum frequencies $\omega_{nm} = E_n - E_m$ available to drive the evolution. Consequently, $\Delta_H$ imposes a universal speed limit on LFF evolution towards equilibrium. A large variance permits rapid oscillations and fast changes in $\leakz(t)$, while a small variance restricts the dynamics to a slower, nearly adiabatic regime. This elucidates why, after $\tsp$, the LFF dynamics saturate this fundamental bound, evolving at the maximum rate permitted by the Hamiltonian's spectral bandwidth. Therefore, we can derive an averaged version of the Mandelstam-Tamm bound, stating a time-average version in function of the averaged LFF at the equilibrium timescale. 

Integrating Eq.~\eqref{eq:theta_DeltaH} from $0$ to $\tau$, as $\Delta_H$ is time independent, we obtain directly
\begin{equation}
\;\tau \;\ge\; \dfrac{\theta(\tau)}{\Delta_H}.
\end{equation}
Hence, the Bures' angle can be written in function of LFF, as $\leakz(t) = 1- p(t)$, by definition, resulting 
\begin{equation}\label{eq:time_bound}
    \;\tau \;\ge\; \dfrac{\arccos\!\sqrt{1-\leakz(\tau)}}{\Delta_H},
\end{equation}
where $\leakz(\tau)$ represents the probability of the system to achieve an orthogonal subspace of the initial state at time $\tau$. In this sense, we can compute an average speed limit time $\langle\tsp\rangle\subtiny{0}{0}{T}$, in the time window $T>\tau\subtiny{0}{0}{1/2}$, which, by Eq.~\eqref{eq:time_bound}, is at least 
\begin{equation}\label{eq:av_sp_definition}
    \langle\tsp\rangle\subtiny{0}{0}{T} \;\ge\; \left\langle \dfrac{\arccos\!\sqrt{1-\leakz(\tau)}}{\Delta_H} \right\rangle\subtiny{-0.5}{0.5}{T}\!.
\end{equation}
As we are considering the equilibration regime, where again we assume that $\leakz(\tau)> 1/2$, and it is known that the function $f(x)=\arccos\!\sqrt{1-x}$ is convex for $1\ge x>1/2$,  therefore, applying Jensen's inequality, one gets
\begin{equation}\label{eq:jensen_sp}
\Big\langle \arccos\!\sqrt{1-\leakz(\tau)}\Big\rangle\subtiny{-0.5}{0.5}{T}
\;\ge\; \arccos\!\sqrt{\,1-\langle \leakz(\tau)\rangle\subtiny{-0.5}{0.5}{T}\,}.
\end{equation}
Finally, combining Eq.~\eqref{eq:jensen_sp} and Eq.~\eqref{eq:av_sp_definition}, we obtain the speed limit on average as 
\begin{equation}\label{eq:av_sp_result}
    \langle\tsp\rangle\subtiny{-0.5}{0.5}{T} \;\ge\;  \dfrac{\arccos\!\sqrt{\,1-\langle \leakP(\tau)\rangle\subtiny{-0.5}{0.5}{T}\,}}{\Delta_H}.
\end{equation}
In the long-time regime in which $\langle\leakz\rangle\subtiny{-0.5}{0.5}{T}\to \leakz\suptiny{0}{0}{\infty}$, the speed limit on average can be written as a function of the effective dimension $\deff$,
\begin{equation}\label{eq:av_sp_deff}
    \langle\tsp\rangle\subtiny{-0.5}{0.5}{T} \;\ge\;  \dfrac{\arccos\!\sqrt{\,\frac{1}{\deff}\,}}{\Delta_H}.
\end{equation}
From this expression, we observe that the speed limit emerges classically as the system's effective dimension diverges ($\lim_{\deff \to \infty}\arccos(\sqrt{1/\deff})=\pi/2$), and the Mandelstam-Tamm bound is recovered, regardless of the specific details of the evolution. In this regime, the speed limit manifests a fundamental feature of equilibration: it governs the system's dynamics as it explores complementary sectors of the available state space, thereby completely losing information about the initial state. For a system with $N$ degrees of freedom, the Hilbert space dimension scales as $d\subtiny{0}{0}{E} \sim e^{cN}$, and $\deff$ is typically a finite fraction of it \cite{reimann2010canonical}. Hence, $1/\sqrt{\deff}$ decays exponentially with $N$, and the average orthogonality time $\langle\tsp\rangle\subtiny{0}{0}{T}$ vanishes, expressing the transition between orthogonal sectors of Hilbert space. 
This observation is also central in quantum optimal control, where the energetic bandwidth of the Hamiltonian sets the minimal time required
to reach a target state.  The link between quantum speed limits and optimal protocol design is well established~\cite{caneva2009}, and the present bound shows that the equilibration timescale inherits exactly the same geometric constraints dictated by the Mandelstam-Tamm relation. 

Table~\ref{tab:curvature_states} quantifies the minimal spreading timescale predicted by Eq.~\eqref{eq:av_sp_deff} for the three
initial states considered in Fig.~\ref{fig:leakz_vs_t}. Despite all states sharing essentially the same energy uncertainty
$\Delta H$, their effective dimensions differ substantially, leading to markedly distinct lower bounds for the spreading time.
In particular, the $\up$ state exhibits an effective dimension $\deff\approx 2.95$, in contrast to the $\dw$ state, which has a larger effective dimension, $\deff\approx 93.7$. The spreading time $\langle\tsp\rangle\subtiny{-0.5}{0.5}{T}$ indicates a faster initial dispersion of probability weight across the accessible Hilbert space, in agreement with the smoother and more stable trajectory seen in Fig.~\ref{fig:leakz_vs_t}.
The paramagnetic $\mypm$ state exhibits intermediate behavior in its effective dimension and in the resulting timescale. 
Overall, these results confirm that the effective dimension acts as the dominant structural parameter controlling the initial spreading dynamics when the energetic scale $\Delta H$ is fixed.

\begin{table}[t]
\centering
\begin{tabular}{lccc}
\hline\hline
Initial state & \(\leakz^\infty\) & \(\deff\)  & \(\langle\tsp\rangle\subtiny{-0.5}{0.5}{T}\)\\
\hline
\texttt{Up}   & 0.6606 & 2.9468  &  0.333 \\
\texttt{Pm}   & 0.9570 & 23.2509  &  0.476 \\
\texttt{Dw}   & 0.9893 & 93.7412  & 0.513\\
\hline\hline
\end{tabular}
\caption{\justifying Asymptotic LFF, effective dimension and speed limit for the three initial states considered in Fig.~\ref{fig:leakz_vs_t}.}\label{tab:curvature_states}
\end{table}

\section{Discussion and Future Perspectives}\label{sec:conclusions}
Equilibration in isolated quantum systems can be understood, within our framework, as the combined effect of spectral delocalization, destructive phase mixing among incommensurate energy gaps, and irreversible information leakage between dynamically coupled subspaces. This unified perspective is neither ensemble-based nor perturbative; it does not rely on randomness assumptions, typicality arguments, or weak-coupling expansions but instead follows directly from the spectral structure of a fixed Hamiltonian and a fixed initial state. In particular, it makes explicit that fluctuation bounds of the Reimann type are fundamentally controlled by the effective dimension $\deff$, thereby tying the suppression of temporal variations to the degree of spectral delocalization of the initial state. By introducing the Leakage Fidelity Function as a subspace-resolved quantity within the adopted coarse-grained description, irreversibility becomes operational rather than inferred indirectly from distances to equilibrium states; it is quantified by measurable probability flow across a well-defined Hilbert-space partition.
The subspace coarse-graining defined by the support of the initial state elucidates the flux of information from the distinguished sector $\M{H}_0$ to the full Hilbert space, rendering irreversibility as a geometrically controlled redistribution of support. 
 
Our results establish direct connections with central themes in contemporary quantum many-body physics. In particular, equilibration and ETH-like behavior emerge from LFF for initial states as elements of Hamiltonian-unbiased basis (HUBs). Also, the survival probabilities   ~\cite{liu2024quantum}, long recognized as key probes of many-body localization and quantum chaos~\cite{santos2025,Schiulaz2019,Santos2020}, appear naturally as special cases of the LFF for HUBs pure initial states. 
Their characteristic features as short- and intermediate-time decay are governed by the local density of states, as well as long-time signatures such as the correlation hole reflecting level repulsion and spectral rigidity~\cite{TorresHerrera2017,TorresHerrera2018}. These features are reinterpreted here within a broader picture of subspace information flux. In this sense, the LFF generalizes survival probabilities and related echoes. 

More broadly, our approach situates equilibration within a unified spectral--dynamical landscape. General results show that large effective dimension guarantees the suppression of temporal fluctuations under unitary dynamics~\cite{Reimann2012,Gogolin2016}, while information-theoretic approaches recover stationary predictions by maximizing entropy over experimentally accessible observables~\cite{Anza2017,Scarpa2025}. Within our framework, these perspectives become structurally connected: i.e., level statistics~\cite{Atas2013} constrain long-time behavior, whereas state-dependent probes such as the survival probability, the spectral form factor, subsystem echoes~\cite{Li2007,Karch2025}, and LFF resolve how the evolving state explores its energy shell. Consequently, quantum information scrambling, decoherence without environments, and many-body quantum control can all be interpreted as manifestations of controlled spectral participation and subspace information flux, providing a common geometric language for diagnosing dynamical complexity and emergent irreversibility.

The present framework is formulated for finite-dimensional systems and relies on a coarse-graining defined relative to the support of the initial state, so its quantitative features naturally depend on the chosen subspace decomposition. Different physically motivated coarse-grainings may emphasize distinct aspects of the dynamics, reflecting the contextual nature of equilibration. In addition, the sharpness of the derived bounds can exhibit finite-size effects, particularly in regimes where the effective dimension does not scale extensively. 

Several concrete directions naturally follow from the present work. A first avenue concerns a deeper investigation of the geometric asymmetry between the initial subspace $\M{H}_0$ and its complement $\M{H}_{\perp}$, in particular, whether a physically meaningful relation can be established between the ratio $\dim(\M{H}_0)/\dim(\M{H}_{\perp})$ and spectral quantities such as $\deff$ or $d\subtiny{0}{0}{E}$, thereby elevating the subspace asymmetry to a dynamical control parameter. 

Extending the framework to systems exhibiting anomalously slow or incomplete equilibration due to restricted spectral participation would clarify how dynamically protected subspaces constrain LFF behavior and generate persistent revivals associated with weak ergodicity breaking~\cite{Turner2018,Serbyn2021}. Random circuit models provide a complementary testbed, enabling controlled interpolation between chaotic and nonergodic regimes while probing universal features of spectral power distributions and dynamical spreading~\cite{Nahum2017,vonKeyserlingk2018}. On the experimental side, programmable quantum platforms such as cold atoms, neutral Rydberg arrays~\cite{Bernien2017,Labuhn2016}, and superconducting qubits offer realistic settings in which subspace-resolved measurements can access LFF and related spectral diagnostics. In particular, recent large-scale Rydberg arrays with hundreds of atoms enable the preparation of structured initial configurations together with site-resolved detection, thereby providing direct access to the redistribution of probability across spatial sectors. Experiments realizing programmable arrays at this scale demonstrate coherent many-body dynamics in regimes where the effective Hilbert-space dimension becomes extensive, offering concrete opportunities to reconstruct subspace-resolved quantities such as LFF and to test their predicted scaling with spectral delocalization~\cite{Ebadi2021,Scholl2021}.

Further directions include the continuous-spectrum or thermodynamic limit, where gap densities become smooth functions, and the exploration of entanglement-LFF relations connecting subspace flux with entanglement growth and scrambling. Finally, applying the present framework to more realistic interacting models exhibiting quantum phase transitions, such as Hubbard-type systems or Heisenberg chains, would allow one to investigate how criticality and symmetry constraints shape spectral delocalization and equilibration timescales.

\begin{acknowledgments}
The authors acknowledge Krissia Zawadzki, Raul O. Vallejos, Thiago R. Oliveira, Fernando de Melo, Pedro S. Correia, and Gabriel Dias Carvalho for the fruitful discussions. 
The authors also acknowledge financial support from CNPq and the National Institute of Science and Technology for Applied Quantum Computing (INCT-CQA) through process No. 408884/2024-0.  
ATC acknowledges RAU No. 12-2016-AY-UNA, Per\'{u}.
MGA and ROV acknowledge FAPEMIG.
\end{acknowledgments}

\bibliography{bibfile}

\hypertarget{sec:appendix}{}
\onecolumngrid
\appendix
\setcounter{theorem}{0}


\section*{Appendices}\label{Appendix}

The appendices collect the technical derivations supporting the analysis of the Leakage Fidelity Function. Appendix~\ref{appsec:theorem_leak_variance} proves the LFF-variance bounds. Appendix~\ref{appsec:General_moments_dirichlet} derives the Dirichlet moments used to characterize Haar-typical effective dimensions and the LFF  concentration. Appendix~\ref{appsec:deffdyn} develops the dynamical and spectral effective dimensions, \(\deffdyn\) and \(\deffspec\), while Appendix~\ref{appsec:numerical_exploration_spectral_descriptors} provides numerical examples showing that these descriptors refine the usual effective dimension by capturing fluctuation and spectral structure beyond \(\deff\).


\section{Leakage Fidelity Function dynamics theorems}\label{appsec:theorem_leak_variance}
In this section, we introduce the {\it mixture past-hypotheses}, which imposes constraints on the initial state distribution to enable the system to achieve equilibration.  Later, we present a formal proof for Theorems \ref{th:leakage_linearentropy} and \ref{thm:leakage_concentration_dir}.
\subsection{Mixing Past-hypothesis}
\label{app:mixing_past_hypothesis}
Let us introduce the \emph{mixture past-hypothesis} constraining on the initial state $\rho\subtiny{0}{0}{0}$ mixture and, by consequence, its rank. The \emph{mixture past-hypothesis} is an important step in Theorems~\ref{thapp:leakage_linearentropy} to obtain an LFF variance decay scaling as $1/\deff^2$.

\begin{center}
    \justifying {\bf Mixing Past-Hyposthesis}: \textit{Considering an initial state spectral decomposition $\rho\subtiny{0}{0}{0}=\sum_{\alpha=0}^{r-1} \lambda_{\alpha} \ketbra{\varphi_{\alpha}}{\varphi_{\alpha}}$,  Hamiltonian $H=\sum_{n=0}^{d\subtiny{0}{0}{E}-1}E_n \ketbra{E_n}{E_n}$, and a dephasing operation in the Hamiltonian eigenbasis $D(\cdot)=\sum_n \ketbra{E_n}{E_n}(\cdot)\ketbra{E_n}{E_n}$. The dephasing over the initial results in the convex combination $\eqst=\sum_{\alpha=0}^{r-1} \lambda_{\alpha} \eqst\subtiny{0}{0}{\alpha}$, with $\eqst=D(\rho\subtiny{0}{0}{0})$ and $\eqst\subtiny{0}{0}{\alpha} = D(\ketbra{\varphi_{\alpha}}{\varphi_{\alpha}})$. As the mixture function of $\eqst$ is contractive under mixing, $\Tr(\eqst\suptiny{0}{0}{2})\leq \sum_{\alpha}\lambda_{\alpha}\Tr(\eqst\subtiny{0}{0}{\alpha}\suptiny{0}{0}{2})$, we define the mixing past-hyposthesis imposing an superior limit for the mixture as: \begin{equation} \sum_{\alpha}\lambda_{\alpha}\Tr(\eqst\subtiny{0}{0}{\alpha}\suptiny{0}{0}{2}) \leq r\,\Tr(\eqst\suptiny{0}{0}{2}).\end{equation}}
\end{center}
As $\Tr(\eqst\suptiny{0}{0}{2})\leq \Tr(\rho\subtiny{0}{0}{0}\suptiny{0}{0}{2})$, by the monotonicity of the mixture function, the mixing past-hypothesis, therefore, imposes that 
\begin{equation}\label{eqapp:past_hyp}
    \Tr(\rho\subtiny{0}{0}{0}\suptiny{0}{0}{2})\geq \frac{\sum_{\alpha}\lambda_{\alpha}\Tr(\eqst\subtiny{0}{0}{\alpha}\suptiny{0}{0}{2})}{r}.
\end{equation}
Notice that for pure initial states $\ket{\psi_0}$, there is no mixing and the rank is equal to one, resulting trivially $\Tr(\eqst)\leq 1$, as it should be. On the other hand, if the $\Tr(\eqst_{\alpha}\suptiny{0}{0}{2})=1/\deff$, for all $\alpha$, it implies that $\Tr(\rho\subtiny{0}{0}{0}\suptiny{0}{0}{2})\geq 1/(r\,\deff)$, which limits the mixing of the initial state if $r\deff<d\subtiny{0}{0}{E}$.

\subsection{Proof of Leakage Variation bound presented in Theorem~\ref{th:leakage_linearentropy}}\label{secapp:leak_variation_theorem}
\setcounter{theorem}{0}
\renewcommand{\thetheorem}{\arabic{theorem}}

\begin{theorem}[Leakage Fidelity Function variance]\label{thapp:leakage_linearentropy}
Consider a quantum system initially prepared in a state
$\rho\subtiny{0}{0}{0}$ of rank $r$, evolving under the unitary dynamics $U(t)$. Let $\rho\subtiny{0}{0}{0}
=\sum_{\alpha=0}^{r-1}\lambda_{\alpha}\ketbra{\varphi_{\alpha}}{\varphi_{\alpha}}$ be its spectral decomposition, and define
$\eqst=D(\rho\subtiny{0}{0}{0})$ and $\eqst\subtiny{0}{0}{\alpha}
=D(\ketbra{\varphi_{\alpha}}{\varphi_{\alpha}})$, where $D$ denotes
dephasing in the Hamiltonian eigenbasis. Assume that the initial mixture satisfies the mixing Past Hypothesis, in Eq.~\eqref{eqapp:past_hyp}.
Then, for the finite-time variance of the LFF over a time window $T$, namely
$\avgT{\left|\leakP(t)-\leakP\suptiny{0}{0}{\infty}\right|^2}$, the following bound holds
\begin{align}
  &\avgT{\lvert \leakP(t)-\leakP\suptiny{0}{0}{\infty}\rvert\suptiny{0}{0}{2}}
  \;\le\;
  f(\varepsilon,T)\,\Bigl(1-\frac{1}{d\subtiny{0}{0}{E}}\Bigr)\,\frac{r\suptiny{0}{1}{3}}{\deff\suptiny{0}{1}{2}},
  \label{eqapp:leakage-variance}
\end{align}
where $f(\varepsilon,T)$ is the spectral factor, defined in Eq.~\eqref{eq:spectral_factor}.
\end{theorem}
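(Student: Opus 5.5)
We will follow the Short--Farrelly route behind Eq.~\eqref{eq:bound_Reimann}, but keep the structure of the operator $P$ rather than bounding it by its norm. Since $\leakP(t)-\leakP\suptiny{0}{0}{\infty}=-\Tr[P(\rho(t)-\eqst)]$, we first expand in the energy eigenbasis (nondegenerate, as in the definition of $\eqst$):
\begin{equation*}
\leakP(t)-\leakP\suptiny{0}{0}{\infty}=-\sum_{n\neq m}\rho_{nm}P_{mn}\,e^{-i(E_n-E_m)t},
\end{equation*}
where $\rho_{nm}=\matrixel{E_n}{\rho\subtiny{0}{0}{0}}{E_m}$ and $P_{mn}=\matrixel{E_m}{P}{E_n}$. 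This is a sum over Bohr frequencies with amplitudes $v_{(n,m)}=\rho_{nm}P_{mn}$. We then apply the standard gap-counting lemma of Short and Farrelly. For any vector $v$ indexed by gaps, $\avgT{|\sum_{\alpha}v_\alpha e^{-iG_\alpha t}|^2}\le f(\varepsilon,T)\max_{\alpha}\sum_{\beta}|v_\beta|\ldots$; more precisely it gives $\le f(\varepsilon,T)\sum_\alpha|v_\alpha|^2$, after the usual Gaussian-window argument that yields the factor $\Ne(\varepsilon)(1+8\log d\subtiny{0}{0}{E}/\varepsilon T)$. We take this lemma as the content underlying Eq.~\eqref{eq:spectral_factor}. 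The problem is thereby reduced to a purely static bound on $\sum_{n\neq m}|\rho_{nm}|^2|P_{mn}|^2$.

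The second step bounds this static sum in terms of $\deff$ and $r$. Using $\rho\subtiny{0}{0}{0}=\sum_\alpha\lambda_\alpha\ketbra{\varphi_\alpha}{\varphi_\alpha}$ and $P=\sum_\beta\ketbra{\varphi_\beta}{\varphi_\beta}$, write $\rho_{nm}=\sum_\alpha\lambda_\alpha c_n^\alpha \bar c_m^\alpha$ and $P_{mn}=\sum_\beta c_m^\beta\bar c_n^\beta$. Cauchy--Schwarz over the $r$ indices gives $|\rho_{nm}|^2\le r\sum_\alpha\lambda_\alpha^2|c_n^\alpha|^2|c_m^\alpha|^2$ and $|P_{mn}|^2\le r\sum_\beta|c_n^\beta|^2|c_m^\beta|^2$. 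The product is then a sum of terms of the form $p^\alpha_n p^\alpha_m p^\beta_n p^\beta_m$, where $p_n^\alpha=|c_n^\alpha|^2$ are the diagonals of $\eqst\subtiny{0}{0}{\alpha}$. Bounding $p^\beta_np^\beta_m\le \max(\ldots)$ is too crude. Instead we use Cauchy--Schwarz in the pair $(n,m)$: $\sum_{n\neq m}p^\alpha_np^\alpha_mp^\beta_np^\beta_m\le \sum_n p^\alpha_np^\beta_n\sum_m p^\alpha_mp^\beta_m \le \Tr(\eqst\subtiny{0}{0}{\alpha}\suptiny{0}{0}{2})^{1/2}\ldots$, which produces products of purities $\Tr(\eqst\subtiny{0}{0}{\alpha}\suptiny{0}{0}{2})$. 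Removing the diagonal $n=m$ together with normalization is what we expect to generate the factor $(1-1/d\subtiny{0}{0}{E})$, via $\sum_{n\neq m}p_np_m=1-\sum_np_n^2\le 1-1/d\subtiny{0}{0}{E}$ times the largest surviving purity factor.

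The third step converts the $\alpha$-wise purities into $\deff$ using the mixing Past Hypothesis: $\sum_\alpha\lambda_\alpha\Tr(\eqst\subtiny{0}{0}{\alpha}\suptiny{0}{0}{2})\le r\Tr(\eqst\suptiny{0}{0}{2})=r/\deff$. One $r$ comes from each of the two Cauchy--Schwarz steps and the third from this hypothesis, which accounts for $r^3$. The two factors of purity (one from $\rho$, one from $P$) yield $1/\deff^2$. For the $P$-side purities, which are unweighted by $\lambda_\alpha$, we will need $\lambda_\alpha\le1$ and a sum over $\beta$ that is again controlled by $r$. This bookkeeping is the main obstacle: the pairing of the unweighted $\beta$-sum with the weighted $\alpha$-sum must close on $(r/\deff)^2$ times $r$, and not on something like $r^2/\deff$. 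If the direct route fails, we will first try symmetrizing by bounding $\sum_{n\neq m}|\rho_{nm}|^2|P_{mn}|^2\le(\sum|\rho_{nm}|^4)^{1/2}(\sum|P_{mn}|^4)^{1/2}$ and estimating each fourth-moment sum by $r$ times squared dephased purities. For the pure case $r=1$, the check is $\sum_{n\neq m}p_n^2p_m^2\le(1-1/d\subtiny{0}{0}{E})/\deff^2$, which matches Eq.~\eqref{eq:leakZ_bound}.
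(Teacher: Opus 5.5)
The gap is the obstacle you flagged yourself. It cannot be closed by better bookkeeping, because the statement as written is false.

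Your Step~1 is exactly the paper's Lemma~\ref{lemmaapp:leakage_equilibrium_variance}, and your $r=1$ check is its Lemma~\ref{lemmaapp:leakage0_deff}; both are fine. The trouble is in Step~2. Cauchy--Schwarz on $|P_{mn}|^2$ leaves the \emph{unweighted} sum $\sum_\beta\Tr(\mu_\beta^2)$. The mixing Past Hypothesis only controls the $\lambda$-weighted sum $\sum_\alpha\lambda_\alpha\Tr(\mu_\alpha^2)$. Carried out carefully, your route gives
\[
\mathcal{C}_{\mathrm{leak}}\le r^2\sum_{\alpha,\beta}\lambda_\alpha^2\,\Tr(\mu_\alpha^2)\,\Tr(\mu_\beta^2)\le\frac{r^3}{\deff}\sum_\beta\Tr(\mu_\beta^2),
\]
which is only $O(r^3/\deff)$. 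The hypothesis cannot improve this. An eigenvector of $\rho_0$ with tiny weight but supported on two energy levels makes $|P_{12}|^2=O(1)$, at essentially no cost in $\deff$ or in the hypothesis. Your H\"older fallback fails for the same reason: $\sum_{n\neq m}|P_{nm}|^4$ does not depend on the $\lambda_\alpha$ at all.

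\textbf{Counterexample.} Take $r=2$ and
\[
\ket{\varphi_1}=\frac{\ket{E_1}-\ket{E_2}}{\sqrt2},\qquad \ket{\varphi_0}=a\bigl(\ket{E_1}+\ket{E_2}\bigr)+\frac{b}{\sqrt{d_E-2}}\sum_{n\geq3}\ket{E_n},
\]
with $a^2=d_E^{-1/2}$, $b^2=1-2a^2$, and $\lambda_1=1-\lambda_0=1/d_E$.
\begin{itemize}
\item One has $P_{12}=a^2-\tfrac12$ and $(\rho_0)_{12}=\lambda_0a^2-\lambda_1/2\simeq d_E^{-1/2}$.
\item Hence $\mathcal{C}_{\mathrm{leak}}\geq 2|P_{12}|^2|(\rho_0)_{12}|^2\simeq 1/(2d_E)$.
\item Meanwhile $\Tr(\mu^2)\simeq 3/d_E$, so $(1-1/d_E)\,r^3/\deff^2\simeq 72/d_E^2$.
\item The hypothesis holds even in its strong form, $\sum_\alpha\lambda_\alpha\Tr(\mu_\alpha^2)\simeq 3.5/d_E\leq r\Tr(\mu^2)\simeq 6/d_E$. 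Hence it also holds in the form of Eq.~\eqref{eqapp:past_hyp}.
\item For any $H$ with this eigenbasis and non-degenerate gaps, $\avgT{|\leakP(t)-\leakP\suptiny{0}{0}{\infty}|^2}\to\mathcal{C}_{\mathrm{leak}}$ as $T\to\infty$.
\item In the same limit, $f(\varepsilon,T)\to\Ne(\varepsilon)=1$ for small $\varepsilon$.
\end{itemize}
So the claimed bound is violated by a factor growing like $d_E/144$; it already fails at $d_E=400$.

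\textbf{Where the paper's proof breaks.} It gets past this point only through two invalid steps in Lemma~\ref{lemma:leakcoh_bound}.
\begin{itemize}
\item It replaces the full projector $P$ by $\ketbra{\varphi_\alpha}{\varphi_\alpha}$ inside $\mathcal{C}_{\mathrm{leak}}(\ket{\varphi_\alpha};\cdot)$. Lemma~\ref{lemmaapp:convexity} only yields $\sum_{n\neq m}|P_{nm}|^2p^\alpha_np^\alpha_m$ with the full $P$. In the example this is $\simeq 1/(2d_E)$ for $\alpha=0$, not $\leq\Tr(\mu_0^2)^2\simeq 9/d_E^2$.
\item It uses $\sum_\alpha\Tr(\mu_\alpha^2)\leq r\sum_\alpha\lambda_\alpha\Tr(\mu_\alpha^2)$. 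In the example the left side is $\simeq\tfrac12$ and the right side is $\simeq 7/d_E$.
\end{itemize}

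\textbf{What can be proved.} The same bound holds with $r^3$ replaced by $\lambda_{\min}^{-2}$, where $\lambda_{\min}$ is the smallest nonzero eigenvalue of $\rho_0$. Since $\rho_0\geq\lambda_{\min}P$, one has $|P_{nm}|^2\leq P_{nn}P_{mm}\leq\lambda_{\min}^{-2}\mu_{nn}\mu_{mm}$. Also $|(\rho_0)_{nm}|^2\leq\mu_{nn}\mu_{mm}$. Together these give $\mathcal{C}_{\mathrm{leak}}\leq(1-1/d_E)\,\lambda_{\min}^{-2}\deff^{-2}$. This ties the $P$-side to $\mu$, which is the control your Step~2 lacks.
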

Temporal fluctuations of $\leakP(t)$ under unitary dynamics are encoded in oscillatory interference terms in the energy eigenbasis. To isolate the portion of the initial coherence that is dynamically relevant for LFF, it is convenient to introduce the following observable-dependent functional.
\begin{definition}[Leakage coherence]
\label{def:leakage_coherence}
Given an initial state $\rho\subtiny{0}{0}{0}$ and an orthogonal projector $P$, we define the
leakage coherence as
\begin{equation}
\mathcal{C}_{\mathrm{leak}}(\rho\subtiny{0}{0}{0};P)
:=
\sum_{i\neq j}|P_{ij}|\suptiny{0}{0}{2}\,|(\rho\subtiny{0}{0}{0})_{ij}|\suptiny{0}{0}{2}
=
\bigl\|P_{\mathrm{off}}\odot(\rho\subtiny{0}{0}{0})_{\mathrm{off}}\bigr\|\subtiny{0}{0}{\mathrm{HS}}\suptiny{0}{0}{2},
\end{equation}
where matrix elements are taken in the energy eigenbasis $\{\ket{E_n}\}$, $\odot$ denotes the Hadamard (element-wise) product, and $\|\cdot\|\subtiny{0}{0}{\mathrm{HS}}$ is the Hilbert--Schmidt norm.
\end{definition}
This quantity measures the portion of the initial energy-basis coherence that is aligned with the projector $ P$'s matrix structure. Unlike global coherence measures, which treat all off-diagonal elements of $\rho\subtiny{0}{0}{0}$ on an equal footing, $\mathcal{C}_{\mathrm{leak}}$ weights each coherence $(\rho\subtiny{0}{0}{0})_{ij}$ by the corresponding matrix element $P_{ij}$, and therefore depends explicitly on the observable that defines the monitored subspace. From a dynamical perspective, $\mathcal{C}_{\mathrm{leak}}(\rho\subtiny{0}{0}{0};P)$ controls the magnitude of the fluctuations in a mean-square sense: when $\mathcal{C}_{\mathrm{leak}}$ is small, oscillatory contributions are suppressed and $\leakP(t)$ remains close to its equilibrium value for most times, while a large $\mathcal{C}_{\mathrm{leak}}$ allows for larger fluctuations. As a simple illustration, if $P$ is diagonal in the energy eigenbasis, then $P_{ij}=0$ for all $i\neq j$, hence $\mathcal{C}_{\mathrm{leak}}(\rho\subtiny{0}{0}{0};P)=0$ and $\leakP(t)$ is constant in time. For clarity of exposition, we first establish three auxiliary lemmas and then use them to complete the proof of Theorem~\ref{thapp:leakage_linearentropy}.
\begin{lemma}[Leakage variance around the equilibrium value and leakage  coherence]
\label{lemmaapp:leakage_equilibrium_variance}
Let
\begin{equation}
H=\sum_{n=1}^{d\subtiny{0}{0}{E}} E_n \ketbra{E_n}{E_n},
\end{equation}
be a finite-dimensional Hamiltonian with a non-degenerate spectrum and energy eigenbasis $\{\ket{E_n}\}_{n=1}^{d\subtiny{0}{0}{E}}$. Let $\rho(t)=e^{-iHt}\rho\subtiny{0}{0}{0} e^{iHt}$ be the unitary evolution of an arbitrary initial state $\rho\subtiny{0}{0}{0}$. Fix an orthogonal
projector $P$ the LFF is
\begin{equation}
\leakP(t):=1-\Tr[P\rho(t)].
\end{equation}
Define the dephased (infinite-time averaged) state,
\begin{equation}
\eqst:=\sum_{n=1}^{d\subtiny{0}{0}{E}} \ketbra{E_n}{E_n}\,\rho\subtiny{0}{0}{0}\,\ketbra{E_n}{E_n},
\end{equation}
and the corresponding equilibrium LFF 
\begin{equation}
\leakP\suptiny{0}{0}{\infty}:=1-\Tr[P\eqst].
\end{equation}
Let $\omega_{ij}:=E_i-E_j$ for $i\neq j$, and let $\mathcal{N}(\varepsilon)$ be the maximal number of energy-gap frequencies $\{\omega_{ij}\}_{i\neq j}$ (counted with multiplicity) contained in any interval of width $\varepsilon>0$. Set the spectral factor as $f(\varepsilon,T)$ as presented in Eq.~\eqref{eq:spectral_factor}. Define the off-diagonal parts in the energy eigenbasis,
\begin{equation}
P_{\mathrm{off}}:=P-\sum_{n=1}^{d\subtiny{0}{0}{E}}P_{nn}\ketbra{E_n}{E_n},
\quad
(\rho\subtiny{0}{0}{0})_{\mathrm{off}}:=\rho\subtiny{0}{0}{0}-\sum_{n=1}^{d\subtiny{0}{0}{E}}(\rho\subtiny{0}{0}{0})_{nn}\ketbra{E_n}{E_n},
\end{equation}
and the coherence functional
\begin{equation}
\mathcal{C}_{\mathrm{leak}}(\rho\subtiny{0}{0}{0};P)
:=
\sum_{i\neq j}|P_{ij}|\suptiny{0}{0}{2}\,|(\rho\subtiny{0}{0}{0})_{ij}|\suptiny{0}{0}{2}
=
\bigl\|P_{\mathrm{off}}\odot(\rho\subtiny{0}{0}{0})_{\mathrm{off}}\bigr\|\subtiny{0}{0}{\mathrm{HS}}\suptiny{0}{0}{2}.
\end{equation}
Then, for \(T\gg\frac{8\log d\subtiny{0}{0}{E}}{\varepsilon}\),
\begin{equation}
\avgT{\bigl|\leakP(t)-\leakP\suptiny{0}{0}{\infty}\bigr|\suptiny{0}{0}{2}}
\le
f(\varepsilon,T)\,\mathcal{C}_{\mathrm{leak}}(\rho\subtiny{0}{0}{0};P).
\label{eq:leak_equilibrium_variance}
\end{equation}
\end{lemma}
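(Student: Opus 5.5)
We plan to expand the fluctuation in the energy eigenbasis and then apply the Short--Farrelly finite-time dephasing estimate, the same machinery behind Eq.~\eqref{eq:bound_Reimann}. Writing $\rho_{ij}=(\rho\subtiny{0}{0}{0})_{ij}$ and $P_{ij}=\matrixel{E_i}{P}{E_j}$, we have $\Tr[P\rho(t)]=\sum_{i,j}P_{ji}\rho_{ij}e^{-i\omega_{ij}t}$. The diagonal terms $i=j$ reproduce exactly $\Tr[P\eqst]$. Hence
\begin{equation*}
\leakP(t)-\leakP\suptiny{0}{0}{\infty}=-\sum_{i\neq j}v_{ij}\,e^{-i\omega_{ij}t},\qquad v_{ij}:=P_{ji}\rho_{ij}.
\end{equation*}
This is an almost-periodic signal indexed by the gaps $\omega_{ij}$, and its coefficient vector satisfies $\sum_{i\neq j}|v_{ij}|^2=\mathcal{C}_{\mathrm{leak}}(\rho\subtiny{0}{0}{0};P)$ by Definition~\ref{def:leakage_coherence}. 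This is the key observation: the lemma reduces to the statement that the time-averaged squared modulus of such a sum is bounded by $f(\varepsilon,T)$ times the squared norm of the coefficient vector.

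Squaring and averaging gives
\begin{equation*}
\avgT{|\cdot|^2}=\sum_{(i,j)\neq(k,l)}\bar v_{ij}v_{kl}\,\avgT{e^{i(\omega_{ij}-\omega_{kl})t}}=v^\dagger M v .
\end{equation*}
Here $M$ is the Hermitian matrix indexed by gap pairs $\alpha=(i,j)$, with entries $M_{\alpha\beta}=\avgT{e^{i(\omega_\alpha-\omega_\beta)t}}$. Then $v^\dagger Mv\le\|M\|\,\|v\|_2^2=\|M\|\,\mathcal{C}_{\mathrm{leak}}$, so everything rests on bounding the operator norm $\|M\|$.

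For this we follow Short and Farrelly~\cite{Short_2012}. Replace the sharp average over $[0,T]$ by a Gaussian-weighted average of width $\sim T$. This dominates the uniform average up to a constant that is absorbed into the logarithmic correction. The Gaussian kernel $\tilde M_{\alpha\beta}=e^{-(\omega_\alpha-\omega_\beta)^2T^2/c}$ is positive semidefinite.

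We then bound its norm by a Schur test (maximal row sum). Partition the frequency axis into bins of width $\varepsilon$. Each bin contains at most $\Ne(\varepsilon)$ gaps, and the Gaussian weights summed over bins at distance $k\varepsilon$ give a geometric-type series. The result is $\sum_\beta|\tilde M_{\alpha\beta}|\le \Ne(\varepsilon)\bigl(1+O(1/(\varepsilon T))\bigr)$. The $\log d\subtiny{0}{0}{E}$ factor enters from truncating the Gaussian tail against the at most $d\subtiny{0}{0}{E}^2$ gaps. Together these yield $\|M\|\le f(\varepsilon,T)$ with $f$ as in Eq.~\eqref{eq:spectral_factor}.

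The main obstacle is this norm estimate, specifically making the constants $8\log d\subtiny{0}{0}{E}/(\varepsilon T)$ come out exactly. We would not rederive it. Since the paper already takes Eq.~\eqref{eq:bound_Reimann}--\eqref{eq:spectral_factor} as given, we would invoke the intermediate inequality of Ref.~\cite{Short_2012}: for any vector $v$ indexed by gaps, $\avgT{|\sum_\alpha v_\alpha e^{i\omega_\alpha t}|^2}\le f(\varepsilon,T)\sum_\alpha|v_\alpha|^2$. This is precisely what their proof establishes before applying Cauchy--Schwarz with $\|O\|$. The hypothesis $T\gg 8\log d\subtiny{0}{0}{E}/\varepsilon$ only ensures that the finite-time correction stays controlled.

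Applying this inequality with $v_{ij}=P_{ji}\rho_{ij}$ gives Eq.~\eqref{eq:leak_equilibrium_variance} immediately. The only point to check is the bookkeeping: both orientations $(i,j)$ and $(j,i)$ must be counted, consistently with how $\Ne(\varepsilon)$ counts gaps with multiplicity. Note also that the nondegenerate-spectrum assumption is what makes the diagonal terms exactly $\Tr[P\eqst]$.
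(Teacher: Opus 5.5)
Your proposal follows the paper's proof. You expand $\Tr[P\rho(t)]$ in the energy basis and isolate the off-diagonal terms, whose coefficient vector has squared norm $\mathcal{C}_{\mathrm{leak}}(\rho_0;P)$. You then write the mean-square deviation as the quadratic form of the positive semidefinite gap-correlation matrix and import $\|M\|\le f(\varepsilon,T)$ from Refs.~\cite{reimann2008foundation,Short_2012}, exactly as the paper does.

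There are two small slips, neither affecting the argument. First, the double sum should run over all off-diagonal pairs $(i,j)$ and $(k,l)$, including $(i,j)=(k,l)$, not over $(i,j)\neq(k,l)$. Second, your Gaussian-smoothing sketch would not give $f(\varepsilon,T)$ exactly: dominating the uniform window by a Gaussian puts a multiplicative constant larger than one on the leading $\Ne(\varepsilon)$ term, and that constant cannot be absorbed into the $1/(\varepsilon T)$ correction. This is harmless, since you, like the paper, ultimately rely on the cited bound rather than the sketch.
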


\begin{proof}[Proof of Lemma~\ref{lemmaapp:leakage_equilibrium_variance}]
Using $\leakP(t)=1-\Tr[P\rho(t)]$, it suffices to study
$\Tr[P\rho(t)]-\Tr[P\eqst]$. Expanding in the energy eigenbasis, for
$i,j=1,\dots,d\subtiny{0}{0}{E}$,
\begin{align}
\Tr[P\rho(t)]
&=
\sum_{i=1}^{d\subtiny{0}{0}{E}}\bra{E_i}P\rho(t)\ket{E_i}
=
\sum_{i=1}^{d\subtiny{0}{0}{E}}\sum_{j=1}^{d\subtiny{0}{0}{E}}\bra{E_i}P\ket{E_j}\,\bra{E_j}\rho(t)\ket{E_i}
=
\sum_{i=1}^{d\subtiny{0}{0}{E}}\sum_{j=1}^{d\subtiny{0}{0}{E}}P_{ij}\,(\rho(t))_{ji}.
\end{align}
Since $\rho(t)=e^{-iHt}\rho\subtiny{0}{0}{0} e^{iHt}$ and $H\ket{E_n}=E_n\ket{E_n}$,
\begin{equation}
(\rho(t))_{ji}
=
\bra{E_j}e^{-iHt}\rho\subtiny{0}{0}{0} e^{iHt}\ket{E_i}
=
e^{-iE_j t}(\rho\subtiny{0}{0}{0})_{ji}e^{iE_i t}
=
(\rho\subtiny{0}{0}{0})_{ji}e^{i(E_i-E_j)t}
=
(\rho\subtiny{0}{0}{0})_{ji}e^{i\omega_{ij}t}.
\end{equation}
Where $\omega_{ij} = (E_i - E_j)$, with $\hbar = 1$. Therefore,
\begin{equation}
\Tr[P\rho(t)]
=
\sum_{i=1}^{d\subtiny{0}{0}{E}}\sum_{j=1}^{d\subtiny{0}{0}{E}} P_{ij}(\rho\subtiny{0}{0}{0})_{ji}e^{i\omega_{ij}t}.
\end{equation}
On the other hand, by the definition of $\eqst$,
\begin{equation}
\Tr[P\eqst]
=
\sum_{n=1}^{d\subtiny{0}{0}{E}} \bra{E_n}P\ket{E_n}\,\bra{E_n}\rho\subtiny{0}{0}{0}\ket{E_n}
=
\sum_{i=1}^{d\subtiny{0}{0}{E}}P_{ii}(\rho\subtiny{0}{0}{0})_{ii}.
\end{equation}
Hence,
\begin{equation}
\leakP(t)-\leakP\suptiny{0}{0}{\infty}
=
-\Bigl(\Tr[P\rho(t)]-\Tr[P\eqst]\Bigr)
=
-\sum_{\substack{i,j=1\\ i\neq j}}^{d\subtiny{0}{0}{E}} P_{ij}(\rho\subtiny{0}{0}{0})_{ji}e^{i\omega_{ij}t}.
\end{equation}

Let $\mathcal{A}:=\{(i,j):i\neq j\}$ and define $a\in\mathbb{C}^{|\mathcal{A}|}$ by
$a_{ij}:=P_{ij}(\rho\subtiny{0}{0}{0})_{ji}$. Define the correlation matrix
$\Phi^{(T)}$ on $\mathcal{A}\times\mathcal{A}$ by
\begin{equation}
\Phi^{(T)}_{(i,j),(k,l)}
:=
\avgT{e^{i(\omega_{ij}-\omega_{kl})t}}.
\end{equation}

We now explicitly expand the mean-square deviation. Since the global minus sign
cancels after taking the modulus,
\begin{align}
\avgT{\bigl|\leakP(t)-\leakP\suptiny{0}{0}{\infty}\bigr|\suptiny{0}{0}{2}}
&=
\avgT{\left|
\sum_{(i,j)\in\mathcal{A}} a_{ij}e^{i\omega_{ij}t}
\right|\suptiny{0}{0}{2}}
=
\avgT{
\left(\sum_{(i,j)\in\mathcal{A}} a_{ij}e^{i\omega_{ij}t}\right)
\left(\sum_{(k,l)\in\mathcal{A}} a_{kl}e^{i\omega_{kl}t}\right)\suptiny{-2}{-1}{*}
},
\\
&=
\avgT{
\sum_{(i,j)\in\mathcal{A}}\sum_{(k,l)\in\mathcal{A}}
a_{ij}a_{kl}^*\,e^{i(\omega_{ij}-\omega_{kl})t}
},
\\
&=
\sum_{(i,j)\in\mathcal{A}}\sum_{(k,l)\in\mathcal{A}}
a_{ij}a_{kl}^*\,
\avgT{e^{i(\omega_{ij}-\omega_{kl})t}},
\\
&=
\sum_{(i,j)\in\mathcal{A}}\sum_{(k,l)\in\mathcal{A}}
a_{ij}a_{kl}^*\,\Phi^{(T)}_{(i,j),(k,l)}.
\label{eq:leak_quad_form_expanded}
\end{align}
Interpreting $v$ as a column vector indexed by $\mathcal{A}$, the last expression
is precisely the quadratic form
\begin{equation}
\avgT{\bigl|\leakP(t)-\leakP\suptiny{0}{0}{\infty}\bigr|\suptiny{0}{0}{2}}
=
v^\dagger \Phi^{(T)} v.
\end{equation}

Since $\Phi^{(T)}\succeq 0$, one has
\begin{equation}
v^\dagger \Phi^{(T)} a
\le
\|\Phi^{(T)}\|_{\mathrm{op}}\,\|v\|_2\suptiny{0}{0}{2}.
\end{equation}
By the standard finite-time dephasing bound of
Refs.~\cite{reimann2008foundation,short2011equilibration}, applied to the multiset of gaps $\{\omega_{ij}\}_{i\neq j}$ (counted with multiplicity) as encoded by
$\mathcal{N}(\varepsilon)$,
\begin{equation}
\|\Phi^{(T)}\|_{\mathrm{op}}
\le
\mathcal{N}(\varepsilon)\left(1+\frac{8\log d\subtiny{0}{0}{E}}{\varepsilon T}\right)
=
f(\varepsilon,T).
\end{equation}

Finally, since $\rho\subtiny{0}{0}{0}$ is Hermitian, $(\rho\subtiny{0}{0}{0})_{ji}=(\rho\subtiny{0}{0}{0})_{ij}^*$, hence
$|(\rho\subtiny{0}{0}{0})_{ji}|\suptiny{0}{0}{2}=|(\rho\subtiny{0}{0}{0})_{ij}|\suptiny{0}{0}{2}$, and therefore
\begin{equation}
\|v\|_2\suptiny{0}{0}{2}
=
\sum_{(i,j)\in\mathcal{A}}|v_{ij}|\suptiny{0}{0}{2}
=
\sum_{\substack{i,j=1\\ i\neq j}}^{d\subtiny{0}{0}{E}}|P_{ij}|\suptiny{0}{0}{2}|(\rho\subtiny{0}{0}{0})_{ji}|\suptiny{0}{0}{2}
=
\sum_{\substack{i,j=1\\ i\neq j}}^{d\subtiny{0}{0}{E}}|P_{ij}|\suptiny{0}{0}{2}|(\rho\subtiny{0}{0}{0})_{ij}|\suptiny{0}{0}{2}
=
\mathcal{C}_{\mathrm{leak}}(\rho\subtiny{0}{0}{0};P).
\end{equation}
Combining the previous inequalities yields Eq.~\eqref{eq:leak_equilibrium_variance}.
Taking $T\to\infty$ gives the asymptotic statement since
$f(\varepsilon,T)\to\mathcal{N}(\varepsilon)$.
\end{proof}

\begin{lemma}[Variation bound for initial pure states]
\label{lemmaapp:leakage0_deff}
Under the assumptions of Theorem~\ref{thapp:leakage_linearentropy},
assume $\rho\subtiny{0}{0}{0}=\ketbra{\psi_0}{\psi_0}$ and $P=\ketbra{\psi_0}{\psi_0}$. Write
\begin{equation}
\ket{\psi_0}=\sum_{n=1}^{d\subtiny{0}{0}{E}} c_n \ket{E_n},
\quad
\deff:=\left(\sum_{n=1}^{d\subtiny{0}{0}{E}}|c_n|\suptiny{0}{1}{4}\right)\suptiny{-1}{-1}{-1},
\end{equation}
and define
\begin{equation}
\leakz(t):=1-\bra{\psi_0}\rho(t)\ket{\psi_0},
\quad
\leakz^\infty:=1-\bra{\psi_0}\eqst\ket{\psi_0}.
\end{equation}
Then, for \(T\gg \frac{8\log d\subtiny{0}{0}{E}}{\varepsilon}\),
\begin{equation}\label{eqapp:leak0_variation}
\avgT{\bigl|\leakz(t)-\leakz^\infty\bigr|\suptiny{0}{0}{2}}
\le
f(\varepsilon,T)\left(1-\frac{1}{d\subtiny{0}{0}{E}}\right)\frac{1}{\deff\suptiny{0}{1}{2}}.
\end{equation}
\end{lemma}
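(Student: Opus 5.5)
The plan is to specialize Lemma~\ref{lemmaapp:leakage_equilibrium_variance} to the rank-one case $\rho\subtiny{0}{0}{0}=P=\ketbra{\psi_0}{\psi_0}$, and then evaluate the leakage coherence $\mathcal{C}_{\mathrm{leak}}(\rho\subtiny{0}{0}{0};P)$ explicitly in terms of the populations $p_n=|c_n|\suptiny{0}{0}{2}$. With this choice, $\leakP(t)$ coincides with $\leakz(t)$ and $\leakP\suptiny{0}{0}{\infty}$ with $\leakz^\infty$. Since the lemma holds for an arbitrary projector and state in the regime $T\gg 8\log d\subtiny{0}{0}{E}/\varepsilon$, we obtain directly
\begin{equation}
\avgT{\bigl|\leakz(t)-\leakz^\infty\bigr|\suptiny{0}{0}{2}}\le f(\varepsilon,T)\,\mathcal{C}_{\mathrm{leak}}(\ketbra{\psi_0}{\psi_0};\ketbra{\psi_0}{\psi_0}).
\end{equation}

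Next we compute the coherence functional. In the energy eigenbasis, $P_{ij}=(\rho\subtiny{0}{0}{0})_{ij}=c_i c_j^*$, so $|P_{ij}|\suptiny{0}{0}{2}|(\rho\subtiny{0}{0}{0})_{ij}|\suptiny{0}{0}{2}=p_i\suptiny{0}{0}{2}p_j\suptiny{0}{0}{2}$. Summing over $i\neq j$ and completing the square gives
\begin{equation}
\mathcal{C}_{\mathrm{leak}}=\Bigl(\sum_n p_n\suptiny{0}{0}{2}\Bigr)\suptiny{0}{0}{2}-\sum_n p_n\suptiny{0}{1}{4}=\frac{1}{\deff\suptiny{0}{1}{2}}-\sum_n p_n\suptiny{0}{1}{4}.
\end{equation}
This is the same combination that appears in the numerator of Eq.~\eqref{eq:deffspec_exact_general_moments}, which serves as a consistency check.

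It then remains to show $\sum_n p_n\suptiny{0}{1}{4}\ge 1/(d\subtiny{0}{0}{E}\,\deff\suptiny{0}{1}{2})$, which is the only nontrivial step. The plan is to apply Cauchy--Schwarz to the vectors $(p_n\suptiny{0}{0}{2})_n$ and $(1,\dots,1)$ in $\mathbb{R}^{d\subtiny{0}{0}{E}}$. This yields $(\sum_n p_n\suptiny{0}{0}{2})\suptiny{0}{0}{2}\le d\subtiny{0}{0}{E}\sum_n p_n\suptiny{0}{1}{4}$, that is, $\sum_n p_n\suptiny{0}{1}{4}\ge 1/(d\subtiny{0}{0}{E}\deff\suptiny{0}{1}{2})$. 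If the support is smaller than $d\subtiny{0}{0}{E}$, the bound only becomes stronger. Substituting gives $\mathcal{C}_{\mathrm{leak}}\le (1-1/d\subtiny{0}{0}{E})/\deff\suptiny{0}{1}{2}$, and hence Eq.~\eqref{eqapp:leak0_variation}.

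The main point to check carefully is that the hypotheses of Lemma~\ref{lemmaapp:leakage_equilibrium_variance} (non-degenerate spectrum and the gap-counting definition of $\mathcal{N}(\varepsilon)$) carry over unchanged. They should, because the lemma is stated for an arbitrary state and projector. Equality in the last step holds precisely for the HUB case $p_n=1/d\subtiny{0}{0}{E}$, which recovers Eq.~\eqref{eq:leakage-variance_HUO}.
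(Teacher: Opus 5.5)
Your proposal is correct and follows essentially the same route as the paper. You specialize the quadratic-form and operator-norm bound of Lemma~\ref{lemmaapp:leakage_equilibrium_variance} to $\rho_0=P=\ketbra{\psi_0}{\psi_0}$, obtain $\|v\|_2^2=\sum_{i\neq j}p_i^2p_j^2=1/\deff^{2}-\sum_n p_n^4$, and bound $\sum_n p_n^4\ge 1/(d_E\,\deff^{2})$; the paper phrases that last step as Jensen's inequality for $x\mapsto x^2$ with uniform weights, which is equivalent to your Cauchy--Schwarz step.
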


\begin{proof}[Proof of Lemma~\ref{lemmaapp:leakage0_deff}]
Write the initial state in the energy basis
\(
\ket{\psi_0}=\sum_{n=1}^{d\subtiny{0}{0}{E}} c_n \ket{E_n},
\)
and denote the energy gaps by
\(
\omega_{ij}=E_i-E_j.
\)
The LFF signal can be written as
\begin{equation}
  \avgT{\lvert \leakz(t)-\leakz^{\infty}\rvert\suptiny{0}{0}{2}}
  =
  \sum_{a,b} \nu_{a}^{\ast}\, M_{ab}\, \nu_{b},
  \label{eq:quad-form}
\end{equation}
where $a=(i,j)$ and $b=(\ell,k)$, the coefficients are
\(
\nu_{(i,j)} = |c_i|\suptiny{0}{0}{2} |c_j|\suptiny{0}{0}{2}
\)
for $i\neq j$ (and zero for $i=j$), and the matrix $M$ contains the
time-averaged phase factors
\begin{equation}
  M_{ab}
  =
  \avgT{ e^{ i(\omega_{ij}-\omega_{\ell k}) t } }.
  \label{eq:M-def}
\end{equation}
Since the left-hand side of Eq.~\eqref{eq:quad-form} is the quadratic form $\nu^{\dagger} M \nu$, we use the operator-norm estimate
\begin{equation}
  \avgT{\lvert \leakz(t)-\leakz^{\infty}\rvert\suptiny{0}{0}{2}}
  \le
  \|M\| \sum_{a} |\nu_{a}|\suptiny{0}{0}{2}.
  \label{eq:reduce-to-nu}
\end{equation}
Denoting $\nu_{(i,j)} = |c_i|\suptiny{0}{0}{2} |c_j|\suptiny{0}{0}{2}$ for $i\neq j$, we identify the leakage coherence function for $\ket{\psi_0}$, with $P=\ketbra{\psi_0}{\psi_0}$
\begin{equation}
\begin{aligned}
 \mathcal{C}_{\mathrm{leak}}(\ket{\psi_0};P) = \sum_{a} |\nu_{a}|\suptiny{0}{0}{2}
  &= \sum_{i\ne j} |c_i|\suptiny{0}{1}{4} |c_j|\suptiny{0}{1}{4} ,\\
  &= \Bigl(\sum_i |c_i|\suptiny{0}{1}{4}\Bigr)\suptiny{-1}{-1}{2} - \sum_i |c_i|\suptiny{0}{1}{8}, \\
  &= \frac{1}{\deff\suptiny{0}{1}{2}} - \sum_i |c_i|\suptiny{0}{1}{8}.
\end{aligned}
\label{eq:sum-nu}
\end{equation}
Applying Jensen's inequality to $x\mapsto x\suptiny{0}{0}{2}$ with uniform weights
$p_i = 1/d\subtiny{0}{0}{E}$ yields
\begin{equation}
\begin{aligned}
  \sum_i |c_i|\suptiny{0}{1}{8}
  &= \sum_i (|c_i|\suptiny{0}{1}{4})\suptiny{0}{0}{2}, \\
  &\ge d\subtiny{0}{0}{E} \left(\frac{1}{d\subtiny{0}{0}{E}}\sum_i |c_i|\suptiny{0}{1}{4}\right)\suptiny{-2}{-2}{2}, \\
  &= \frac{1}{d\subtiny{0}{0}{E}}\,\frac{1}{\deff\suptiny{0}{1}{2}}.
\end{aligned}
\end{equation}
Substituting this into Eq.~\eqref{eq:sum-nu} gives
\begin{equation}
  \mathcal{C}_{\mathrm{leak}}(\ket{\psi_0};P)
  \le
  \Bigl(1 - \frac{1}{d\subtiny{0}{0}{E}}\Bigr)\frac{1}{\deff\suptiny{0}{1}{2}}.
  \label{eq:nu-final}
\end{equation}
Resulting, therefore,
\begin{equation}
\begin{aligned}
  \avgT{\lvert \leakz(t)-\leakz^{\infty}\rvert\suptiny{0}{0}{2}}
  &\le
  \|M\|\,
  \Bigl(1 - \frac{1}{d\subtiny{0}{0}{E}}\Bigr)\frac{1}{\deff\suptiny{0}{1}{2}}, \\
  &\le
  f(\varepsilon,T)\,
  \Bigl(1 - \frac{1}{d\subtiny{0}{0}{E}}\Bigr)\frac{1}{\deff\suptiny{0}{1}{2}}.
\end{aligned}
\end{equation}
By Lemma~\ref{lemmaapp:leakage_equilibrium_variance}, this proves Eq.~\eqref{eqapp:leak0_variation}.
\end{proof}

\begin{lemma}[Leakage coherence reduces under convex mixtures]
\label{lemmaapp:convexity}
Let
\(
H=\sum_n E_n \ket{E_n}\!\bra{E_n}
\)
be a Hamiltonian, and let \(P\) be an orthogonal projector.  
Consider an initial state \(\rho\subtiny{0}{0}{0}\) such that
\(
[P,\rho\subtiny{0}{0}{0}]=0.
\)
Then, for any spectral decomposition
\begin{equation}
\rho\subtiny{0}{0}{0}=\sum_a p_a \ket{\psi_a}\!\bra{\psi_a},
\quad
P=\sum_a \ket{\psi_a}\!\bra{\psi_a},
\end{equation}
with \(p_a\ge0\) and \(\sum_a p_a=1\), the leakage coherence
\begin{equation}
\mathcal C_{\mathrm{leak}}(\rho\subtiny{0}{0}{0};P)
=
\sum_{n\neq m}
|P_{nm}|\suptiny{0}{0}{2}\,|(\rho\subtiny{0}{0}{0})_{nm}|\suptiny{0}{0}{2},
\end{equation}
satisfies
\begin{equation}\label{eq:leakcoherence_convex_comb}
\mathcal C_{\mathrm{leak}}(\rho\subtiny{0}{0}{0};P)
\;\le\;
\sum_a p_a\,
\mathcal C_{\mathrm{leak}}(\ket{\psi_a}\!\bra{\psi_a};P).
\end{equation}
In particular, leakage coherence does not increase under convex mixtures of states commuting with \(P\).
\end{lemma}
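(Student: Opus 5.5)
The plan is to prove the inequality entry by entry in the energy eigenbasis, using only the convexity of $z\mapsto|z|^2$. The hypothesis $[P,\rho\subtiny{0}{0}{0}]=0$ serves only to guarantee a common eigenbasis $\{\ket{\psi_a}\}$. In that basis $\rho\subtiny{0}{0}{0}=\sum_a p_a\ketbra{\psi_a}{\psi_a}$ and $P=\sum_a\ketbra{\psi_a}{\psi_a}$ hold simultaneously, and the same $P$ appears on both sides of Eq.~\eqref{eq:leakcoherence_convex_comb}. Since the weights $|P_{nm}|^2$ are identical on both sides, the whole problem reduces to comparing $|(\rho\subtiny{0}{0}{0})_{nm}|^2$ with the corresponding pure-state quantities.

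First, fix a pair $n\neq m$ and write
\begin{equation}
(\rho\subtiny{0}{0}{0})_{nm}=\sum_a p_a\,x^{(a)}_{nm},\qquad x^{(a)}_{nm}:=\braket{E_n}{\psi_a}\braket{\psi_a}{E_m}.
\end{equation}
The weights $\{p_a\}$ are nonnegative and sum to one, so they form a probability vector. The triangle inequality followed by Jensen's inequality for the convex map $s\mapsto s^2$ (equivalently, Cauchy--Schwarz with weights $p_a$) then gives
\begin{equation}
|(\rho\subtiny{0}{0}{0})_{nm}|^2\le\Bigl(\sum_a p_a|x^{(a)}_{nm}|\Bigr)^2\le\sum_a p_a|x^{(a)}_{nm}|^2 .
\end{equation}

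Second, multiply this inequality by the nonnegative factor $|P_{nm}|^2$ and sum over all $n\neq m$. Exchanging the finite sums over $a$ and over $(n,m)$ yields
\begin{equation}
\mathcal C_{\mathrm{leak}}(\rho\subtiny{0}{0}{0};P)\le\sum_a p_a\sum_{n\neq m}|P_{nm}|^2\,\bigl|(\ketbra{\psi_a}{\psi_a})_{nm}\bigr|^2=\sum_a p_a\,\mathcal C_{\mathrm{leak}}(\ketbra{\psi_a}{\psi_a};P).
\end{equation}
The right-hand side is exactly the right-hand side of Eq.~\eqref{eq:leakcoherence_convex_comb}, which completes the argument.

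There is no real obstacle here; the only points needing care are bookkeeping. The sum over $a$ must run over the same index set in $\rho\subtiny{0}{0}{0}$ and in $P$, so eigenvectors with $p_a=0$ are harmless and simply drop out. The weights $p_a$ must be normalized for Jensen's inequality to apply. Finally, the monitored projector $P$ must be kept fixed rather than replaced by $\ketbra{\psi_a}{\psi_a}$ on the right-hand side. This last point is what later allows combining Lemma~\ref{lemmaapp:leakage_equilibrium_variance}, a pure-state estimate in the spirit of Lemma~\ref{lemmaapp:leakage0_deff}, and the mixing Past Hypothesis into Theorem~\ref{thapp:leakage_linearentropy}.
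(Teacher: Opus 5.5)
Your proof is correct and is essentially the paper's argument. Both use the common eigenbasis guaranteed by $[P,\rho_0]=0$, bound $|(\rho_0)_{nm}|^2\le\sum_a p_a|c_n^{(a)}|^2|c_m^{(a)}|^2$ with $c_n^{(a)}=\braket{E_n}{\psi_a}$ (your triangle-plus-Jensen step is the same weighted Cauchy--Schwarz inequality the paper uses), then multiply by $|P_{nm}|^2$, sum over $n\neq m$, and identify the inner sum as $\mathcal{C}_{\mathrm{leak}}(\ketbra{\psi_a}{\psi_a};P)$ with $P$ held fixed.

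Your closing remark about downstream use is tangential and slightly off. Lemma~\ref{lemmaapp:leakage0_deff} is stated for the rank-one case $P=\ketbra{\psi_0}{\psi_0}$, so it does not apply verbatim to the right-hand side when a rank-$r$ projector $P$ is kept fixed. This does not affect the proof of the lemma itself.
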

\begin{proof}[Proof of Lemma~\ref{lemmaapp:convexity}]
Since \([P,\rho\subtiny{0}{0}{0}]=0\), there exists a common orthonormal eigenbasis
\(\{\ket{\psi_a}\}\) such that
\begin{equation}
P=\sum_a \ket{\psi_a}\!\bra{\psi_a},
\quad
\rho\subtiny{0}{0}{0}=\sum_a p_a \ket{\psi_a}\!\bra{\psi_a}.
\end{equation}
Expanding these vectors in the energy eigenbasis,
\(
\ket{\psi_a}=\sum_n c_n^{(a)}\ket{E_n},
\)
the matrix elements in the energy basis read
\begin{equation}
P_{nm}=\sum_a c_n^{(a)}c_m^{(a)*},
\quad
(\rho\subtiny{0}{0}{0})_{nm}=\sum_a p_a\,c_n^{(a)}c_m^{(a)*}.
\end{equation}
Applying the Cauchy--Schwarz inequality to the sum over \(a\), one finds
\begin{equation}
|(\rho\subtiny{0}{0}{0})_{nm}|\suptiny{0}{0}{2}
=
\left|\sum_a p_a c_n^{(a)}c_m^{(a)*}\right|\suptiny{0}{0}{2}
\le
\sum_a p_a\,|c_n^{(a)}|\suptiny{0}{0}{2} |c_m^{(a)}|\suptiny{0}{0}{2}.
\end{equation}
Substituting this bound into the definition of
\(\mathcal C_{\mathrm{leak}}\), we obtain
\begin{equation}
\begin{aligned}
\mathcal C_{\mathrm{leak}}(\rho\subtiny{0}{0}{0};P)
&\le
\sum_{n\neq m}
|P_{nm}|\suptiny{0}{0}{2}
\sum_a p_a\,|c_n^{(a)}|\suptiny{0}{0}{2} |c_m^{(a)}|\suptiny{0}{0}{2} \\
&=
\sum_a p_a
\sum_{n\neq m}
|P_{nm}|\suptiny{0}{0}{2}
|c_n^{(a)}|\suptiny{0}{0}{2} |c_m^{(a)}|\suptiny{0}{0}{2}.
\end{aligned}
\end{equation}
The inner sum is precisely the leakage coherence associated with the pure state
\(\ket{\psi_a}\), namely
\begin{equation}
\mathcal C_{\mathrm{leak}}(\ket{\psi_a}\!\bra{\psi_a};P)
=
\sum_{n\neq m}
|P_{nm}|\suptiny{0}{0}{2}
|c_n^{(a)}|\suptiny{0}{0}{2} |c_m^{(a)}|\suptiny{0}{0}{2}.
\end{equation}
Therefore,
\begin{equation}\label{eqapp:convexity}
\mathcal C_{\mathrm{leak}}(\rho\subtiny{0}{0}{0};P)
\le
\sum_a p_a\,
\mathcal C_{\mathrm{leak}}(\ket{\psi_a}\!\bra{\psi_a};P),
\end{equation}
which completes the proof.
\end{proof}

\begin{lemma}[Leakage coherence bound]\label{lemma:leakcoh_bound}
For a given initial state $\rho\subtiny{0}{0}{0}=\sum_{\alpha=0}^{r-1} \lambda_{\alpha} \ketbra{\varphi_{\alpha}}{\varphi_{\alpha}}$, satisfying the {\it mixing part-hypothesis} in Eq.~\eqref{eqapp:past_hyp}, and a projector $P=\sum_{\alpha=0}^{r-1} \ketbra{\varphi_{\alpha}}{\varphi_{\alpha}}$, the leakage coherence function is bounded as 
    \begin{equation}
        \mathcal{C}_{\mathrm{leak}}(\rho\subtiny{0}{0}{0};P)\leq \left(1 - \frac{1}{d\subtiny{0}{0}{E}} \right)\frac{r\suptiny{0}{1}{3}}{\deff\suptiny{0}{0}{2}},
    \end{equation}
    where $d\subtiny{0}{0}{E}$ is the dimension of the Hilbert space, and $\deff^{-1}=\Tr(\eqst\suptiny{0}{0}{2})$.
\end{lemma}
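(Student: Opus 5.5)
The plan is to chain the three auxiliary lemmas. First, since $P=\sum_\alpha\ketbra{\varphi_\alpha}{\varphi_\alpha}$ is the support projector of $\rho\subtiny{0}{0}{0}$, we have $[P,\rho\subtiny{0}{0}{0}]=0$. Lemma~\ref{lemmaapp:convexity} then gives
\begin{equation*}
\mathcal C_{\mathrm{leak}}(\rho\subtiny{0}{0}{0};P)\le\sum_\alpha\lambda_\alpha\sum_{n\neq m}|P_{nm}|\suptiny{0}{0}{2}\,|c^{\alpha}_n|\suptiny{0}{0}{2}|c^{\alpha}_m|\suptiny{0}{0}{2}.
\end{equation*}
Next, I expand $P_{nm}=\sum_\beta c^\beta_n c^{\beta *}_m$ and apply Cauchy--Schwarz over the $r$ terms in $\beta$:
\begin{equation*}
|P_{nm}|\suptiny{0}{0}{2}\le r\sum_\beta|c^\beta_n|\suptiny{0}{0}{2}|c^\beta_m|\suptiny{0}{0}{2}.
\end{equation*}
This produces a double sum over $\alpha,\beta$ of cross terms $\sum_{n\neq m}p^\alpha_n p^\alpha_m p^\beta_n p^\beta_m$, where $p^\alpha_n=|c^\alpha_n|\suptiny{0}{0}{2}$ is the diagonal of $\eqst\subtiny{0}{0}{\alpha}$.

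The second step bounds each cross term by diagonal quantities. Using $\sum_{n\neq m}x_nx_my_ny_m\le(\sum_n x_ny_n)^2$ and the Cauchy--Schwarz bound $\sum_n p^\alpha_np^\beta_n\le\sqrt{\Tr(\eqst\subtiny{0}{0}{\alpha}\suptiny{0}{0}{2})\Tr(\eqst\subtiny{0}{0}{\beta}\suptiny{0}{0}{2})}$, each cross term is at most $\Tr(\eqst\subtiny{0}{0}{\alpha}\suptiny{0}{0}{2})\Tr(\eqst\subtiny{0}{0}{\beta}\suptiny{0}{0}{2})$. To recover the factor $(1-1/d\subtiny{0}{0}{E})$, I would instead keep the exact form $(\sum_n x_ny_n)^2-\sum_n x_n^2y_n^2$ and lower bound the subtracted term by Jensen with uniform weights, as in Lemma~\ref{lemmaapp:leakage0_deff}. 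This yields $\sum_n x_n^2y_n^2\ge(\sum_nx_ny_n)^2/d\subtiny{0}{0}{E}$. Summing over $\beta$ then gives at most $r\max_\beta$ or $\sum_\beta\Tr(\eqst\subtiny{0}{0}{\beta}\suptiny{0}{0}{2})$, which contributes the second factor of $r$.

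The third step converts $\sum_\alpha\lambda_\alpha\Tr(\eqst\subtiny{0}{0}{\alpha}\suptiny{0}{0}{2})$ into $\deff^{-1}$ by invoking the mixing Past Hypothesis in the form $\sum_\alpha\lambda_\alpha\Tr(\eqst\subtiny{0}{0}{\alpha}\suptiny{0}{0}{2})\le r\,\Tr(\eqst\suptiny{0}{0}{2})=r/\deff$. This accounts for the third power of $r$. The remaining factor is $\sum_\beta\Tr(\eqst\subtiny{0}{0}{\beta}\suptiny{0}{0}{2})$, and it must also be reduced to a multiple of $1/\deff$. With the hypothesis applied only to the $\lambda$-weighted average, this is not immediate.

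That is the main obstacle: the $\beta$-sum carries no $\lambda_\beta$ weights. I would first try the symmetric Cauchy--Schwarz route, bounding the full double sum by $(\sum_\alpha\sqrt{\lambda_\alpha}\sqrt{\Tr\eqst\subtiny{0}{0}{\alpha}\suptiny{0}{0}{2}})(\cdots)$. Failing that, I would adopt the reading implicit in the paper's remark after Eq.~\eqref{eqapp:past_hyp}: every component satisfies $\Tr(\eqst\subtiny{0}{0}{\beta}\suptiny{0}{0}{2})\lesssim r/\deff$, so that $\sum_\beta\le r\cdot r/\deff$. Combined with the weighted average $\le r/\deff$, this would give $r^3/\deff^{2}$. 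It also gives the distribution of powers as one $r$ from Cauchy--Schwarz and $r^2$ from the hypothesis.
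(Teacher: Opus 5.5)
The obstacle you flag in your last paragraph is a genuine gap, and no argument can close it: under the stated hypothesis the lemma is false.

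Your first step is more careful than the paper's. Lemma~\ref{lemmaapp:convexity} leaves the full rank-$r$ projector inside $|P_{nm}|^2$, and your bound $|P_{nm}|^2\le r\sum_\beta p^\beta_n p^\beta_m$ is the correct way to handle it. The paper instead substitutes Eq.~\eqref{eq:nu-final}, which was derived for $P=\ketbra{\psi_0}{\psi_0}$. It thereby silently replaces $P$ by $\ketbra{\varphi_\alpha}{\varphi_\alpha}$ and never pays that factor $r$. At exactly the point where you get stuck, the paper asserts $\frac{1}{r}\sum_\alpha x_\alpha\le\sum_\alpha\lambda_\alpha x_\alpha$ with $x_\alpha=\Tr(\mu_\alpha^2)$. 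This fails whenever $\lambda$ is concentrated on a component with small $x_\alpha$.

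Here is a counterexample. Take $r=2$, $d_E=M+1$, and $\ket{\varphi_0}=\sqrt{a}\,\ket{E_0}+\sqrt{(1-a)/M}\sum_{m=1}^{M}\ket{E_m}$. Choose any unit vector $\ket{\varphi_1}\perp\ket{\varphi_0}$ with $|\braket{E_0}{\varphi_1}|^2=\tfrac12-a$; it exists for $M\ge2$, $a<\tfrac12$. Then $P_{00}=\tfrac12$ and $\sum_{m\ge1}|P_{0m}|^2=P_{00}-P_{00}^2=\tfrac14$. Set $\lambda=(1-\epsilon,\epsilon)$ with $\epsilon\to0^+$.
\begin{itemize}
\item Since $|(\rho_0)_{0m}|^2\to a(1-a)/M$, one gets $\mathcal{C}_{\mathrm{leak}}(\rho_0;P)\ge 2\sum_{m\ge1}|P_{0m}|^2|(\rho_0)_{0m}|^2\to a(1-a)/(2M)$.
\item Meanwhile $1/\deff\to a^2+(1-a)^2/M$.
\item The hypothesis holds strictly in either form: $\sum_\alpha\lambda_\alpha x_\alpha$ and $r\,\Tr(\mu^2)$ tend to $x_0$ and $2x_0$, and $\Tr(\rho_0^2)\to1$.
\end{itemize}
For $M=a^{-2}$ (e.g.\ $a=10^{-2}$, $M=10^{4}$), the left side tends to at least $a^3(1-a)/2$, while the claimed bound tends to at most $32a^4$. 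The bound therefore fails for $a<1/65$. Along this family $\mathcal{C}_{\mathrm{leak}}\sim\deff^{-3/2}$, so even the $\deff^{-2}$ scaling does not survive. For nondegenerate gaps, the infinite-time LFF variance equals $\mathcal{C}_{\mathrm{leak}}$ exactly, and $f(\varepsilon,T)\to\mathcal{N}(\varepsilon)=1$ for small $\varepsilon$. So the same example also contradicts Theorem~\ref{thapp:leakage_linearentropy} at $r=2$.

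Your fallback, a per-component bound $\Tr(\mu_\beta^2)\lesssim r/\deff$, is an extra assumption, not the stated hypothesis. The example above violates it, since $\Tr(\mu_1^2)\ge(\tfrac12-a)^2$. Your power count is also off. In your chain the hypothesis enters twice, so you get $r\cdot(r/\deff)\cdot(r^2/\deff)=r^4/\deff^2$, not $r^3/\deff^2$. The surplus $r$ comes from Lemma~\ref{lemmaapp:convexity}, which trades $\Tr(\mu^2)$ for the weighted average $\sum_\alpha\lambda_\alpha x_\alpha\le r\,\Tr(\mu^2)$.

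You can remove it by bypassing that lemma. Use $|(\rho_0)_{nm}|^2\le\mu_{nn}\mu_{mm}$ and keep your Cauchy--Schwarz bound on $|P_{nm}|^2$. The same Jensen and Cauchy--Schwarz steps then give, with no hypothesis at all, $\mathcal{C}_{\mathrm{leak}}(\rho_0;P)\le r\,(1-1/d_E)\,\Tr(\mu^2)\sum_\beta\Tr(\mu_\beta^2)$. Consequently, the stated $r^3/\deff^2$ bound does hold whenever $\sum_\beta\Tr(\mu_\beta^2)\le r^2\,\Tr(\mu^2)$. This is the uniform-weight version of the mixing hypothesis; your per-component assumption implies it, and it is automatic for $\rho_0=P/r$. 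Unconditionally, one obtains $(1-1/d_E)\,r/(\lambda_{\min}^2\deff^2)$, because $\Tr(\mu^2)\ge\lambda_{\min}^2\sum_\beta\Tr(\mu_\beta^2)$.
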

\begin{proof}[Proof of Lemma \ref{lemma:leakcoh_bound}]
By Eq.\eqref{eq:leakcoherence_convex_comb}, and substituting Eq.\eqref{eq:nu-final}, respectively, we obtain the following
    \begin{align}
        \mathcal{C}_{\mathrm{leak}}(\rho\subtiny{0}{0}{0};P)&\leq \sum_{\alpha}\lambda_{\alpha} \mathcal{C}_{\mathrm{leak}}(\ket{\varphi_{\alpha}};P_{\alpha}),\\
        &\leq  \left(1 - \frac{1}{d\subtiny{0}{0}{E}} \right)\!\sum_{\alpha}\lambda_{\alpha}{\Tr(\eqst_{\alpha}\suptiny{0}{0}{2})}\suptiny{0}{0}{2},\\
        &\leq \left(1 - \frac{1}{d\subtiny{0}{0}{E}} \right)\! \left(\sum_{\alpha}\lambda_{\alpha}{\Tr(\eqst_{\alpha}\suptiny{0}{0}{2})}\right)\!\left(\sum_{\alpha}{\Tr(\eqst_{\alpha}\suptiny{0}{0}{2})}\right). 
    \end{align}
In the last step, we apply Cauchy-Schwarz inequality $\sum_{\alpha}\lambda_{\alpha}x_{\alpha}\suptiny{0}{0}{2}\leq (\sum_{\alpha}x_{\alpha})(\sum_{\alpha}\lambda_{\alpha}x_{\alpha})$, for $0\geq x_{\alpha}\geq 1$. For $x_{\alpha}={\Tr(\eqst_{\alpha}\suptiny{0}{0}{2})}$, and using the mixing past-hypothesis $\sum_{\alpha}\lambda_{\alpha}x_{\alpha}\leq r\, \Tr(\eqst\suptiny{0}{0}{2})$, which is also true for the normal distribution as $\sum_{\alpha}x_{\alpha}/r\leq \sum_{\alpha}\lambda_{\alpha}x_{\alpha}$, therefore the leakage coherence function will be bounded above as
\begin{equation}
    \mathcal{C}_{\mathrm{leak}}(\rho\subtiny{0}{0}{0};P)\leq \left(1 - \frac{1}{d\subtiny{0}{0}{E}} \right)\,{r\suptiny{0}{1}{3}}\,{\Tr(\eqst\suptiny{0}{0}{2})}\suptiny{0}{0}{2}.
\end{equation}
It completes the proof as $\Tr(\eqst\suptiny{0}{0}{2})=\deff^{-1}$, by definition.    
\end{proof}

\begin{proof}[Proof of Theorem \ref{thapp:leakage_linearentropy}]
Now, the proof of Theorem \ref{thapp:leakage_linearentropy} comes direct from the substitution of Lemma \ref{lemma:leakcoh_bound} in Lemma \ref{lemmaapp:leakage_equilibrium_variance}, resulting
\begin{align}
    \avgT{\bigl|\leakP(t)-\leakP\suptiny{0}{0}{\infty}\bigr|\suptiny{0}{0}{2}}
&\le
f(\varepsilon,T)\,\mathcal{C}_{\mathrm{leak}}(\rho\subtiny{0}{0}{0};P),\\
&\le f(\varepsilon,T)\,\left(1 - \frac{1}{d\subtiny{0}{0}{E}} \right)\frac{r\suptiny{0}{1}{3}}{\deff\suptiny{0}{0}{2}}.
\end{align}
Completing the proof. 
\end{proof}
\section{General Moments of the Uniform Dirichlet Distribution}
\label{appsec:General_moments_dirichlet}
In this appendix, we derive the moments of the population vector $\vec{p} = (p_1,\ldots,p_{d\subtiny{0}{0}{E}})$ that is Haar-typical within an energy shell of dimension $d\subtiny{0}{0}{E}$, and satisfying,
\begin{equation}
\frac{1}{\deff}=\sum_{i=1}^{d\subtiny{0}{0}{E}}p_i\suptiny{0}{0}{2}.
\end{equation} 
Let $\ket{\psi}$ be Haar-distributed on the unit sphere of a $d\subtiny{0}{0}{E}$-dimensional subspace and expanded in the energy eigenbasis as
\begin{equation}
\ket{\psi}=\sum_{i=1}^{d\subtiny{0}{0}{E}} c_i \ket{E_i},
\quad
p_i := |c_i|\suptiny{0}{0}{2}.
\end{equation}
A convenient construction of a Haar-random state consists of sampling independent complex Gaussian variables $z_i\sim\mathcal{N}_{\mathbb{C}}(0,1)$ and normalizing,
\begin{equation}
\ket{\psi}
=
\frac{1}{\sqrt{\sum_{j=1}^{d\subtiny{0}{0}{E}}|z_j|\suptiny{0}{0}{2}}}
\sum_{i=1}^{d\subtiny{0}{0}{E}} z_i \ket{E_i},
\quad
p_i=\frac{|z_i|\suptiny{0}{0}{2}}{\sum_{j=1}^{d\subtiny{0}{0}{E}}|z_j|\suptiny{0}{0}{2}}.
\end{equation}
Writing $X_i := |z_i|\suptiny{0}{0}{2}$ and $S := \sum_j X_j$, the populations take the form $p_i = X_i/S$. The variables $X_i$ are positive, independent, and identically distributed. The normalization step simply rescales this collection of positive numbers so that they sum to one, producing a probability vector. A fundamental structural fact is that whenever independent positive variables with identical statistical weight are normalized by their total sum, the resulting vector is symmetrically distributed over the probability simplex. In the Haar-typical case considered here, the population vector is distributed according to the symmetric Dirichlet law with unit parameters, namely
\begin{equation}
\vec p \sim \mathrm{Dirichlet}_{d\subtiny{0}{0}{E}}(a_1,\ldots,a_{d\subtiny{0}{0}{E}}),
\,\text{ with:  }
a_1=\cdots=a_{d\subtiny{0}{0}{E}}=1,
\end{equation}
that is,
\begin{equation}
\vec p \sim \mathrm{Dirichlet}_{d\subtiny{0}{0}{E}}(1,\ldots,1).
\end{equation}
that is, the squared amplitudes of a Haar-random pure state are uniformly distributed over the simplex
\begin{equation}
\Delta_{d\subtiny{0}{0}{E}-1}
=
\Big\{(p_1,\dots,p_{d\subtiny{0}{0}{E}})\in\mathbb{R}^{d\subtiny{0}{0}{E}}:
p_i\ge 0,\;
\sum_{i=1}^{d\subtiny{0}{0}{E}} p_i=1
\Big\}.
\end{equation}
Equivalently, the Haar measure on the unit sphere induces the natural uniform measure on populations after projection to $|c_i|\suptiny{0}{0}{2}$ \cite{BengtssonZyczkowski2017,NgTianTang2011}. More generally, if $\vec p\sim\mathrm{Dirichlet}_{d\subtiny{0}{0}{E}}(a_1,\ldots,a_{d\subtiny{0}{0}{E}})$ and $r_1,\ldots,r_{d\subtiny{0}{0}{E}}$ are nonnegative integers, then
\begin{equation}
\mathbb{E}\!\left[\prod_{i=1}^{d\subtiny{0}{0}{E}} p_i^{r_i}\right]
=
\frac{\prod_{i=1}^{d\subtiny{0}{0}{E}}(a_i)^{(r_i)}}
{(a_+)^{(\sum_i r_i)}},
\quad
a_+ := \sum_{i=1}^{d\subtiny{0}{0}{E}} a_i,
\label{eq:dirichlet-moments}
\end{equation}
where $(x)^{(m)}$ denotes the rising factorial (Pochhammer symbol),
\begin{equation}
(x)^{(m)} := x(x+1)\cdots(x+m-1),
\quad
(x)^{(0)} := 1,
\end{equation}
equivalently $(x)^{(m)}=\Gamma(x+m)/\Gamma(x)$.

\subsection*{Uniform Case \texorpdfstring{$a_i=1$}{ai=1} for all \texorpdfstring{$i=1,\ldots,d\subtiny{0}{0}{E}$}{i=1,\ldots,d\subtiny{0}{0}{E}}}

\noindent For $a_i = 1$ for all $i=1,\dots,d\subtiny{0}{0}{E}$, and $a_+ := \sum_{i=1}^{d\subtiny{0}{0}{E}} a_i = d\subtiny{0}{0}{E},$ we have
\begin{equation}
\mathbb{E}(p_i^m)
=
\frac{m!}{d\subtiny{0}{0}{E}(d\subtiny{0}{0}{E}+1)\cdots(d\subtiny{0}{0}{E}+m-1)}.
\end{equation}

\subsection*{First Moment of \texorpdfstring{$S=\sum_i p_i\suptiny{0}{0}{2}$}{S}}

\noindent Using Eq.~\eqref{eq:dirichlet-moments},
\begin{equation}
\mathbb{E}(p_i\suptiny{0}{0}{2})
=
\frac{2}{d\subtiny{0}{0}{E}(d\subtiny{0}{0}{E}+1)}.
\end{equation}
By symmetry,
\begin{equation}
\mathbb{E}\!\left[\sum_{i=1}^{d\subtiny{0}{0}{E}} p_i\suptiny{0}{0}{2}\right]
=
\frac{2}{d\subtiny{0}{0}{E}+1}.
\label{eq:resultado_lemma_comentado}
\end{equation}

\subsection*{Second Moment of \texorpdfstring{$S$}{S}}

\noindent Expanding $S\suptiny{0}{0}{2}$,
\begin{equation}
S\suptiny{0}{0}{2}
=
\sum_{i=1}^{d\subtiny{0}{0}{E}} p_i\suptiny{0}{1}{4}
+
2\sum_{1\le i<j\le d\subtiny{0}{0}{E}} p_i\suptiny{0}{0}{2} p_j\suptiny{0}{0}{2}.
\end{equation}
The required expectations are
\begin{equation}
\mathbb{E}(p_i\suptiny{0}{1}{4})
=
\frac{24}{d\subtiny{0}{0}{E}(d\subtiny{0}{0}{E}+1)(d\subtiny{0}{0}{E}+2)(d\subtiny{0}{0}{E}+3)},
\quad
\mathbb{E}(p_i\suptiny{0}{0}{2} p_j\suptiny{0}{0}{2})
=
\frac{4}{d\subtiny{0}{0}{E}(d\subtiny{0}{0}{E}+1)(d\subtiny{0}{0}{E}+2)(d\subtiny{0}{0}{E}+3)}.
\end{equation}
Hence,
\begin{equation}
\label{eq:ES2}
\mathbb{E}(S\suptiny{0}{0}{2})
=
\frac{4(d\subtiny{0}{0}{E}+5)}{(d\subtiny{0}{0}{E}+1)(d\subtiny{0}{0}{E}+2)(d\subtiny{0}{0}{E}+3)}.
\end{equation}

\subsection*{Variance and concentration}

\noindent From the above,
\begin{equation}
\mathrm{Var}\!\Big(\frac{1}{\deff}\Big)
=
\frac{4(d\subtiny{0}{0}{E}-1)}{(d\subtiny{0}{0}{E}+1)\suptiny{0}{0}{2}(d\subtiny{0}{0}{E}+2)(d\subtiny{0}{0}{E}+3)}.
\end{equation}
For large $d\subtiny{0}{0}{E}$,
\begin{equation}
\mathbb{E}\!\Big(\frac{1}{\deff}\Big)
\sim \frac{2}{d\subtiny{0}{0}{E}},
\quad
\mathrm{Var}\!\Big(\frac{1}{\deff}\Big)
\sim \frac{4}{d\subtiny{0}{0}{E}\suptiny{0}{1}{3}}.
\end{equation}
Thus the standard deviation scales as $d\subtiny{0}{0}{E}^{-3/2}$ while the mean scales as $d\subtiny{0}{0}{E}^{-1}$, implying relative fluctuations of order $d\subtiny{0}{0}{E}^{-1/2}$. Consequently, for large energy shells, $1/\deff$ is sharply concentrated near $1/d\subtiny{0}{0}{E}$, and $\deff$ is typically of order $d\subtiny{0}{0}{E}$.

\subsection{Proof of Leakage concentration bound presented in Theorem~\ref{thm:leakage_concentration_dir}}
\label{app:leakage_concentration}

\begin{theorem}[Leakage concentration from the variance bound]\label{thmapp:leakage_concentration_dir}
Assuming that, for a given time window $[0,T]$,
for any fixed initial state $\ket{\psi_0}$ drawn from the complex Haar ensemble (so that the populations $p_i=|c_i|\suptiny{0}{0}{2}$ are $\mathrm{Dirichlet}(1,\ldots,1)$). Then:\\
\noindent\emph{(A) Time probability (fixed state).}
For any fixed state $\ket{\psi_0}$, for any $\epsilon>0$ and $t\sim\mathrm{Unif}[0,T]$,
\begin{equation}\label{eq:A-time}
\Pr_t\!\Big(\,|\leakz(t)-\leakz\suptiny{0}{0}{\infty}|\ \ge\ \epsilon\,\Big)
\ \le\
\frac{f(\varepsilon,T)}{\epsilon^{2}}\,
\Bigl(1-\frac{1}{d\subtiny{0}{0}{E}}\Bigr)\,
\frac{1}{\deff\suptiny{0}{1}{2}}.
\end{equation}\\
\noindent\emph{(B) Ensemble probability (random state).}
Let $\ket{\psi_0}$ be distributed according to the Haar measure. For any fixed averaging time $T$ and any $\eta>0$, 
\begin{align}\label{eq:B-ensemble}
&\Pr_{\ket{\psi_0}}\!\Big(\,\Big\langle\big|\leakz(t)-\leakz\suptiny{0}{0}{\infty}\big|\suptiny{0}{0}{2}\Big\rangle\subtiny{0}{0}{T}\ \ge\ \eta\,\Big)
\le F(\eta,T,d\subtiny{0}{0}{E}),
\end{align}
where $F(\eta,T,d\subtiny{0}{0}{E}) = \frac{f(\varepsilon,T)}{\eta}\,
\Bigl(1-\frac{1}{d\subtiny{0}{0}{E}}\Bigr)g(d\subtiny{0}{0}{E})$, and 
$g(d\subtiny{0}{0}{E})=4(d\subtiny{0}{0}{E}+5)/(d\subtiny{0}{0}{E}+1)(d\subtiny{0}{0}{E}+2)(d\subtiny{0}{0}{E}+3)$.
\end{theorem}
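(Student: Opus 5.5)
The plan is to derive both parts from Markov-type inequalities applied to the variance bound of Lemma~\ref{lemmaapp:leakage0_deff}. Part (A) uses the time average at a fixed state; part (B) uses the ensemble average over Haar states.

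\textbf{Part (A).} Fix $\ket{\psi_0}$ and regard $t\sim\mathrm{Unif}[0,T]$ as a random variable. The nonnegative quantity $X(t)=|\leakz(t)-\leakz\suptiny{0}{0}{\infty}|^2$ then has expectation $\avgT{X}$. Chebyshev's inequality, i.e. Markov applied to $X$ with threshold $\epsilon^2$, gives $\Pr_t(|\leakz(t)-\leakz\suptiny{0}{0}{\infty}|\ge\epsilon)\le \avgT{X}/\epsilon^2$. Inserting Eq.~\eqref{eqapp:leak0_variation} yields Eq.~\eqref{eq:A-time} directly. No randomness over states enters, so this holds for every fixed state.

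\textbf{Part (B).} Fix $T$ and treat $Y(\psi_0)=\avgT{|\leakz(t)-\leakz\suptiny{0}{0}{\infty}|^2}$ as a nonnegative random variable over the Haar ensemble. Markov's inequality gives $\Pr(Y\ge\eta)\le \mathbb{E}[Y]/\eta$.

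To control $\mathbb{E}[Y]$, I will not use the final form of Lemma~\ref{lemmaapp:leakage0_deff}, because the $1/\deff^2$ there is random. Instead I go back one step, to the operator-norm estimate Eq.~\eqref{eq:reduce-to-nu} combined with Eq.~\eqref{eq:nu-final}:
\begin{equation*}
Y\le f(\varepsilon,T)\Bigl(1-\frac{1}{d\subtiny{0}{0}{E}}\Bigr)S^2,\qquad S=\sum_i p_i^2 .
\end{equation*}
The key point is that the spectral factor $f(\varepsilon,T)$ depends only on $H$, $\varepsilon$ and $T$, and not on the state. The inequality therefore holds pointwise for every $\ket{\psi_0}$, and taking expectations gives
\begin{equation*}
\mathbb{E}[Y]\le f(\varepsilon,T)\Bigl(1-\frac{1}{d\subtiny{0}{0}{E}}\Bigr)\mathbb{E}[S^2].
\end{equation*}
Since Haar populations are $\mathrm{Dirichlet}(1,\ldots,1)$, Eq.~\eqref{eq:ES2} gives $\mathbb{E}[S^2]=g(d\subtiny{0}{0}{E})$. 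Substituting into Markov's inequality yields Eq.~\eqref{eq:B-ensemble} with the stated $F$.

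\textbf{Main obstacle.} The delicate step is justifying the pointwise bound with a state-independent prefactor, so that the expectation passes inside. Two issues need checking.

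First, Jensen's inequality in Eq.~\eqref{eq:nu-final} holds for every population vector, so $\mathcal{C}_{\mathrm{leak}}\le(1-1/d\subtiny{0}{0}{E})S^2$ holds surely, not just on average.

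Second, the non-degenerate-gap assumption behind $f(\varepsilon,T)$ must hold almost surely, which it does because it concerns $H$ alone.

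A minor point is that the window width $\varepsilon$ in $f$ is a free parameter distinct from the tolerance $\epsilon$. I will state this explicitly to avoid confusion, and note that both bounds hold for any choice of the window width.
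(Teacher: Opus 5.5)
Your proposal is correct and follows essentially the same route as the paper: for (A), Markov's inequality on $X(t)=|\leakz(t)-\leakz\suptiny{0}{0}{\infty}|^2$ with threshold $\epsilon^2$, combined with the pure-state variance bound; for (B), Markov's inequality over the Haar ensemble for the time-averaged deviation, pulling out the state-independent prefactor $f(\varepsilon,T)(1-1/d_E)$ and using the Dirichlet moment $\mathbb{E}[1/\deff^{2}]=\mathbb{E}[S^2]=g(d_E)$ with $S=\sum_i p_i^2$. Your step back to the operator-norm estimate is harmless but unnecessary: $S^2=1/\deff^{2}$ exactly, so the final form of Lemma~\ref{lemmaapp:leakage0_deff} is already the pointwise bound the paper averages, and $f$ needs no non-degenerate-gap assumption, only that it does not depend on the state.
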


\begin{proof}
(A) Define the non-negative random variable
\begin{equation}
X(t)\ :=\ \big|\mathrm{\leakz}(t)-\mathrm{\leakz}^\infty\big|\suptiny{0}{0}{2} \ \ge\ 0.
\end{equation}
Markov's inequality for $X$ with threshold $\varepsilon\suptiny{0}{0}{2}>0$ gives
\begin{equation}
\Pr_t\!\big(X\ge \varepsilon\suptiny{0}{0}{2}\big)\ \le\ \frac{\langle X\rangle_T}{\varepsilon\suptiny{0}{0}{2}}
\ =\
\frac{\Big\langle\big|\mathrm{\leakz}(t)-\mathrm{\leakz}^\infty\big|\suptiny{0}{0}{2}\Big\rangle_T}{\varepsilon\suptiny{0}{0}{2}}.
\end{equation}
Using the variance bound given in Eq.~\eqref{eq:leakZ_bound} we obtain
\begin{equation}
\Pr_t\!\Big(\,|\mathrm{\leakz}(t)-\mathrm{\leakz}^\infty|\ \ge\ \varepsilon\,\Big)
\ \le\
\frac{f(T)}{\varepsilon^{2}}\,
\Bigl(1-\frac{1}{d\subtiny{0}{0}{E}}\Bigr)\,
\frac{1}{\deff\suptiny{0}{1}{2}},
\end{equation}
which is precisely Eq.~\eqref{eq:A-time}.
\end{proof}
\begin{proof}
(B) By definition $X(t)\ge 0$, its time average
\begin{equation}
Z(\ket{\psi_0})\ :=\ \Big\langle\big|\mathrm{\leakz}(t)-\mathrm{\leakz}^\infty\big|\suptiny{0}{0}{2}\Big\rangle_T
\end{equation}
is also non-negative for each fixed $\ket{\psi_0}$. Averaging the bound presented in Eq.~\eqref{eq:A-time} over the Haar ensemble, we obtain
\begin{equation}
\mathbb{E}_{\ket{\psi_0}}\!\big[Z(\ket{\psi_0})\big]
\ \le\
f(T)\,\Bigl(1-\tfrac{1}{d\subtiny{0}{0}{E}}\Bigr)\,
\mathbb{E}_{\ket{\psi_0}}\!\big[\frac{1}{\deff\suptiny{0}{0}{2}}\big].
\end{equation}
Now apply Markov's inequality to the non-negative random variable $Z(\ket{\psi_0})$ with threshold $\eta>0$:
\begin{eqnarray}
\Pr_{\ket{\psi_0}}\big(Z(\ket{\psi_0})\ge\eta\big)
&\le& \frac{\mathbb{E}_{\ket{\psi_0}}[Z(\ket{\psi_0})]}{\eta},\\
&\le& \frac{f(T)}{\eta}\,\Bigl(1-\tfrac{1}{d\subtiny{0}{0}{E}}\Bigr)\,
\mathbb{E}_{\ket{\psi_0}}\!\big[\frac{1}{\deff\suptiny{0}{0}{2}}\big].
\end{eqnarray}
Finally, using Eq.~\eqref{eq:ES2}, one obtains the explicit Haar value
\begin{equation}
\mathbb{E}_{\ket{\psi_0}}\!\big[\frac{1}{\deff\suptiny{0}{0}{2}}\big]
=
\frac{4(d\subtiny{0}{0}{E}+5)}{(d\subtiny{0}{0}{E}+1)(d\subtiny{0}{0}{E}+2)(d\subtiny{0}{0}{E}+3)},
\end{equation}
from which Eq.~\eqref{eq:B-ensemble} follows.
\end{proof}

\section{Dynamical and spectral structure on equilibration}
\label{appsec:deffdyn}
In this Section, we present the proof of Theorem~\ref{thm:LFF_variance_clean} and discuss the details about the spectral decomposition of $\deffspec$.
\subsection{Dynamical effective dimension}

\begin{theorem}[Amplitude spectral dimension and exact LFF variance]
\label{theoapp:deffdyn_variance}
Let $\ket{\psi_0}=\sum_n c_n \ket{E_n}$ with $p_n:=|c_n|\suptiny{0}{0}{2}$, and assume non-degenerate energy gaps. Define the weights
\begin{equation}
\qnmdyn
:=
\frac{2p_n p_m}{\displaystyle\sum_{i<j}2p_i p_j},
\qquad n<m .
\end{equation}
Then:
\begin{itemize}
    \item[i.] The weights admit the explicit form
\begin{equation}
\qnmdyn
=
\frac{2p_n p_m}{1-\sum_k p_k\suptiny{0}{0}{2}},
\end{equation}
and define a normalized distribution,
\begin{equation}
\sum_{n<m} \qnmdyn = 1.
\end{equation}
\item [ii.] The amplitude spectral effective dimension
\begin{equation}
\deffdyn
:=
\frac{1}{\sum_{n<m}(\qnmdyn)\suptiny{0}{1}{2}}
\end{equation}
is well defined.
\item[iii.] The infinite-time averaged quadratic fluctuation of the LFF satisfies
\begin{equation}
\avginfty{|\leakz(t)-\leakz\suptiny{0}{0}{\infty}|^{2}}
=
\frac{\left(1-\frac{1}{\deff}\right)\suptiny{-1}{-1}{2}}{2\,\deffdyn}.
\end{equation}
\end{itemize}
\end{theorem}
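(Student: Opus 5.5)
The plan is to work directly from the cosine decomposition in Eq.~\eqref{eq:cosine_deltaL}, $\delta\!\leakz(t)=-2\sum_{n<m}p_np_m\cos(\omega_{nm}t)$, and compute the infinite-time average of its square exactly. The non-degenerate-gap hypothesis makes the cross terms vanish, so no inequality such as Lemma~\ref{lemmaapp:leakage_equilibrium_variance} is needed.

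For items (i) and (ii), I will expand $1=(\sum_k p_k)^2=\sum_k p_k^2+2\sum_{n<m}p_np_m$. This gives $\sum_{i<j}2p_ip_j=1-\sum_kp_k^2=1-1/\deff$, hence both the explicit form of $\qnmdyn$ and $\sum_{n<m}\qnmdyn=1$. For well-definedness, I will note two points.
\begin{itemize}
\item The denominator of $\qnmdyn$ is nonzero exactly when $\deff>1$, i.e.\ when at least two populations are nonzero; this is the implicit nondegeneracy of the state that must be assumed, since otherwise $\delta\!\leakz\equiv0$ and both sides are degenerate.
\item Given this, $\sum_{n<m}(\qnmdyn)^2\ge \max_{n<m}(\qnmdyn)^2>0$, so $\deffdyn$ is finite and positive.
\end{itemize}

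For item (iii), I will square the cosine sum and use $\cos a\cos b=\tfrac12[\cos(a-b)+\cos(a+b)]$. This gives
\begin{equation}
|\delta\!\leakz(t)|^2=4\sum_{n<m}\sum_{k<l}p_np_mp_kp_l\,\tfrac12\bigl[\cos((\omega_{nm}-\omega_{kl})t)+\cos((\omega_{nm}+\omega_{kl})t)\bigr].
\end{equation}
Then I will take $\avginfty{\cdot}$, using $\avginfty{\cos(\Omega t)}=0$ for $\Omega\neq0$ and $=1$ for $\Omega=0$. The main obstacle is stating precisely what ``non-degenerate gaps'' must guarantee. I will read it as: every gap $\omega_{nm}$ with $n\neq m$ is nonzero, and $\omega_{nm}=\omega_{kl}$ with $n\neq m$ forces $(n,m)=(k,l)$. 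Since $\omega_{kl}=-\omega_{lk}$, this also excludes $\omega_{nm}=-\omega_{kl}$ for distinct unordered pairs $\{n,m\}\neq\{k,l\}$.

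With this reading, the only surviving contribution is the diagonal term $(k,l)=(n,m)$ in the difference cosine. The sum cosine averages to zero there because $2\omega_{nm}\neq0$. Hence
\begin{equation}
\avginfty{|\delta\!\leakz|^2}=2\sum_{n<m}p_n^2p_m^2 .
\end{equation}

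Finally, I will substitute $p_np_m=\tfrac12\qnmdyn(1-1/\deff)$ from (i). This gives
\begin{equation}
2\sum_{n<m}\tfrac14(\qnmdyn)^2\left(1-\tfrac1{\deff}\right)^2=\frac{(1-1/\deff)^2}{2\,\deffdyn},
\end{equation}
which is the claim.
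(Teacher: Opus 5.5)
Your proposal is correct and follows essentially the same route as the paper. It fixes the normalization via $(\sum_k p_k)^2=1$, squares the cosine decomposition, and uses non-degenerate gaps to kill all cross terms, leaving $2\sum_{n<m}p_n^2p_m^2$; substituting $p_np_m=\tfrac12\qnmdyn(1-1/\deff)$ then closes the argument. Your explicit handling of the sum-frequency terms and of the condition $\deff>1$ needed for well-definedness fills in details the paper leaves implicit.
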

\begin{proof}
We first evaluate the normalization factor. Expanding
\begin{equation}
\left(\sum_n p_n\right)\suptiny{-2}{-2}{2}
=
\sum_n p_n\suptiny{0}{0}{2}+\sum_{n\neq m}p_n p_m,
\end{equation}
and using
\begin{equation}
\sum_{n\neq m}p_n p_m
=
2\sum_{n<m}p_n p_m,
\end{equation}
we obtain
\begin{equation}
\sum_{n<m}2p_n p_m
=
\left(\sum_n p_n\right)\suptiny{-2}{-2}{2}-\sum_n p_n\suptiny{0}{0}{2}.
\end{equation}
Since $\sum_n p_n=1$, this yields
\begin{equation}
\sum_{n<m}2p_n p_m
=
1-\sum_n p_n\suptiny{0}{0}{2},
\end{equation}
which proves (i). For the variance, using
\begin{equation}
\delta\!\leakz(t)
=
-2\sum_{n<m} p_n p_m \cos(\omega_{nm}t),
\end{equation}
we compute
\begin{equation}
\delta\!\leakz(t)\suptiny{0}{0}{2}
=
4\sum_{n<m}\sum_{i<j}
p_n p_m p_i p_j
\cos(\omega_{nm}t)\cos(\omega_{ij}t).
\end{equation}
Taking the infinite-time average and using non-degenerate gaps,
\begin{equation}
\avginfty{\cos(\omega_{nm}t)\cos(\omega_{ij}t)}
=
\frac{1}{2}\,\delta_{ni}\delta_{mj},
\end{equation}
so that
\begin{equation}
\avginfty{|\delta\!\leakz(t)|\suptiny{0}{0}{2}}
=
2\sum_{n<m} p_n\suptiny{0}{0}{2} p_m\suptiny{0}{0}{2}.
\end{equation}
On the other hand,
\begin{equation}
\sum_{n<m}(\qnmdyn)\suptiny{0}{0}{2}
=
\frac{4\sum_{n<m}p_n\suptiny{0}{0}{2}p_m\suptiny{0}{0}{2}}
{\left(1-\sum_k p_k\suptiny{0}{0}{2}\right)\suptiny{0}{0}{2}},
\end{equation}
so that
\begin{equation}
2\sum_{n<m}p_n\suptiny{0}{0}{2}p_m\suptiny{0}{0}{2}
=
\frac{\left(1-\sum_k p_k\suptiny{0}{0}{2}\right)\suptiny{0}{0}{2}}{2\,\deffdyn}.
\end{equation}
Finally, using $\sum_k p_k\suptiny{0}{0}{2}=1/\deff$, we obtain (iii).
\end{proof}

\subsection{Spectral effective dimension}

\begin{prop}[Power spectral effective dimension]
\label{prop:deffspec}
The inverse participation ratio of the normalized spectral-power distribution defines the power spectral effective dimension of the LFF,
\begin{equation}
\label{eq:def_deffspec_LFF}
\deffspec
:=
\frac{1}{\displaystyle\sum_{n<m}\left[p\subtiny{0}{0}{\leak}(\omega_{nm})\right]\suptiny{0}{0}{2}}.
\end{equation}
Equivalently, using Eq.~\eqref{eq:pow_distribution_LFF},
\begin{equation}
\label{eq:deffspec_exact_general_clean}
\deffspec
=
\frac{\left(\displaystyle\sum_{n<m} p_n\suptiny{0}{0}{2}p_m\suptiny{0}{0}{2}\right)\suptiny{-2}{-2}{2}}{\displaystyle\sum_{n<m} p_n\suptiny{0}{1}{4}p_m\suptiny{0}{1}{4}}.
\end{equation}
Using
\begin{equation}
\sum_{n<m} p_n\suptiny{0}{0}{2}p_m\suptiny{0}{0}{2}
=
\frac12\left[
\left(\sum_n p_n\suptiny{0}{0}{2}\right)\suptiny{-2}{-2}{2}-\sum_n p_n\suptiny{0}{1}{4}
\right]
=
\frac12\left(
\frac{1}{\deff\suptiny{0}{0}{2}}-\sum_n p_n\suptiny{0}{1}{4}
\right),
\end{equation}
and
\begin{equation}
\sum_{n<m} p_n\suptiny{0}{1}{4}p_m\suptiny{0}{1}{4}
=
\frac12\left[
\left(\sum_n p_n\suptiny{0}{1}{4}\right)\suptiny{-2}{-2}{2}-\sum_n p_n\suptiny{0}{1}{8}
\right],
\end{equation}
one also obtains
\begin{equation}
\label{eq:deffspec_exact_app}
\deffspec
=
\frac{
\left(
\frac{1}{\deff\suptiny{0}{0}{2}}-\sum_n p_n\suptiny{0}{1}{4}
\right)\suptiny{-1}{-1}{2}
}{
2\left[
\left(\sum_n p_n\suptiny{0}{1}{4}\right)\suptiny{0}{0}{2}-\sum_n p_n\suptiny{0}{1}{8}
\right]
}.
\end{equation}
\end{prop}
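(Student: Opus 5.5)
The statement to prove is Proposition~\ref{prop:deffspec}. It is essentially an algebraic identity, so the plan is a direct substitution followed by symmetric-sum manipulations.

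\textbf{Step 1: the ratio form.} Under the non-degenerate-gap assumption, each unordered pair $n<m$ carries its own Bohr frequency $\omega_{nm}$, so the normalized power distribution is
\[
p\subtiny{0}{0}{\leak}(\omega_{nm})=\frac{p_n^2p_m^2}{\Sigma_2},\qquad \Sigma_2:=\sum_{i<j}p_i^2p_j^2,
\]
as in Eq.~\eqref{eq:pow_distribution_LFF}. I will square this and sum over $n<m$, which gives $\sum_{n<m}p_n^4p_m^4/\Sigma_2^2$. Inverting yields Eq.~\eqref{eq:deffspec_exact_general_clean}.

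\textbf{Step 2: rewriting the numerator.} For any weights $x_n\ge0$,
\[
\Big(\sum_n x_n\Big)^2=\sum_n x_n^2+2\sum_{n<m}x_nx_m .
\]
This is the same identity already used in the proof of Theorem~\ref{theoapp:deffdyn_variance}. I apply it with $x_n=p_n^2$. Together with $\sum_n p_n^2=1/\deff$, this gives
\[
\Sigma_2=\tfrac12\Big(\deff^{-2}-\sum_n p_n^4\Big).
\]

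\textbf{Step 3: rewriting the denominator.} I apply the same identity with $x_n=p_n^4$, which gives
\[
\sum_{n<m}p_n^4p_m^4=\tfrac12\Big[\Big(\sum_n p_n^4\Big)^2-\sum_n p_n^8\Big].
\]

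\textbf{Step 4: collecting the factors.} Substituting Steps 2 and 3 into the ratio from Step 1 gives a prefactor $(1/4)/(1/2)=1/2$. This matches the factor $2$ appearing in the denominator of Eq.~\eqref{eq:deffspec_exact_app}.

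\textbf{Main obstacle.} The only real care needed is in well-definedness and in the bookkeeping that identifies frequency bins with pairs. This identification requires that the positive gaps be pairwise distinct, so that no two pairs contribute to the same $\omega_k$ and no cross terms appear in $|\delta\leakz(\omega)|^2$. It also requires at least two nonzero populations, so that $\Sigma_2>0$. I will state both assumptions explicitly. If I wanted to avoid the Dirac-delta squaring issue, I would note that the formula can be taken as the definition via the discrete weights $\mathcal G\subtiny{0}{0}{\leak}=p_n^2p_m^2$ of Eq.~\eqref{eq:SPD_definition}.
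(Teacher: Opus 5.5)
Your proposal is correct and follows the paper's proof essentially step for step. You substitute the normalized weights $p_n^2p_m^2/\sum_{i<j}p_i^2p_j^2$ into the inverse participation ratio, invert, then apply $(\sum_n x_n)^2=\sum_n x_n^2+2\sum_{n<m}x_nx_m$ with $x_n=p_n^2$ and $x_n=p_n^4$, and the factor $\tfrac14/\tfrac12$ yields the $2$ in the denominator; your explicit conditions (pairwise distinct positive gaps, at least two nonzero populations) are sensible additions that the paper leaves implicit.
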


\begin{proof}
Substituting Eq.~\eqref{eq:pow_distribution_LFF} into Eq.~\eqref{eq:def_deffspec_LFF}, we obtain
\begin{align}
\sum_{n<m}
\left[
p\subtiny{0}{0}{\leak}(\omega_{nm})
\right]\suptiny{0}{0}{2}
&=
\sum_{n<m}
\frac{p_n\suptiny{0}{1}{4}p_m\suptiny{0}{1}{4}}
{\left(\sum_{i<j}p_i\suptiny{0}{0}{2}p_j\suptiny{0}{0}{2}\right)\suptiny{-1}{-1}{2}},\\
&=
\frac{\sum_{n<m}p_n\suptiny{0}{1}{4}p_m\suptiny{0}{1}{4}}
{\left(\sum_{i<j}p_i\suptiny{0}{0}{2}p_j\suptiny{0}{0}{2}\right)\suptiny{-1}{-2}{2}}.
\end{align}
Taking the reciprocal yields Eq.~\eqref{eq:deffspec_exact_general_clean}. The alternative form follows from the identities
\begin{equation}
\sum_{n<m} p_n\suptiny{0}{0}{2}p_m\suptiny{0}{0}{2}
=
\frac12\left[
\left(\sum_n p_n\suptiny{0}{0}{2}\right)\suptiny{-2}{-2}{2}-\sum_n p_n\suptiny{0}{1}{4}
\right],
\end{equation}
and
\begin{equation}
\sum_{n<m} p_n\suptiny{0}{1}{4}p_m\suptiny{0}{1}{4}
=
\frac12\left[
\left(\sum_n p_n\suptiny{0}{1}{4}\right)\suptiny{-2}{-2}{2}-\sum_n p_n\suptiny{0}{1}{8}
\right].
\end{equation}
Substituting these into Eq.~\eqref{eq:deffspec_exact_general_clean}, and using $\sum_n p_n\suptiny{0}{0}{2}=1/\deff$, yields Eq.~\eqref{eq:deffspec_exact_general_moments}.
\end{proof}
Unlike $\deffdyn$, the power spectral effective dimension $\deffspec$ is not determined by $\deff$ alone; it also depends on higher moments of the energy distribution, reflecting the finer structure of the spectral power.

\section{Numerical methods and spectral descriptors}
\label{appsec:numerical_exploration_spectral_descriptors}

In this Appendix, we describe the numerical procedures used to generate the data presented in the main text and to compute the spectral descriptors associated with the leakage fidelity function. The purpose is twofold. First, we specify the finite-dimensional spin model, the exact-diagonalization protocol, the initial states, and the construction of the time-dependent LFF signal. Second, we explain how the quantities $\deff$, $\deffdyn$, $\deffspec$, and $\Hpow$ are computed from either the Hamiltonian spectrum or from controlled families of energy-population distributions.

The central point of the numerical analysis is that the ordinary effective dimension $\deff$ does not exhaust the information relevant for equilibration. While $\deff$ fixes the LFF equilibration value, the quantities $\deffdyn$ and $\deffspec$ resolve two different aspects of the fluctuation signal: the effective number of dynamical cosine modes and the distribution of spectral power over Bohr frequencies, respectively. This distinction is useful because states with similar values of $\deff$ may still exhibit very different fluctuation amplitudes and different spectral textures.

\subsection{Spin-chain Hamiltonian}
\label{secapp:ising_model}

The numerical simulations in the main text are performed for a finite spin-$1/2$ Ising-like chain with both transverse and longitudinal fields. The Hamiltonian is
\begin{equation}
\label{eq:H_non_integrable_app}
H =
g \sum_{i=1}^{N} \sigma_i^x
+
h \sum_{i=2}^{N-1} \sigma_i^z
+
J \sum_{i=1}^{N-1} \sigma_i^z \sigma_{i+1}^z
+
(h-J)\left(\sigma_1^z+\sigma_N^z\right),
\end{equation}
where $\sigma_i^\alpha$ denotes the Pauli operator acting on site $i$ in direction $\alpha=x,y,z$. The first term is a transverse field, the second term is a longitudinal field acting on the bulk, the third term is the nearest-neighbor Ising interaction, and the final term introduces a boundary-field correction. Throughout the numerical examples we use
\begin{equation}
g=\frac{5+\sqrt{5}}{8},
\qquad
h=\frac{1+\sqrt{5}}{4},
\qquad
J=1.
\end{equation}
These parameters are chosen to avoid the fine-tuned symmetries and degeneracies that occur in simpler transverse-field Ising limits. The resulting model is non-integrable for the finite chains considered here and provides a convenient setting in which to study dephasing, spectral spreading, and LFF equilibration.

The Hamiltonian is diagonalized exactly,
\begin{equation}
H=\sum_{n=0}^{d_E-1} E_n \ketbra{E_n}{E_n},
\end{equation}
where $d_E=2^N$ for the full spin Hilbert space. In the simulations shown in the main text, we use $N=10$, hence $d_E=1024$. Exact diagonalization gives access to the full set of eigenvalues $E_n$ and eigenvectors $\ket{E_n}$, allowing the LFF signal to be computed without Trotter or time-discretization errors in the unitary evolution.

\subsection{Population Distributions Family}
\label{secapp:synthetic_families}

To separate the roles of $\deff$, $\deffdyn$, and $\deffspec$ in a controlled way, we also consider synthetic energy-population distributions. These distributions are not meant to represent the exact spectral occupation of a particular many-body Hamiltonian. Instead, they provide transparent test cases showing that the descriptors introduced in the main text are mathematically independent and probe distinct layers of equilibration. The main family used in the numerical exploration is the one-peak plus flat-tail distribution
\begin{equation}
\label{eq:peak_tail_family_app}
p^{(a,M)}
=
\left(
a,
\underbrace{\frac{1-a}{M},\ldots,\frac{1-a}{M}}_{M\ \mathrm{terms}}
\right),
\qquad
0<a<1,
\qquad
M\in\mathbb{N}.
\end{equation}
One energy level carries population $a$, while the remaining probability weight is uniformly distributed over $M$ additional levels. This family interpolates between a localized distribution, dominated by a single component, and a broadly delocalized dilute tail. For this family, the ordinary effective dimension is
\begin{equation}
\label{eq:deff_peak_tail_app}
\deff(a,M)
=
\frac{1}{a^2+\frac{(1-a)^2}{M}},
\end{equation}
and therefore,
\begin{equation}
\label{eq:deff_peak_tail_limit_app}
\deff(a,M)
\xrightarrow[M\to\infty]{}
\frac{1}{a^2}.
\end{equation}
Thus, for fixed $a$, the ordinary effective dimension saturates as the tail size $M$ increases. This is important because it shows that $\deff$ becomes insensitive to further spreading of very small populations once the dominant component fixes the leading contribution to $\sum_n p_n^2$.

By contrast, the number of pairwise dynamical channels continues to grow with $M$. The pair set contains peak--tail pairs, whose number scales as $M$, and tail--tail pairs, whose number scales as $M(M-1)/2$. Consequently, $\deffdyn$ and $\deffspec$ may continue to grow even when $\deff$ is already close to its saturation value. This makes the family in Eq.~\eqref{eq:peak_tail_family_app} a simple counterexample of the idea that the effective dimension alone fully characterizes equilibration-relevant spectral structure. In the numerical scan, we fix
\begin{equation}
a\in\{0.08,0.10,0.12,0.14,0.18,0.22\},
\end{equation}
and vary the tail size $M$. For each pair $(a,M)$ we compute $\deff$, $\deffdyn$, $\deffspec$ and $\Hpow$.

\begin{figure*}[t]
\centering
\begin{subfigure}{0.49\textwidth}
    \centering
    \includegraphics[width=\linewidth]{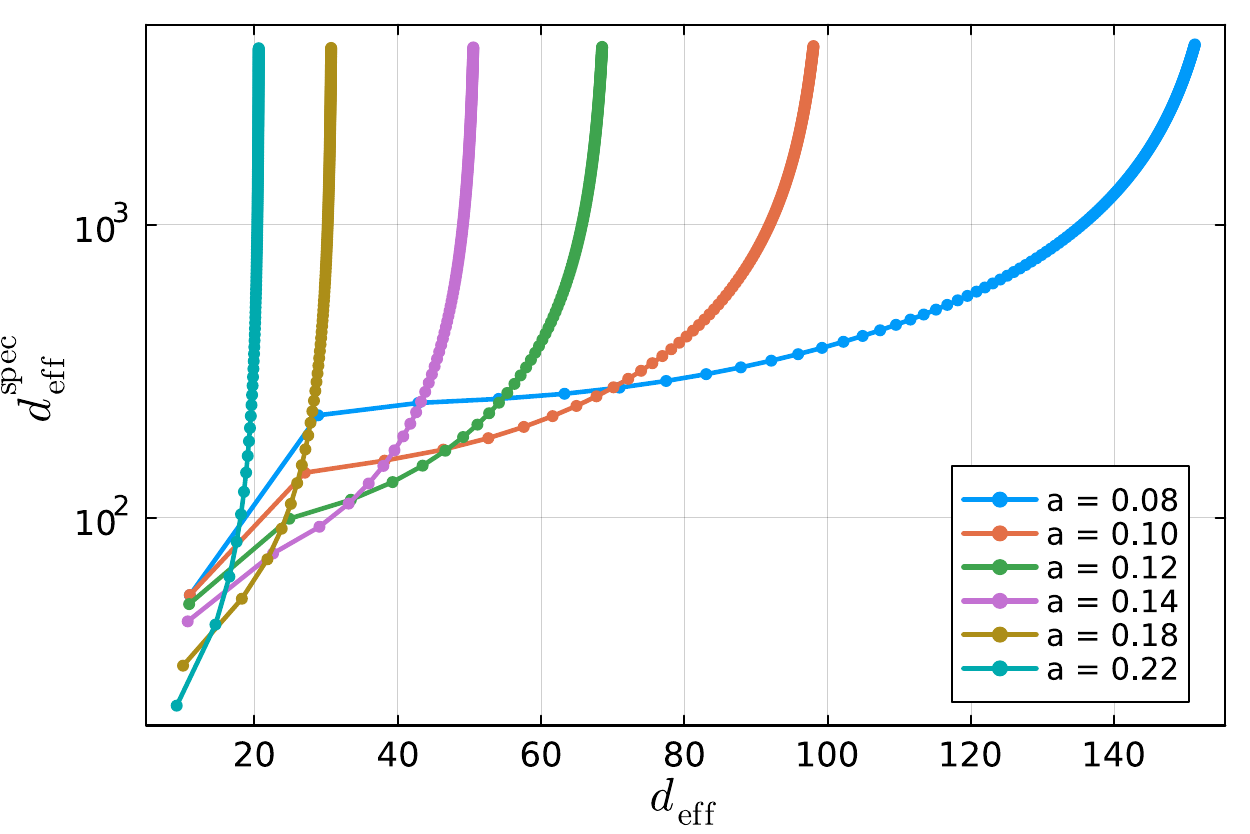}
    \caption{$\deffspec$ as a function of $\deff$.}
    \label{fig:deff_specvsdeff_app}
\end{subfigure}
\hfill
\begin{subfigure}{0.49\textwidth}
    \centering
    \includegraphics[width=\linewidth]{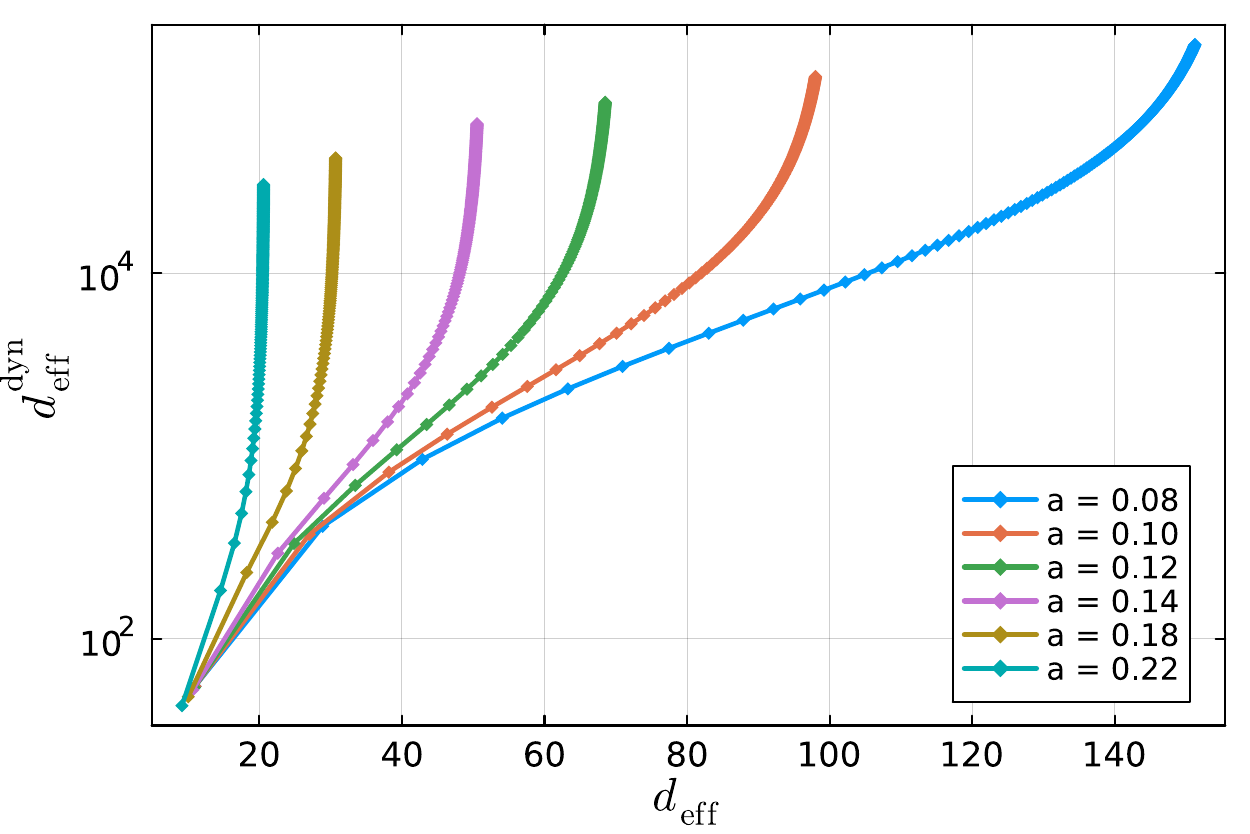}
    \caption{$\deffdyn$ as a function of $\deff$.}
    \label{fig:deff_dynvsdeff_app}
\end{subfigure}
\caption{\justifying
Scan of the one-peak plus flat-tail family in Eq.~\eqref{eq:peak_tail_family_app}. Each curve corresponds to a fixed value of $a\in\{0.08,0.10,0.12,0.14,0.18,0.22\}$ while the tail size $M$ is varied. The ordinary effective dimension $\deff$ saturates for fixed $a$, whereas $\deffdyn$ and $\deffspec$ continue to grow as additional weakly populated levels create new pairwise interference channels. This demonstrates that,  states with nearly identical LFF equilibration valeu can nevertheless possess distinct fluctuation stability and distinct spectral-power organization.}
\label{fig:deffsvsdeff_app}
\end{figure*}

Figure~\ref{fig:deffsvsdeff_app} makes the separation among the descriptors explicit. The left panel shows that $\deffspec$ may grow even when $\deff$ is already close to saturation, because the spectral-power distribution keeps acquiring additional pairwise frequency channels. The right panel shows the analogous behavior for $\deffdyn$, which counts the effective number of amplitude-weighted cosine modes contributing to the LFF fluctuation signal. Therefore, the scan provides a static population-level validation of the hierarchy proposed in the main text:
$\deff$ controls the LFF equilibration value, $\deffdyn$ controls the size of long-time fluctuations, and $\deffspec$ and $\Hpow$ control the spectral organization of the fluctuation power.

\subsection{Controlled numerical examples}
\label{secapp:controlled_examples}

The synthetic scan discussed above is static: it uses only the population distribution $\{p_n\}$ and does not require reconstructing the LFF time signal. We now complement it with two explicit dynamical examples. These examples are designed to illustrate, directly at the level of $\delta\!\leakz(t)$, how $\deff$, $\deffdyn$, and $\deffspec$ control different features of the LFF dynamics.

In Example~1, two states are chosen with nearly identical effective dimensions $\deff$ but distinct dynamical effective dimensions $\deffdyn$. This isolates the role of $\deffdyn$ in controlling the magnitude of temporal fluctuations. In Example~2, two states are chosen with comparable $\deff$ and $\deffdyn$ but different $\deffspec$. This illustrates that $\deffspec$ captures information about the signal's spectral organization that is not contained in either the LFF equilibration value or the total fluctuation scale.

\subsubsection*{Example 1: similar $\deff$ but distinct $\deffdyn$}
\label{app:example1}

To illustrate the independent role of the dynamical effective dimension, we consider two initial states, denoted by $A$ and $B$, constructed so that their ordinary effective dimensions are nearly identical while their pair-amplitude distributions differ substantially. For these states, we obtain
\begin{align}
\deff(A)&\approx 2.606,
&
\deffdyn(A)&\approx 2.330,
&
\leakz^\infty(A)&\approx 0.616,
\\
\deff(B)&\approx 2.700,
&
\deffdyn(B)&\approx 16.716,
&
\leakz^\infty(B)&\approx 0.630.
\end{align}
Thus, the two states have almost the same LFF equilibration valeu, since
\begin{equation}
\leakz^\infty = 1-\frac{1}{\deff}.
\end{equation}
However, their dynamical effective dimensions differ by more than a factor of seven. The typical fluctuation scale behaves as
\begin{equation}
\sqrt{\avginfty{|\delta\!\leakz(t)|^2}}
=
\frac{1-\frac{1}{\deff}}{\sqrt{2\deffdyn}}.
\end{equation}
Therefore, the ratio of typical fluctuation amplitudes is approximately
\begin{equation}
\frac{
\sqrt{\avginfty{|\delta\!\leak(A)|\suptiny{0}{0}{2}}}
}{
\sqrt{\avginfty{|\delta\!\leak(B)|\suptiny{0}{0}{2}}}
}
\approx
\sqrt{7.17}
\approx
2.68.
\end{equation}
This predicts that state $A$ should display substantially larger oscillations than state $B$, despite the two states having nearly the same value of $\deff$.

Figure~\ref{fig:lff_combined} confirms the prediction. The two states exhibit similar LFF equilibration valeu, but their fluctuation amplitudes differ markedly. This behavior is precisely what $\deffdyn$ is designed to capture. A larger $\deffdyn$ means that the fluctuation signal is distributed over a larger effective number of cosine channels, producing stronger destructive interference and smaller long-time oscillations. Conversely, a smaller $\deffdyn$ indicates that only a few channels dominate the signal, leading to more pronounced quasi-periodic modulation.

This example shows that $\deff$ and $\deffdyn$ are not redundant. The ordinary effective dimension correctly predicts the LFF equilibration valeu, but it does not determine the signal's stability around that value. That additional dynamical information is encoded in $\deffdyn$.

\subsubsection*{Example 2: comparable $\deff$ and $\deffdyn$ but distinct $\deffspec$}
\label{app:example2}

The second example probes the role of the spectral effective dimension. We compare two states, again denoted by $A$ and $B$, whose ordinary and dynamical effective dimensions are of comparable order, but whose spectral effective dimensions are substantially different. The numerical values are
\begin{align}
\deff(A)&\approx 12.687,
&
\deffdyn(A)&\approx 110.960,
&
\deffspec(A)&\approx 33.410,
&
\leakz^\infty(A)&\approx 0.921,
\\
\deff(B)&\approx 20.525,
&
\deffdyn(B)&\approx 201.712,
&
\deffspec(B)&\approx 168.237,
&
\leakz^\infty(B)&\approx 0.951.
\end{align}
The values of $\deff$ and $\deffdyn$ imply comparable equilibrium levels and comparable fluctuation scales. However, the value of $\deffspec$ differs by roughly a factor of five, indicating that the spectral power of state $B$ is distributed over a much larger effective number of Bohr-frequency channels.

Figure~\ref{fig:example2} illustrates the complementary role of $\deffspec$. Since the fluctuation variance is primarily controlled by $\deffdyn$, the two signals exhibit comparable averaged fluctuation scales. Nevertheless, their spectral effective dimensions differ substantially, meaning that the same overall fluctuation magnitude may be organized in frequency space in different ways. A smaller $\deffspec$ corresponds to a spectrum dominated by fewer effective frequency channels, while a larger $\deffspec$ indicates a broader distribution of spectral power over the Bohr-frequency network. Thus, Example~2 shows that $\deffspec$ should not be interpreted as another fluctuation-amplitude descriptor. Instead, it measures the frequency-domain organization of LFF time signal. Two states may have similar $\deff$ and $\deffdyn$, and therefore similar LFF equilibration valeu and fluctuation magnitudes, while still differing in their spectral complexity. This is the role played by $\deffspec$ and $\Hpow$ in the hierarchy of descriptors.

\subsection{Summary of the numerical message}
\label{secapp:numerical_message}

The numerical analysis in this Appendix supports the following interpretation. The ordinary effective dimension $\deff$ is a state-space descriptor: it measures the delocalization of the initial state in the Hamiltonian eigenbasis and fixes the LFF equilibration valeu,
\begin{equation}
\leakz^\infty = 1-\frac{1}{\deff}.
\end{equation}
The dynamical effective dimension $\deffdyn$ is a time-domain descriptor: it measures the effective number of cosine modes contributing to the LFF fluctuation signal and controls the long-time fluctuation scale through
\begin{equation}
\avginfty{|\delta\!\leakz(t)|^2}
=
\frac{\left(1-\frac{1}{\deff}\right)^2}{2\deffdyn}.
\end{equation}
Finally, the spectral effective dimension $\deffspec$ and the Shannon power entropy $\Hpow$ are frequency-domain descriptors that quantify how fluctuation power is distributed across Bohr-frequency channels. The one-peak plus flat-tail scan shows that $\deff$, $\deffdyn$, and $\deffspec$ need not induce the same ordering over states. Example~1 demonstrates dynamically that two states with similar $\deff$ can have very different fluctuation amplitudes because their $\deffdyn$ values differ. Example~2 demonstrates that two states with comparable $\deff$ and $\deffdyn$ can still differ in spectral organization, as captured by $\deffspec$. Taken together, these results provide the numerical foundation for the paper's main claim: equilibration of the leakage fidelity function is governed by a hierarchy of state, dynamical, and spectral descriptors rather than by the ordinary effective dimension alone.

\end{document}